\newtheorem{theorem}{Theorem}[section]
\newtheorem{lemma}[theorem]{Lemma}
\newtheorem{corollary}[theorem]{Corollary}
\newtheorem{definition}[theorem]{Definition}
\newtheorem{proposition}[theorem]{Proposition}
\newtheorem{fact}[theorem]{Observation}
\newtheorem{remark}[theorem]{Remark}
\newtheorem{claim}{Claim}[theorem]
\newtheorem{problem}{Problem}[section]
\newcommand{\PROBLEM}[1]{{\sc #1}}
\newenvironment{claim-proof}%
    {\begin{description}[leftmargin = 0.2cm, labelsep = 0.2cm]}
    {\end{description}}
\def\moverlay{\mathpalette\mov@rlay}
\def\mov@rlay#1#2{\leavevmode\vtop{%
   \baselineskip\z@skip \lineskiplimit-\maxdimen
   \ialign{\hfil$\m@th#1##$\hfil\cr#2\crcr}}}
\newcommand{\charfusion}[3][\mathord]{
    #1{\ifx#1\mathop\vphantom{#2}\fi
        \mathpalette\mov@rlay{#2\cr#3}
      }
    \ifx#1\mathop\expandafter\displaylimits\fi}
\DeclareMathOperator{\lca}{lca}
\DeclareMathOperator{\symdiff}{\triangle}
\newcommand{\cupdot}{\charfusion[\mathbin]{\cup}{\cdot}}
\newcommand{\hourglass}{\mathrel{\text{\ooalign{$\searrow$\cr$\nearrow$}}}}
\definecolor{light-gray}{gray}{0.6}
\definecolor{violet}{RGB}{160,0,160}
\definecolor{olive}{RGB}{128,128,0}
\newcommand{\undirected}{\widetilde}
\newcommand{\Black}{\mathrm{black}}
\newcommand{\White}{\mathrm{white}}
\newcommand{\AX}[1]{\textnormal{#1}}
\newcommand{\G}{G}
\newcommand{\SPEC}{\newmoon}
\newcommand{\DUPL}{\square}
\DeclareMathOperator{\Aho}{Aho}
\newcommand{\child}{\mathsf{child}}
\newcommand{\rthin}{\mathrel{\mathrel{\ooalign{\hss\raisebox{-0.17ex}{$\sim$}\hss\cr\hss\raisebox{0.720ex}{\scalebox{0.75}{$\bullet$}}\hss}}}}
\providecommand{\keywords}[1]{\textbf{\textit{Keywords: }} #1}
\title{Complexity of modification problems  for best match graphs}
\author[1,2]{David Schaller}
\author[1-5]{Peter F.\ Stadler}
\author[6,7]{Marc Hellmuth}
\affil[1]{Max Planck Institute for Mathematics in the Sciences,
  Inselstra{\ss}e 22, D-04103 Leipzig, Germany}
\affil[2]{Bioinformatics Group, Department of Computer Science \&
  Interdisciplinary Center for Bioinformatics, Universit{\"a}t Leipzig,
  H{\"a}rtelstra{\ss}e~16--18, D-04107 Leipzig, Germany.}
\affil[3]{Institute for Theoretical Chemistry, University of Vienna,
  W{\"a}hringerstrasse 17, A-1090 Wien, Austria}
\affil[4]{Facultad de Ciencias, Universidad National de Colombia, Sede
  Bogot{\'a}, Colombia}
\affil[5]{Santa Fe Insitute, 1399 Hyde Park Rd., Santa Fe NM 87501,
  USA}
\affil[6]{School of Computing, University of Leeds, EC Stoner
  Building, Leeds LS2 9JT, UK}
\affil[7]{Department of Mathematics, Faculty of Science,
  Stockholm University, SE - 106 91 Stockholm,   Sweden \newline 
  \texttt{mhellmuth@mailbox.org}}
\date{\ }
\begin{document}

\maketitle

\abstract{ 
  Best match graphs (BMGs) are vertex-colored directed graphs that were
  introduced to model the relationships of genes (vertices) from different
  species (colors) given an underlying evolutionary tree that is assumed to
  be unknown. In real-life applications, BMGs are estimated from sequence
  similarity data. Measurement noise and approximation errors usually
  result in empirically determined graphs that in general violate
  characteristic properties of BMGs. The arc modification problems for BMGs 
	aim at correcting such violations and thus provide a means to 
  improve the initial estimates of best match data. We show here that
  the arc deletion, arc completion and arc editing problems for BMGs are
  NP-complete and that they can be formulated and solved as integer linear
  programs.  To this end, we provide a novel characterization of BMGs
  in terms of triples (binary trees on three leaves) and a
  characterization of BMGs with two colors in terms of forbidden subgraphs.
}

\bigskip
\noindent
\keywords{
  Best matches,
  Graph modification,
  NP-hardness,
  Integer linear program,
  Forbidden subgraphs,
  Rooted triples}

\sloppy

\section{Introduction}

Best match graphs (BMGs) appear in mathematical biology as formal
description of the evolutionary relationships within a gene family. Each
vertex $x$ represents a gene and is ``colored'' by the species $\sigma(x)$
in which it resides. A directed arc connects a gene $x$ with its closest
relatives in each of the other species \cite{Geiss:2019a}. The
underlying graph of a BMG that contains only bi-directional arcs, that is,
those arcs $(x,y)$ for which there is also an arc $(y,x)$, is known as
reciprocal best match graph (RBMG). The precise definition of BMGs will be
given in Section \ref{sect:prelim}. Empirically, best matches are
routinely estimated by measuring and comparing the similarity of gene
sequences. Measurement errors and systematic biases, however, introduce
discrepancies between ``most similar genes'' extracted from data and the
notion of best matches in the sense of closest evolutionary relatedness
\cite{Geiss:2019a,Geiss:2020c}. While some systematic effects can be
corrected directly \cite{Stadler:2020}, a residual level of error is
unavoidable. It is therefore a question of considerable practical interest
in computational biology whether the mathematical properties characterizing
BMGs can be used to correct empirical estimates. Formally, this question
amounts to a graph editing problem: Given a vertex-labeled directed graph
$(\G,\sigma)$, what is the minimal number of arcs that need to be inserted
or deleted to convert $(\G,\sigma)$ into a BMG $(\G^*,\sigma)$?

\begin{figure}[t]
  \begin{center}
    \includegraphics[width=0.85\linewidth]{./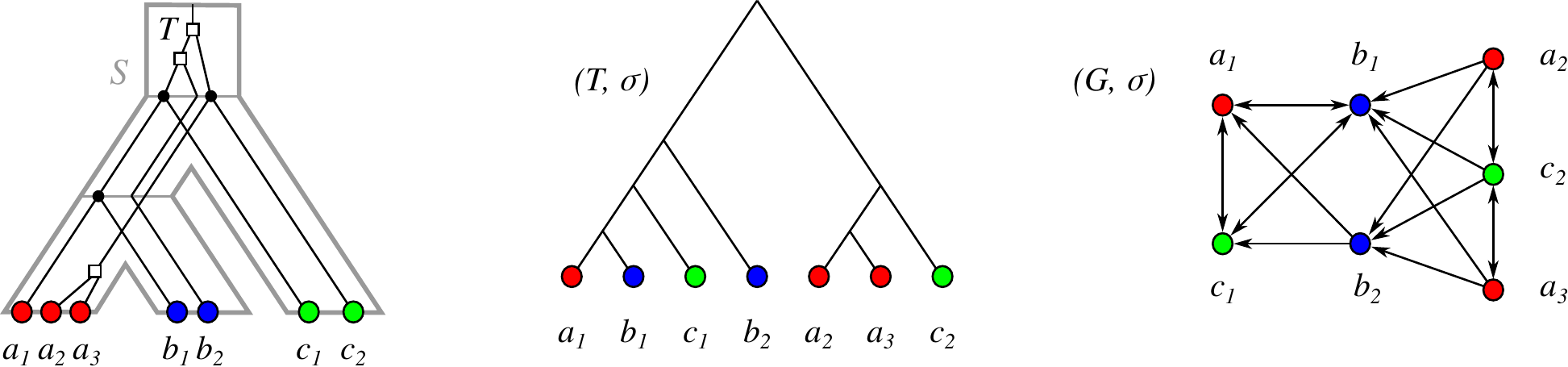}
  \end{center}
  \caption[]{An evolutionary scenario (left) consists of a gene tree
    $(T,\sigma)$ embedded into a species phylogeny $S$. The coloring
    $\sigma$ represents the species, i.e., genomes (leaves of the species
    tree) in which the genes (leaves of the gene tree) reside.  Inner nodes of
    the gene tree correspond to gene duplications ($\DUPL$) or speciation
    events ($\SPEC$), the latter coinciding with the inner nodes of the
    species tree $S$.  The BMG $(\G,\sigma)$ (right) depends on the
    topology of the gene tree (middle). A gene $y$ is a best match of $x$
    ($x\rightarrow y$ in the BMG) if there is no gene $y'$ of the same
    color that has a lower last common ancestor with $x$ than $y$. For
    instance, $b_1$ but not $b_2$ is a best match for $a_1$ in the blue
    species.}
  \label{fig:bmg_example}
\end{figure}

Best matches are, in particular, closely linked to the identification of
orthologous genes, i.e., pairs of genes whose last common ancestor
coincides with the divergence of two species \cite{Fitch:70}. Orthologous
genes from different species are expected to have essentially the same
biological functions. Thus, considerable efforts have been expended to
devise methods for orthology assessment, see e.g.\
\cite{Sonnhammer:14,Altenhoff:16,Setubal:18a} for reviews and
applications. The orthology graph of a gene family (with the genes as
vertices and undirected edges between orthologous genes) can be shown to be
a subgraph of the reciprocal best match graph (RBMG), i.e., the symmetric
part of the BMG \cite{Geiss:2020c}. This has sparked interest in a
characterization of RBMGs \cite{Geiss:2020b} and the corresponding graph
editing problems \cite{Hellmuth:2020a}. The deletion and the editing
problems of 2-colored RBMGs are equivalent to \PROBLEM{Bicluster Deletion}
and \PROBLEM{Bicluster Editing}, respectively, a fact that was used to
demonstrate NP-hardness for the general, $\ell$-colored case. On the other
hand, orthology graphs are cographs \cite{Hellmuth:13a}. \PROBLEM{Cograph
  Editing} or \PROBLEM{Cograph Deletion} thus have been used to correct
empirical approximations of RBMGs to orthology graphs in
\cite{Hellmuth:15}.  Several related problems have been discussed in
  the literature, often aiming at using additional biological information
  as part of the graph modification process, cf.\ eg.\
  \cite{Natanzon:2001,lafond2013gene,lafond2014orthology,
    lafond2016link,dondi2017approximating}.  Both \PROBLEM{Cograph
  Editing} and \PROBLEM{Cograph Deletion} are NP-complete
\cite{Liu:2012}. In \cite{Schaller:2020}, we showed that knowledge of BMG
makes it possible to identify the edges of the RBMG that cannot be part of
the orthology graph and found that these edges in general do not form an
optimal solution of either \PROBLEM{Cograph Editing} or \PROBLEM{Cograph
  Deletion}. This observation suggests to correct the empirical similarity
data at the outset by editing them to the nearest BMGs instead of operating
on an empirical approximation of the RBMG. Given a BMG, the orthology graph
can then be computed in polynomial time \cite{Schaller:2020}.

We therefore analyze the arc modification problems for $\ell$-BMGs, that
is, BMGs on $\ell$ colors.  This contribution is organized as follows:
After introducing some notation and reviewing some important properties of
BMGs, Sec.~\ref{sec:triples} provides a characterization of BMGs in
  terms of triples (binary trees on three leaves) that can be derived from
  vertex colored graphs.  Moreover, we provide in
  Sec.~\ref{sect:forbidden} a characterization of 2-BMGs in terms of
forbidden subgraphs. We then prove in Sec.~\ref{sect:NPcomplete2} that
\PROBLEM{$2$-BMG Deletion} and \PROBLEM{$2$-BMG Editing} are NP-complete by
reduction from \PROBLEM{Exact 3-Cover}, and that \PROBLEM{$2$-BMG
Completion} is NP-complete by reduction from \PROBLEM{Chain Graph
Completion}. These results are used in Sec.~\ref{sect:NPcomplete-all} to
establish NP-completeness for any fixed number $\ell\ge2$ of colors.
Finally, we provide ILP solutions for the respective $\ell$-BMG
modification problems in Sec.~\ref{sect:ILP}.

\section{Preliminaries}
\label{sect:prelim}

\subsection{Notation}

In this contribution, we consider simple directed graphs (digraphs)
$\G=(V,E)$ with vertex set $V$ and arc set
$E\subseteq V\times V \setminus \{(v,v)\mid v\in V\}$.
We will also write $V(\G)$ and $E(\G)$ when referring to the vertex and 
arc set, respectively, of a specific graph $\G$.

For a vertex $x\in V$, we say that $(y,x)$ is an \emph{in-arc} and $(x,z)$ is
an \emph{out-arc}. The (weakly) connected components of $\G$ are the maximal
connected subgraphs of the undirected graph underlying $\G$. We call $x$ a
\emph{hub-vertex} of a graph $\G=(V,E)$ if $(x,v)\in E$ and $(v,x)\in E$
holds for all vertices $v\in V\setminus \{x\}$. The subgraph induced by a
subset $W\subseteq V$ is denoted by $\G[W]$. We write
$N(x):=\{z\in V\mid (x,z)\in E\}$ for the out-neighborhood and
$N^-(x):=\{z\in V\mid (z,x)\in E\}$ for the in-neighborhood of $x\in V$.  A
graph is \emph{sink-free} if it has no vertex with out-degree zero, i.e.,
if $N(x)\ne\emptyset$ for all $x\in V$.

We write $E\symdiff F \coloneqq (E\setminus F) \cup (F\setminus E)$ for the
symmetric difference of the sets $E$ and $F$. Moreover, for a graph
$\G=(V,E)$ and an arc set $F$, we define the graphs
$\G+F\coloneqq (V, E\cup F)$, $\G-F\coloneqq (V, E\setminus F)$ and
$\G\symdiff F\coloneqq (V, E\symdiff F)$.  A \emph{vertex coloring} of $\G$ 
is a
map $\sigma: V\to M$, where $M$ is the set of 
\emph{colors}.  A graph $\G$ endowed with a vertex coloring $\sigma$ will 
be denoted by $(\G,\sigma)$.  A vertex coloring is \emph{proper} if
$\sigma(x)\ne\sigma(y)$ for all $(x,y)\in E$.  To avoid trivial cases,
  we will always assume a coloring $\sigma$ to be surjective. To this end,
  we define the restriction of $\sigma$ to a subset $W\subseteq V$ of
vertices as $\sigma_{|W}\colon W\to \sigma(W)$ with
  $\sigma_{|W}(v)=\sigma(v)$ for all $v\in W$. The colored subgraph of
$\G$ induced by $W$ is therefore $(\G[W], \sigma_{|W})$.  We often write
$|M|$-coloring to emphasize the number of colors in $\G$. Moreover,
we define $\sigma(W):=\{\sigma(v)\mid v\in W\}$.
\begin{fact}
  Let $x$ be a hub-vertex in a properly colored graph $(\G,\sigma)$.  Then
  $x$ is the only vertex of color $\sigma(x)$ in $(\G,\sigma)$.
\end{fact}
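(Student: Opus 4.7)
The plan is to argue by contradiction in one line. Suppose, toward a contradiction, that there exists a vertex $y \in V \setminus \{x\}$ with $\sigma(y) = \sigma(x)$. By the hub-vertex property, the arc $(x,y)$ belongs to $E$. But then properness of $\sigma$ forces $\sigma(x) \neq \sigma(y)$, contradicting the assumption $\sigma(y) = \sigma(x)$.

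Hence no such $y$ exists, i.e., $x$ is the unique vertex of color $\sigma(x)$. There is no real obstacle here: the statement is an immediate consequence of unpacking the two definitions, and one could equally well invoke the arc $(y,x)$ instead of $(x,y)$ to reach the same contradiction. I would keep the proof to two or three sentences, since the argument uses nothing beyond the defining properties of \emph{hub-vertex} and \emph{proper coloring} just introduced.
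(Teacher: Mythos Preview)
Your argument is correct; the paper states this as an Observation without proof, and your two-line justification from the definitions of \emph{hub-vertex} and \emph{proper coloring} is exactly the intended one.
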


A phylogenetic tree $T$ (on $L$) is an (undirected) rooted tree with root
$\rho_T$, leaf set $L=L(T)\subseteq V(T)$ and inner vertices
$V^0(T) = V(T) \setminus L$ such that each inner vertex of T (except
possibly the root) is of degree at least three.  \emph{Throughout this
  contribution, we assume that every tree is phylogenetic.}

The \emph{ancestor order} on $V(T)$ is defined such that $u\preceq_T v$ if
$v$ lies on the unique path from $u$ to the root $\rho_T$, i.e., if $v$ is
an ancestor of $v$. We write $u \prec_T v$ if $u \preceq_{T} v$ and
$u \neq v$. If $xy$ is an edge in $T$, such that $y \prec_{T} x$, then $x$
is the \emph{parent} of $y$ and $y$ the \emph{child} of $x$. We denote by
$\child_T(x)$ the set of all children of $x$. The set $L(T(v))$
  contains of all leaves $x\preceq_T v$. For a non-empty subset
$A\subseteq V\cup E$ we define $\lca_T(A)$, the \emph{last common ancestor
  of $A$}, to be the unique $\preceq_T$-minimal vertex of $T$ that is an
ancestor of every $u\in A$.  For simplicity we write
$\lca_T(A)=\lca_T(x_1,\dots,x_k)$ whenever we specify a vertex set
$A=\{x_1,\dots,x_k\}$ explicitly.  Note that $\lca_T(x,y)$ and
$\lca_T(x,z)$ are comparable for all $x,y,z\in L$ w.r.t.\ $\preceq_T$.

A \emph{(rooted) triple} is a tree on three leaves and with two inner
vertices.  We write $xy|z$ for the triple on the leaves $x,y$ and $z$ if
the path from $x$ to $y$ does not intersect the path from $z$ to the root,
i.e., if $\lca_T(x,y)\prec_T \lca_T(x,z)=\lca_T(y,z)$. In this case we say
that $T$ displays $xy|z$. A set $\mathscr{R}$ of triples on $L$, i.e., a
set of triples $\mathscr{R}$ such that $\bigcup_{T\in\mathscr{R}} L(T)=L$,
is \emph{compatible} if there is a tree with leaf set $L$ that displays
every triple in $L$. If $\mathscr{R}$ is compatible, then such a tree, the
\emph{Aho tree} $\Aho(\mathscr{R})$ can be constructed in polynomial time
\cite{Aho:81}.  For a set $L$, a set of triples $\mathscr{R}$ is
\emph{strictly dense} if for all three distinct $x,y,z$ exactly one of the
triples $xy|z, xz|y$ and $yz|x$ is contained in $\mathscr{R}$.

In this contribution, we also consider trees that explicitly do not
display certain triples. More precisely, we will need
\begin{definition}
  Let $\mathscr{R}$ and $\mathscr{F}$ be sets of triples. The pair
  $(\mathscr{R},\mathscr{F})$ is called \emph{compatible} if there is a
  tree $T$ that displays all triples in $\mathscr{R}$ but none of the
  triples in $\mathscr{F}$. In the latter case, we also say that $T$
  \emph{agrees with} $(\mathscr{R},\mathscr{F})$.
\end{definition}
\noindent
The problem of deciding whether such a pair $(\mathscr{R},\mathscr{F})$ is
compatible and, in particular, of finding a corresponding phylogentic tree
was termed \emph{mixed triplets problem restricted to trees (MTT)} in
\cite{He:06}.  This work also reports a polynomial-time algorithm (also
called \texttt{MTT}), which is similar to the well-known \texttt{BUILD}
algorithm \cite{Aho:81}.
\begin{theorem}{\cite[Thm.~1]{He:06}}
  Algorithm \texttt{MTT} outputs a phylogenetic tree $T$ that agrees with the
  pair $(\mathscr{R},\mathscr{F})$ (defined on $n$ distinct leaves) if 
  and only if $(\mathscr{R},\mathscr{F})$
  is compatible in $O(|\mathscr{R}|\cdot n+|\mathscr{F}|\cdot n \log
  n + n^2 \log n)$ time.
\end{theorem}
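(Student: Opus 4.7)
My plan would be to adapt the BUILD algorithm \cite{Aho:81} so that the forbidden set $\mathscr{F}$ contributes additional constraints to the standard recursive partitioning. At each recursive call on the current leaf set $L'$, I would first build the Aho graph whose edges are $\{x,y\}$ for each required triple $xy|z\in\mathscr{R}$ with $\{x,y,z\}\subseteq L'$, as in BUILD. I would then close this graph under a \emph{merging rule}: whenever a triple $xy|z\in\mathscr{F}$ restricted to $L'$ has $x,y$ in one component $C$ and $z$ in a different component $C'$, merge $C$ with $C'$ (equivalently, add the edge $\{x,z\}$). After reaching a fixed point, if only one component remains, declare the instance infeasible; otherwise create an internal node with one child per component and recurse on each.

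For correctness I would argue by induction on $|L'|$. The soundness direction (algorithm returns a tree $\Rightarrow$ the tree agrees with $(\mathscr{R},\mathscr{F})$) is essentially the BUILD argument: every required triple either has its ``close'' pair in the same component and its third leaf in a different component, in which case it is displayed at the current root, or all three leaves lie in the same child subtree and the claim follows by induction. For the forbidden triples, the merging rule guarantees that no $xy|z\in\mathscr{F}$ has $x,y$ in one child and $z$ in another at this level, so every surviving forbidden constraint has all three leaves inside a single child subtree and is handled inductively.

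The main obstacle is the completeness direction: if $(\mathscr{R},\mathscr{F})$ is compatible, the merging must still terminate with at least two components. I would establish the invariant that for any tree $T$ that agrees with $(\mathscr{R},\mathscr{F})$, every connected component of the final Aho graph on $L'$ is contained in a single part of the root partition $\{V_1,\dots,V_m\}$ of $T$. Edges arising from $\mathscr{R}$ respect this invariant by the usual BUILD analysis, since $xy|z\in\mathscr{R}$ forces $x,y$ into the same $V_j$. A merge triggered by $xy|z\in\mathscr{F}$ also respects it: $x,y$ already lie in some $V_j$, and if $T$ agrees with $\mathscr{F}$ then necessarily $z\in V_j$ as well, for otherwise $T$ would display the forbidden $xy|z$ at its root. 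Hence the merged component stays within $V_j$, so at least $m\ge 2$ components survive and the recursion proceeds. An analogous inductive argument applied to the subtrees of $T$ completes the equivalence.

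For the running-time bound I would implement the components with a union--find structure and, for each component merge, maintain auxiliary lists that record which triples in $\mathscr{R}$ and $\mathscr{F}$ may become newly triggered. The $|\mathscr{R}|\cdot n$ term accounts for the required triples being re-examined over the at most $n$ recursion levels, the $|\mathscr{F}|\cdot n\log n$ term for the priority-based detection of merge triggers, and the $n^2\log n$ term for maintaining the partition structure across levels; I would follow the detailed accounting of \cite{He:06} for the sharp constants. The most delicate bookkeeping lies in ensuring that forbidden triples are not scanned redundantly after merges, which is where the extra $\log n$ factor compared to BUILD naturally appears.
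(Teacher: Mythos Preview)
The paper does not prove this theorem at all: it is quoted verbatim as \cite[Thm.~1]{He:06} and used as a black box. There is therefore no ``paper's own proof'' to compare against; the only relevant comparison is with the original source \cite{He:06}. Your sketch does track the actual \texttt{MTT} algorithm of \cite{He:06} fairly closely---BUILD's Aho graph augmented by a closure rule that merges components whenever a forbidden triple would otherwise be displayed at the current root---and your invariant for completeness (every component of the closed graph stays inside a single child of the root of any witnessing tree) is the right one. So as a reconstruction of the cited result the outline is sound, but for the purposes of \emph{this} paper no proof is expected or required; a citation suffices.
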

A tree $T$ with leaf set $L$ together with function
$\sigma:L\to M$ is a \emph{leaf-colored tree}, denoted by
$(T,\sigma)$.

\subsection{Best match graphs}

\begin{definition}
  Let $(T,\sigma)$ be a leaf-colored tree. A leaf $y\in L(T)$ is a
  \emph{best match} of the leaf $x\in L(T)$ if $\sigma(x)\neq\sigma(y)$ and
  $\lca(x,y)\preceq_T \lca(x,y')$ holds for all leaves $y'$ of color
  $\sigma(y')=\sigma(y)$.
\end{definition}
The graph $\G(T,\sigma) = (V,E)$ with vertex set $V=L(T)$, vertex coloring
$\sigma$, and with arcs $(x,y)\in E$ if and only if $y$ is a best match of
$x$ w.r.t.\ $(T,\sigma)$ is known as the (colored) \emph{best match graph}
(BMG) of $(T,\sigma)$ \cite{Geiss:2019a}. We call an $\ell$-colored BMG
simply $\ell$-BMG.  Since the last common ancestors of any two vertices of
$T$ always exists, and $\lca_T(x,y)$ and $\lca_T(x,z)$ are comparable,
there is by definition at least one best match of $x$ for every color
$s\in\sigma(V)\setminus\{\sigma(x)\}$:
\begin{fact}
  For every vertex $x$ and every color
  $s\in\sigma(V(G))\setminus\{\sigma(x)\}$ in a BMG $(\G,\sigma)$,
  there is some vertex $y\in N(x)$ with $\sigma(y)=s$.
  \label{fact:allcolors-out}
\end{fact}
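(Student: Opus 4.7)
The plan is to unpack the definition of a best match graph and use the comparability property of last common ancestors that is recorded just above the Observation. Since $(\G,\sigma)$ is a BMG by hypothesis, there exists a leaf-colored tree $(T,\sigma)$ with $\G = \G(T,\sigma)$ and $L(T) = V(\G)$. Fix a vertex $x \in V(\G)$ and a color $s \in \sigma(V(\G)) \setminus \{\sigma(x)\}$. Because colorings are assumed surjective, the set
\[
Y_s \coloneqq \{\,y' \in L(T) \;:\; \sigma(y') = s\,\}
\]
is non-empty.

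Next I would argue that $Y_s$ contains a distinguished element realising a minimal last common ancestor with $x$. The preliminaries state that $\lca_T(x,y')$ and $\lca_T(x,y'')$ are comparable with respect to $\preceq_T$ for any two leaves $y',y'' \in L(T)$. Hence the finite set
\[
A_s \coloneqq \{\,\lca_T(x,y') \;:\; y' \in Y_s\,\}
\]
is totally ordered by $\preceq_T$, and therefore has a $\preceq_T$-minimum. Pick $y \in Y_s$ with $\lca_T(x,y)$ equal to this minimum.

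By construction, $\sigma(y) = s \neq \sigma(x)$ and $\lca_T(x,y) \preceq_T \lca_T(x,y')$ for every $y' \in L(T)$ with $\sigma(y') = \sigma(y)$. This is exactly the defining condition for $y$ to be a best match of $x$ with respect to $(T,\sigma)$, so $(x,y) \in E(\G)$, which means $y \in N(x)$ with $\sigma(y) = s$, as required.

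The main (very mild) obstacle is not a hidden difficulty but rather making sure to invoke the right ingredient: the comparability of any two lcas of the form $\lca_T(x,\cdot)$. Without this, minimising over a finite set would only yield a $\preceq_T$-minimal element, not a $\preceq_T$-minimum, and we could not conclude that a single $y$ dominates all other $y' \in Y_s$ simultaneously.
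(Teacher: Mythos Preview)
Your proof is correct and is precisely the argument the paper has in mind: the paper justifies the Observation in the sentence immediately preceding it by noting that $\lca_T(x,y)$ and $\lca_T(x,z)$ are always comparable, so a best match of $x$ exists for every other color. You have simply written out this one-line justification in full detail, including the point that comparability turns $\preceq_T$-minimality into a genuine minimum.
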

Equivalently, the subgraph induced by every pair of colors is
sink-free. In particular, therefore, BMGs are sink-free whenever they
contain at least two colors. We note in passing that sink-free graphs also
appear naturally e.g.\ in the context of graph semigroups \cite{Abrams:10}
and graph orientation problems \cite{Cohn:02}.

\begin{definition}
  Let $(G,\sigma)$ be a colored graph. The coloring $\sigma$ is
  \emph{sink-free} if it is proper and, for every vertex $x$ and every
  color $s\in\sigma(V(G))\setminus\{\sigma(x)\}$, 
  there is a vertex $y\in N(x)$ with $\sigma(y)=s$.
  A graph with a sink-free coloring is \emph{sf-colored}.
\end{definition}

The definition of BMGs together with Obs.~\ref{fact:allcolors-out} implies
that BMGs are always sf-colored.

\begin{definition}\label{def:BestMatchGraph}
  An arbitrary vertex-colored graph $(\G,\sigma)$ is a \emph{best match
    graph (BMG)} if there exists a leaf-colored tree $(T,\sigma)$ such that
  $(\G,\sigma) = \G(T,\sigma)$. In this case, we say that $(T,\sigma)$
  \emph{explains} $(\G,\sigma)$.
\end{definition}
Whether two vertices $x$ and $y$ are best matches or not does not depend on
the presence or absence of vertices $z$ with
$\sigma(z)\notin\{\sigma(x),\sigma(y)\}$. More precisely, we have
\begin{fact}{\cite[Obs.~1]{Geiss:2019a}}
  \label{obs:color-restriction}
  Let $(\G,\sigma)$ be a BMG explained by $(T,\sigma)$ with leaf set $L$
  and let $L'\coloneqq \bigcup_{s\in M'} L[s]$ be a subset
  of vertices with a restricted color set
  $M'\subseteq \sigma(L)$.  Then the induced subgraph
  $(\G[L'],\sigma_{|L'})$ is explained by the restriction $T_{|L'}$ of $T$
  to the leaf set $L'$,
  i.e. $(\G[L'],\sigma_{|L'})=\G(T_{|L'},\sigma_{|L'})$.
\end{fact}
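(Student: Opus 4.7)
The plan is to prove both set-theoretic inclusions $E(\G[L']) \subseteq E(\G(T_{|L'},\sigma_{|L'}))$ and $E(\G(T_{|L'},\sigma_{|L'})) \subseteq E(\G[L'])$ by unfolding the definition of best match. The crux is a structural lemma about how last common ancestors behave under restriction of the leaf set, which I would state and prove first.

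First I would make the construction of $T_{|L'}$ precise: take the minimal subtree of $T$ spanning $L'$ (i.e.\ the union of all $\rho_T$-to-$\ell$ paths for $\ell\in L'$) and then suppress all resulting vertices of degree two (except the root). Write $\phi\colon V(T_{|L'})\hookrightarrow V(T)$ for the natural embedding that sends a vertex of $T_{|L'}$ to the corresponding vertex of $T$ before suppression. The key structural lemma that I would prove is: for all $x,y\in L'$,
\begin{equation*}
  \phi(\lca_{T_{|L'}}(x,y)) \;=\; \lca_T(x,y),
\end{equation*}
and consequently, for any $x,y,y'\in L'$,
\begin{equation*}
  \lca_{T_{|L'}}(x,y)\preceq_{T_{|L'}} \lca_{T_{|L'}}(x,y')
  \quad\Longleftrightarrow\quad
  \lca_T(x,y)\preceq_T \lca_T(x,y').
\end{equation*}
The first statement follows because the path from $x$ to $y$ in $T_{|L'}$ is exactly the path from $x$ to $y$ in $T$ with the degree-two vertices suppressed, so its highest vertex coincides with the highest vertex of the original path. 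The second statement is immediate from the first, since suppressing degree-two vertices preserves the ancestor relation among the surviving vertices.

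With this lemma in hand, the equality of edge sets is essentially a one-line unfolding. By definition of $L'=\bigcup_{s\in M'}L[s]$, every color class of $\sigma_{|L'}$ agrees with the corresponding color class of $\sigma$ restricted to $L'$. Hence for $x,y\in L'$ with $\sigma(x)\ne\sigma(y)$,
\begin{equation*}
  (x,y)\in E(\G[L'])
  \;\Longleftrightarrow\;
  \lca_T(x,y)\preceq_T \lca_T(x,y')\text{ for all }y'\in L[\sigma(y)],
\end{equation*}
and because $L[\sigma(y)]\subseteq L'$ (as $\sigma(y)\in M'$), this is equivalent to the condition quantified only over $y'\in L'$ with $\sigma_{|L'}(y')=\sigma(y)$, which by the structural lemma is equivalent to $(x,y)\in E(\G(T_{|L'},\sigma_{|L'}))$. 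The vertex-color side is trivial since $V(\G[L'])=L'=L(T_{|L'})$ and the colorings $\sigma_{|L'}$ agree by definition.

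The main obstacle, and the only step that needs any care, is the structural lemma about $\phi$ and suppression: one must verify that no leaf $y'\in L[\sigma(y)]$ is lost when restricting (so the quantifier ``for all $y'$ of color $\sigma(y)$'' is unaffected), and that the ``$\preceq$''-comparisons among original LCAs are faithfully transported to $T_{|L'}$. Both are guaranteed by the hypothesis $L'=\bigcup_{s\in M'}L[s]$ together with the fact that topological suppression of degree-two inner vertices never merges two distinct LCAs and never changes which of two LCAs is the ancestor. Once this is nailed down, both inclusions follow without further case analysis.
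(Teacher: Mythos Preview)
Your argument is correct. Note, however, that the paper does not actually supply a proof of this statement: it is stated as an observation with a citation to \cite{Geiss:2019a} and is preceded only by the one-line remark that ``whether two vertices $x$ and $y$ are best matches or not does not depend on the presence or absence of vertices $z$ with $\sigma(z)\notin\{\sigma(x),\sigma(y)\}$.'' Your proof is a faithful and complete unpacking of precisely that intuition---the key step being that $L[\sigma(y)]\subseteq L'$ whenever $\sigma(y)\in M'$, so the quantifier in the best-match condition is unchanged upon restriction, together with the (standard) fact that $\lca$ comparisons are preserved by passing to $T_{|L'}$.
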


It was shown in \cite{Geiss:2019a} that BMGs can be characterized in terms
of certain induced subgraphs on three vertices. These can be specificed
as follows \cite{Schaller:2020}:
\begin{definition}\label{def:informative_triples}
  Let $(\G,\sigma)$ be a  vertex colored graph. We say that a
  triple $xy|y'$ is \emph{informative} for $(\G,\sigma)$ if $x$, $y$ and
  $y'$ are pairwise distinct vertices in $\G$ such that (i)
  $\sigma(x)\neq\sigma(y)=\sigma(y')$ and (ii) $(x,y)\in E(\G)$ and
  $(x,y')\notin E(\G)$. The set of informative triples is denoted by
  $\mathscr{R}(\G,\sigma)$.
\end{definition}

As shown in \cite{BMG-corrigendum}, BMGs can be characterized in
terms of informative triples. 
\begin{theorem}{\cite[Thm.~15]{BMG-corrigendum}}
  A colored digraph $(\G,\sigma)$ is an $n$-cBMG if and only if
  $\G(\Aho(\mathscr{R}(\G,\sigma)),\sigma) = (\G,\sigma)$. 
  \label{thm9_new}
\end{theorem}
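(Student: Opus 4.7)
The ``if'' direction is immediate: if $\G(\Aho(\mathscr{R}(\G,\sigma)),\sigma) = (\G,\sigma)$, then the tree $\Aho(\mathscr{R}(\G,\sigma))$ explains $(\G,\sigma)$, making $(\G,\sigma)$ a BMG by Definition~\ref{def:BestMatchGraph}. For the ``only if'' direction, assume $(\G,\sigma)$ is a BMG and fix an explaining tree $(T,\sigma)$. The first step is to check that $T$ displays every informative triple: given $xy|y' \in \mathscr{R}(\G,\sigma)$, the defining conditions $(x,y) \in E(\G)$, $(x,y') \notin E(\G)$ and $\sigma(y) = \sigma(y') \ne \sigma(x)$ force $y$ to be a best match of $x$ in $(T,\sigma)$ while $y'$ is not, so $\lca_T(x,y) \prec_T \lca_T(x,y')$, which is precisely the statement that $T$ displays $xy|y'$. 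Hence $\mathscr{R}(\G,\sigma)$ is compatible and $T^* := \Aho(\mathscr{R}(\G,\sigma))$ is a well-defined phylogenetic tree on $V(\G)$ that displays every triple in $\mathscr{R}(\G,\sigma)$.

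It remains to verify $\G(T^*,\sigma) = (\G,\sigma)$ arc by arc. For $E(\G(T^*,\sigma)) \subseteq E(\G)$: if $(x,y) \in E(\G(T^*,\sigma)) \setminus E(\G)$, the sf-coloring of $(\G,\sigma)$ (cf.\ Obs.~\ref{fact:allcolors-out}) supplies some $y^* \in N_\G(x)$ with $\sigma(y^*) = \sigma(y)$; then $xy^*|y \in \mathscr{R}(\G,\sigma)$ is displayed by $T^*$, so $\lca_{T^*}(x,y^*) \prec_{T^*} \lca_{T^*}(x,y)$, contradicting the assumption that $y$ is a best match of $x$ in $T^*$. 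For the reverse inclusion $E(\G) \subseteq E(\G(T^*,\sigma))$, fix $(x,y) \in E(\G)$ and any $y' \ne x$ with $\sigma(y') = \sigma(y)$. If $(x,y') \notin E(\G)$, then $xy|y' \in \mathscr{R}(\G,\sigma)$ is displayed by $T^*$, giving $\lca_{T^*}(x,y) \prec_{T^*} \lca_{T^*}(x,y')$, so $y'$ cannot beat $y$ as a best match in $T^*$.

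The main obstacle is the remaining subcase $(x,y') \in E(\G)$, where I need $\lca_{T^*}(x,y) = \lca_{T^*}(x,y')$ so that $y$ ties with $y'$ as a best match in $T^*$. Here no informative triple speaks directly, since neither $xy|y'$ nor $xy'|y$ can be informative when both $(x,y)$ and $(x,y')$ are arcs of $\G$. The plan is to invoke the \emph{minimality property} of the Aho tree: every tree that displays the compatible set $\mathscr{R}(\G,\sigma)$ is a refinement of $T^*$, and refining a tree can only enlarge the set of triples it displays. Applied to $T$, this yields that the set of triples displayed by $T^*$ is contained in the set of triples displayed by $T$. Since $y$ and $y'$ are tied best matches of $x$ in $T$, neither $xy|y'$ nor $xy'|y$ is displayed by $T$, and by contrapositive neither is displayed by $T^*$; therefore $\lca_{T^*}(x,y) = \lca_{T^*}(x,y')$, as required. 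The nontrivial ingredients here are the least-resolved property of the Aho tree and the monotonicity of displayed triples under refinement, both standard facts about the \texttt{BUILD} construction whose careful verification constitutes the technical core of the argument.
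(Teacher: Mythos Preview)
The paper does not prove this theorem; it is quoted from \cite{BMG-corrigendum} and merely used as background. There is therefore no ``paper's own proof'' to compare against for this particular statement.

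That said, your argument is sound. The only substantive step is the subcase $(x,y),(x,y')\in E(\G)$, where you need $\lca_{T^*}(x,y)=\lca_{T^*}(x,y')$. You handle this by invoking the least-resolved property of the \texttt{BUILD}/Aho tree (every tree displaying $\mathscr{R}(\G,\sigma)$ refines $T^*$, hence the triples displayed by $T^*$ form a subset of those displayed by any explaining tree $T$), which is correct and standard. One small technicality you glide over: $T^*$ is a tree on the leaves occurring in $\mathscr{R}(\G,\sigma)$, and you tacitly assume this equals $V(\G)$. The paper addresses exactly this point in Lemma~\ref{lem:no-vertex-lost} for its own characterization; in the nontrivial cases every vertex lies in some informative (or forbidden) triple, and the remaining degenerate cases are handled separately.

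It is worth noting that the paper deliberately \emph{avoids} the Aho-tree route and instead proves an alternative characterization (Theorem~\ref{thm:BMG-charac-via-R-F}) via the pair $(\mathscr{R}(\G,\sigma),\mathscr{F}(\G,\sigma))$ of informative and forbidden triples. Your ``hard subcase'' is precisely where the two approaches diverge: rather than appealing to the least-resolved property of $\Aho(\mathscr{R})$ to rule out $xy|y'$ and $xy'|y$ in $T^*$, the paper observes directly (Lemma~\ref{lem:forb_triples}) that such triples are \emph{forbidden} for any explaining tree, and then uses the \texttt{MTT} algorithm of \cite{He:06} to test compatibility of $(\mathscr{R},\mathscr{F})$. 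The payoff is that the paper's characterization applies to \emph{any} tree agreeing with $(\mathscr{R},\mathscr{F})$ (Proposition~\ref{prop:explaining-tree}), not just the Aho tree, which is what it needs downstream for the ILP formulation.
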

However, for our purposes, it will be convenient to avoid the
construction of the Aho tree. In Sec.~\ref{sec:triples}, we
will establish an alternative characterization, which will depend on
the following two
technical results:
\begin{lemma}{\cite[Lemma~5 and~6]{Schaller:2020}}
  \label{lem:informative_triples}
  Let $(\G,\sigma)$ be a BMG and $xy|y'$ an informative triple for $\G$.
  Then, every tree $(T,\sigma)$ that explains $(\G,\sigma)$ displays the
  triple $xy|y'$, i.e. $\lca_T(x,y)\prec_T\lca_T(x,y')=\lca_T(y,y')$.

  Moreover, if the triples $ab|b'$ and $cb'|b$ are informative for
  $(\G,\sigma)$, then every tree $(T,\sigma)$ that explains $(\G,\sigma)$
  contains two distinct children $v_1,v_2\in \child_T(\lca_T(a,c))$ such
  that $a,b\prec_T v_1$ and $b',c\prec_T v_2$.
\end{lemma}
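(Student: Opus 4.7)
The plan is to handle the two assertions separately, with the second reducing to the first plus some LCA bookkeeping. For the first part, I would fix any tree $(T,\sigma)$ explaining $(\G,\sigma)$ and an informative triple $xy|y'$. Since $(x,y)\in E(\G)$, the definition of best matches gives $\lca_T(x,y)\preceq_T \lca_T(x,y'')$ for every leaf $y''$ with $\sigma(y'')=\sigma(y)$; applied to $y''=y'$ this yields $\lca_T(x,y)\preceq_T \lca_T(x,y')$. The next step is to argue strictness: were $\lca_T(x,y)=\lca_T(x,y')$, the leaf $y'$ would itself satisfy the best-match condition for $x$ with respect to color $\sigma(y)=\sigma(y')$, whence $(x,y')\in E(\G)$, contradicting Definition~\ref{def:informative_triples}. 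Since for any three leaves of a phylogenetic tree the three pairwise LCAs consist of two equal ancestors with the third strictly below, the strict inequality $\lca_T(x,y)\prec_T \lca_T(x,y')$ forces $\lca_T(x,y')=\lca_T(y,y')$; that is, $T$ displays $xy|y'$.

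For the second part, I would first apply the first part to both informative triples to conclude, for every explaining tree $T$,
\[
  \lca_T(a,b)\prec_T \lca_T(a,b')=\lca_T(b,b')
  \quad\text{and}\quad
  \lca_T(c,b')\prec_T \lca_T(c,b)=\lca_T(b,b').
\]
Setting $u\coloneqq \lca_T(b,b')$, this also yields $u=\lca_T(a,b')=\lca_T(c,b)$. From $\lca_T(a,b)\prec_T u$ the leaves $a$ and $b$ must lie under a common child $v_1\in\child_T(u)$, while $b'$ sits under a different child, call it $v_2$. The identity $\lca_T(c,b)=u$ forces $c$ under a child of $u$ other than $v_1$, and $\lca_T(c,b')\prec_T u$ then pins $c$ under the same child as $b'$, namely $v_2$. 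Consequently $a,b\prec_T v_1$ and $b',c\prec_T v_2$, and, because $a$ and $c$ lie under distinct children of $u$, also $\lca_T(a,c)=u$, which is exactly the required assertion.

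The whole argument is essentially LCA bookkeeping, so I do not expect any real obstacle. The only mild subtlety is the strictness step in the first part, where the hypothesis $(x,y')\notin E(\G)$ has to be used to upgrade the weak LCA inequality to a strict one; without it one would only recover $\lca_T(x,y)\preceq_T\lca_T(x,y')$, which does not pin down the displayed triple. Once strictness is in hand, the second part drops out by matching the three LCA identities to the elementary fact that two leaves whose $\lca$ equals $u$ must lie under distinct children of $u$. No machinery beyond the definitions of BMG, informative triple, and phylogenetic tree is needed.
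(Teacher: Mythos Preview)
Your argument is correct. The paper does not supply its own proof of this lemma; it is imported verbatim from \cite[Lemmas~5 and~6]{Schaller:2020}, so there is no in-paper proof to compare against. That said, your approach is the natural one and matches what one finds in the cited source: for the first part, use that $(x,y)\in E(\G)$ gives the weak inequality $\lca_T(x,y)\preceq_T\lca_T(x,y')$ from the best-match definition, and then upgrade to strictness via $(x,y')\notin E(\G)$; for the second part, apply the first part to both triples and do the LCA bookkeeping under $u=\lca_T(b,b')$. One small cosmetic point: when you conclude $a,b\prec_T v_1$, it may be worth noting that $v_1$ is necessarily an inner vertex (since $\lca_T(a,b)\preceq_T v_1$ and $a\ne b$ are leaves), which justifies the strict inequality; the same applies to $v_2$.
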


\begin{lemma}{\cite[Prop.\ 1]{Geiss:2019a}}
  The disjoint union of vertex disjoint BMGs $(\G_i, \sigma_i)$,
  $1\leq i\leq k$, is a BMG if and only if all color sets are the same,
  i.e., $\sigma_i(V(\G_i)) = \sigma_j(V(\G_j))$ for $1\leq i< j\leq k$.
\label{lem:BMGunion}
\end{lemma}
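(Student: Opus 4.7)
The plan is to prove the two implications separately: the ``only if'' direction follows from sf-coloring, while the ``if'' direction requires an explicit tree construction.

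For the \emph{only if} direction, suppose the disjoint union $(\G,\sigma)$ of the $(\G_i,\sigma_i)$ is itself a BMG. Then $(\G,\sigma)$ is sf-colored, as noted right after the definition of sink-free colorings. Fix any $x\in V(\G_i)$ and any color $s\in\sigma(V(\G))\setminus\{\sigma(x)\}$. The sf-coloring property yields some $y\in N(x)$ with $\sigma(y)=s$. Since the disjoint union introduces no arcs between different parts, $N(x)\subseteq V(\G_i)$; hence $y\in V(\G_i)$ and $s\in\sigma_i(V(\G_i))$. Combined with $\sigma(x)\in\sigma_i(V(\G_i))$ this gives $\sigma(V(\G))\subseteq \sigma_i(V(\G_i))$; the reverse inclusion is immediate. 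Doing this for every $i$ shows all color sets agree.

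For the \emph{if} direction, let $M$ denote the common color set and let $(T_i,\sigma_i)$ explain $(\G_i,\sigma_i)$. I would construct a tree $T$ by introducing a fresh root $\rho$ whose children are the roots $\rho_{T_i}$ (or, when $\G_i$ is a single vertex and $T_i$ is just a leaf, that leaf itself), and inherit the coloring $\sigma$ at the leaves from the $\sigma_i$. Each non-root inner vertex retains its original degree $\geq 3$, and our definition of phylogenetic tree allows $\rho$ to have arbitrary degree, so $T$ is phylogenetic. To verify $\G(T,\sigma)$ equals the disjoint union, observe that for $x\in L(T_i)$ we have $\lca_T(x,y)=\lca_{T_i}(x,y)\prec_T\rho$ when $y\in L(T_i)$ and $\lca_T(x,y)=\rho$ when $y\notin L(T_i)$. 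Hence for any color $s\neq\sigma(x)$, every minimizer of $\lca_T(x,\cdot)$ over leaves of color $s$ must lie in $L(T_i)$; such minimizers exist because $s\in M=\sigma_i(V(\G_i))$ guarantees at least one candidate, and they coincide exactly with the best matches of $x$ for color $s$ in $(T_i,\sigma_i)$. Thus $\G(T,\sigma)$ has exactly the arcs of the disjoint union.

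There is no genuine obstacle. The only subtlety is corner-case bookkeeping: handling single-vertex $\G_i$ so that $T$ remains phylogenetic, and recognizing that the disjoint-union hypothesis (no cross-component arcs) is precisely what forces the within-component lca to be strictly below the cross-component lca $\rho$, which is what drives the whole construction.
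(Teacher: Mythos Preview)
Your argument is correct. Note, however, that this lemma is not proved in the present paper; it is quoted verbatim from \cite[Prop.~1]{Geiss:2019a}, so there is no in-paper proof to compare against. That said, your two-step approach---deriving equality of color sets from the sf-coloring property for the ``only if'' direction, and joining explaining trees $(T_i,\sigma_i)$ under a fresh root for the ``if'' direction---is exactly the standard argument one would expect and matches the proof in the cited source. The corner cases you flag (single-vertex $\G_i$, degree of the new root) are handled correctly under the paper's definition of phylogenetic tree, which exempts the root from the degree-three requirement.
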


\subsection{Neighborhoods in BMGs}

A graph is \emph{thin} if no two vertices have the same neighborhood.
\begin{definition}\label{def:rthin}
  Two vertices $x,y\in L$ are in relation $\rthin$ if $N(x)=N(y)$ and
  $N^-(x)=N^-(y)$.
\end{definition}
Clearly the \emph{thinness relation} $\rthin$ is an equivalence relation on
$V$.  For each $\rthin$ class $\alpha$ we have $N(\alpha)=N(x)$ and
$N^-(\alpha) = N^-(x)$ for all $x\in\alpha$.  The following
  characterization of 2-BMGs makes use of the structure of the trees by
  which they are explained. These properties can be expressed in terms of
  properties of the vertex neighborhoods in the 2-BMGs. While they can be
  tested in polynomial time, they are far from being intuitive. We refer to
  \cite{Geiss:2019a} for a detailed discussion.
\begin{theorem}{\cite[Thm.~3 and 4]{Geiss:2019a}}
  \label{thm:charact-2BMG} Let $(\G,\sigma)$ be a connected properly
  2-colored digraph.  Then, $(\G,\sigma)$ is a BMG if and only if for any
  two $\rthin$ classes $\alpha$ and $\beta$ of $\G$ holds
  \begin{itemize}
    \setlength{\itemindent}{0.15in}
  \item[\AX{(N0)}] $N(\alpha)\ne\emptyset$
  \item[\AX{(N1)}] $\alpha\cap N(\beta)=\beta\cap N(\alpha)=\emptyset$
    implies \\
    $N(\alpha) \cap N(N(\beta))=N(\beta)  \cap N(N(\alpha))=\emptyset$.
  \item[\AX{(N2)}] $N(N(N(\alpha))) \subseteq N(\alpha)$
  \item[\AX{(N3)}]
    $\alpha\cap N(N(\beta))=\beta\cap N(N(\alpha))=\emptyset$ and
    $N(\alpha)\cap N(\beta)\neq \emptyset$ implies
    $N^-(\alpha)=N^-(\beta)$ and
    $N(\alpha)\subseteq N(\beta)$ or
    $N(\beta)\subseteq N(\alpha)$.
  \end{itemize}
\end{theorem}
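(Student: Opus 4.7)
The plan is to split the proof into the two implications. The necessity direction is the easier part: fixing a leaf-colored tree $(T,\sigma)$ that explains $(\G,\sigma)$, condition (N0) is immediate from Observation \ref{fact:allcolors-out}, which states that a BMG on at least two colors is sf-colored, so $N(\alpha)\ne\emptyset$ for every $\rthin$-class. For (N1)-(N3), I would translate each hypothesis into a constraint on the relative positions of $\lca_T(\alpha,\beta)$, $\lca_T(\alpha,N(\beta))$, and related last common ancestors. For instance, for (N2), an element of $N(N(N(\alpha)))$ corresponds to a chain $\alpha\to\beta\to\gamma\to\delta$ of best-match steps alternating between the two colors; using Lemma \ref{lem:informative_triples}, one shows $\lca_T(\alpha,\delta)\preceq_T\lca_T(\alpha,\beta)$, which forces $\delta\in N(\alpha)$. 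Conditions (N1) and (N3) follow by analogous case analyses on the lca-heights inside $T$, using that for 2 colors the $\rthin$-class of a leaf is determined by its sequence of lca's with leaves of the opposite color.

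For the sufficiency direction my preferred route is via Theorem \ref{thm9_new}: I would show that if (N0)-(N3) hold then the informative-triple set $\mathscr{R}(\G,\sigma)$ is compatible and $\G(\Aho(\mathscr{R}(\G,\sigma)),\sigma) = (\G,\sigma)$. Alternatively one may construct the explaining tree directly from the $\rthin$-quotient: condition (N3) places the $\rthin$-classes of each color into a laminar system under inclusion of their out-neighborhoods, so the classes of color $1$ with minimal neighborhood can be peeled off as the deepest subtree together with the unique set of color-$2$ classes they match; condition (N1) guarantees that the color-$1$ and color-$2$ laminar systems interleave consistently, and (N2) rules out the short best-match cycles that would otherwise force two incomparable classes into the same subtree. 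The construction then recurses on the quotient obtained by removing the identified deep subtree, the base case being handled by Lemma \ref{lem:BMGunion}.

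The hardest step will be sufficiency, and within it, proving that the laminar structures on the two color classes glue into a single tree without contradiction. Concretely, the obstacle is to exclude configurations in which (N1)-(N3) each hold pairwise but together force a triple of $\rthin$-classes to admit no consistent lca hierarchy. The cleanest way around this is the informative-triple reformulation: derive from (N1)-(N3) a sufficient list of triples that every explaining tree must display (and, if needed, a corresponding list of forbidden triples), show via the \texttt{BUILD}/\texttt{MTT} framework that this pair is compatible, and then verify that the resulting tree induces exactly $(\G,\sigma)$. This reduces the remaining work to a finite case analysis on pairs and triples of $\rthin$-classes rather than an ad hoc tree construction, and it matches the overall triple-based viewpoint developed in the rest of the paper.
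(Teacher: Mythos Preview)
This theorem is not proved in the present paper; it is quoted from \cite[Thm.~3 and 4]{Geiss:2019a} and used here as a known result, most visibly as the engine behind the proof of Theorem~\ref{thm:newCharacterizatio}. There is therefore no in-paper proof to compare your proposal against.

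On the proposal itself: your necessity sketch is reasonable. For sufficiency, be cautious with your preferred route via Theorem~\ref{thm9_new}. That result is likewise imported (from \cite{BMG-corrigendum}), and without inspecting the original sources you cannot rule out that its proof---or the proof of the auxiliary Lemma~\ref{lem:informative_triples} you invoke---depends on Theorem~\ref{thm:charact-2BMG} itself, which would make the argument circular. Your alternative, building the explaining tree directly from the $\rthin$-quotient via the laminar structure of out-neighborhoods, is the self-contained approach and is the kind of argument one would expect in \cite{Geiss:2019a}; the outline you give is plausible, but as you yourself note, the step in which the two color-wise laminar systems are glued into a single tree is where the real work lies and would need substantial detail to become a proof.
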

We note that \cite{Geiss:2019a} tacitly assumed \AX{(N0)}, i.e., that
$(\G,\sigma)$ is sink-free.

\section{Characterization of BMGs in terms of triples}
\label{sec:triples}

In this section, we provide a novel characterization of BMGs utilizing
allowed and forbidden triples.  To this end, we need
\begin{definition}\label{def:forbidden_triples}
  Let $(\G,\sigma)$ be a vertex colored graph. We say that a triple $xy|y'$
  is \emph{forbidden} for a graph $(\G,\sigma)$ if $x$, $y$ and $y'$ are
  pairwise distinct vertices in $\G$ such that (i)
  $\sigma(x)\neq\sigma(y)=\sigma(y')$ and (ii) $(x,y)\in E(\G)$ and
  $(x,y')\in E(\G)$. The set of forbidden triples of $(\G,\sigma)$ is
  denoted by $\mathscr{F}(\G,\sigma)$.
\end{definition}
The characterization of BMGs will make use of the following two
technical results:
\begin{lemma}\label{lem:forb_triples}
  Let $(\G,\sigma)$ be a BMG explained by $(T,\sigma)$.  Then, none of the
  triples in $\mathscr{F}(\G,\sigma)$ is displayed by $(T,\sigma)$.
\end{lemma}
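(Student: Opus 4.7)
The plan is a short contradiction argument directly from the definition of best match. Suppose, for the sake of contradiction, that some forbidden triple $xy|y' \in \mathscr{F}(\G,\sigma)$ is displayed by $(T,\sigma)$. By Definition~\ref{def:forbidden_triples}, we have $\sigma(x)\neq\sigma(y)=\sigma(y')$ together with $(x,y),(x,y')\in E(\G)$. The assumption that $T$ displays $xy|y'$ translates, via the definition of displayed triples, into the strict inequality $\lca_T(x,y)\prec_T \lca_T(x,y')=\lca_T(y,y')$.

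Next, I would combine this strict inequality with the hypothesis that $(T,\sigma)$ explains $(\G,\sigma)$, so that $\G=\G(T,\sigma)$. Since $y$ is a vertex of color $\sigma(y)=\sigma(y')$ and $\lca_T(x,y)\prec_T \lca_T(x,y')$, the defining condition of a best match is violated for the pair $(x,y')$: namely, $y'$ fails to satisfy $\lca_T(x,y')\preceq_T \lca_T(x,y'')$ for all $y''$ of color $\sigma(y')$, as witnessed by $y'':=y$. Therefore $y'$ is not a best match of $x$ in $(T,\sigma)$, which gives $(x,y')\notin E(\G(T,\sigma))=E(\G)$, contradicting the assumption $(x,y')\in E(\G)$ inherited from the forbidden triple.

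I expect no real obstacle here: the statement is essentially an immediate reformulation of the best-match condition, dual in spirit to the informative triple result of Lemma~\ref{lem:informative_triples}. The only thing to be careful about is that the triple $xy|y'$ is well-defined as a rooted triple on three pairwise distinct leaves, which is guaranteed by the definition of $\mathscr{F}(\G,\sigma)$ (so in particular $x,y,y'$ are pairwise distinct and $\lca_T(x,y)$, $\lca_T(x,y')$, $\lca_T(y,y')$ are all meaningful), and to invoke comparability of $\lca_T(x,y)$ and $\lca_T(x,y')$ along the path to the root so that the strict inequality is the correct negation of the best-match condition.
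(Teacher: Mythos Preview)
Your proof is correct and follows essentially the same approach as the paper's own proof: assume for contradiction that $T$ displays a forbidden triple $xy|y'$, use the resulting strict inequality $\lca_T(x,y)\prec_T\lca_T(x,y')$ to conclude $(x,y')\notin E(\G)$, and contradict the defining property of forbidden triples. You are simply more explicit than the paper in spelling out why the strict inequality forces $y'$ not to be a best match of $x$.
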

\begin{proof}
  Assume, for contradiction, that $(T,\sigma)$ displays
  $xy|y'\in \mathscr{F}(\G,\sigma)$.  Hence, $\lca_T(x,y)\prec_T \lca(x,y')$
  and thus, $(x,y)\in E(\G)$ and $(x,y')\notin E(\G)$ contradicting
  the definition of $xy|y'$ as a forbidden triple of $(G,\sigma)$.
\end{proof}

\begin{lemma}\label{lem:subgraph-Aho}
  Let $(\G,\sigma)$ be an sf-colored graph
  with vertex set $L$.
  Then for every tree $(T,\sigma)$ on $L$ displaying all triples in
  $\mathscr{R}(\G,\sigma)$, it holds $E(\G(T,\sigma))\subseteq E(\G)$.
\end{lemma}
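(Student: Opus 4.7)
The plan is to prove the contrapositive at the level of individual arcs: pick any arc $(x,y)\in E(\G(T,\sigma))$ and show, assuming it is absent from $\G$, that this forces $T$ to display an informative triple of $(\G,\sigma)$ that is incompatible with $y$ being a best match of $x$ in $T$.

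More concretely, I would start by unfolding the definition of the BMG on the right-hand side. An arc $(x,y)\in E(\G(T,\sigma))$ means $\sigma(x)\ne\sigma(y)$ and $\lca_T(x,y)\preceq_T \lca_T(x,y')$ for every leaf $y'$ with $\sigma(y')=\sigma(y)$. Assume toward a contradiction that $(x,y)\notin E(\G)$. Since $(\G,\sigma)$ is sf-colored and $\sigma(y)\in\sigma(L)\setminus\{\sigma(x)\}$, there exists some vertex $y'\in N_\G(x)$ with $\sigma(y')=\sigma(y)$. Because $(x,y)\notin E(\G)$ but $(x,y')\in E(\G)$, we must have $y\ne y'$.

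Now the three vertices $x,y,y'$ satisfy the conditions of Definition~\ref{def:informative_triples}: $\sigma(x)\ne\sigma(y)=\sigma(y')$, $(x,y')\in E(\G)$, and $(x,y)\notin E(\G)$. Hence $xy'|y\in \mathscr{R}(\G,\sigma)$. By hypothesis $T$ displays every triple in $\mathscr{R}(\G,\sigma)$, so $T$ displays $xy'|y$, which by definition of displaying gives
\[
\lca_T(x,y') \prec_T \lca_T(x,y)=\lca_T(y,y').
\]
This directly contradicts $\lca_T(x,y)\preceq_T \lca_T(x,y')$ obtained from $(x,y)\in E(\G(T,\sigma))$. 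Therefore $(x,y)\in E(\G)$, and since the arc was arbitrary, $E(\G(T,\sigma))\subseteq E(\G)$.

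I do not expect a real obstacle here: the argument is essentially bookkeeping with the definitions of informative triples, BMG arcs, and sf-colorings. The only place to be careful is the existence of the witness $y'$, which is exactly where the sf-colored hypothesis is used; without it, the absence of $(x,y)$ in $\G$ could not be converted into an informative triple, and the conclusion would fail.
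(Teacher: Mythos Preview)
Your proof is correct and follows essentially the same approach as the paper's own argument: assume an arc $(x,y)$ of $\G(T,\sigma)$ is missing from $\G$, use the sf-coloring to find a witness $y'$ of color $\sigma(y)$ in $N_\G(x)$, deduce the informative triple $xy'|y$, and obtain a contradiction from $T$ displaying it. The only cosmetic difference is notation ($x,y,y'$ versus the paper's $a,b,b'$).
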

\begin{proof}
  Let $(T,\sigma)$ be a tree displaying all triples in
  $\mathscr{R}(\G,\sigma)$ and set $E'\coloneqq E(\G(T,\sigma))$ and
  $E\coloneqq E(\G)$.  First note that $(\G,\sigma)$ and $\G(T,\sigma)$
  have the same vertex set $L$.  Suppose that $(a, b)\in E'$ but
  $(a, b)\notin E$.  Since $(G,\sigma)$ is sf-colored, vertex $a$
  must have at least one out-neighbor $b'$ (distinct from $b$) of color
  $\sigma(b)$ in $(\G,\sigma)$, i.e. $(a,b')\in E$.  Hence, $(a,b')\in E$
  and $(a, b)\notin E$ imply that $ab'|b$ is an informative triple of
  $(\G,\sigma)$ and thus displayed by $T$.  Therefore
  $\lca_T(a,b')\prec_T\lca_T(a,b)$ which, together with
  $\sigma(b)=\sigma(b')$, contradicts $(a,b)\in E'$.  Therefore,
  $E'\subseteq E$.
\end{proof}

\begin{proposition}\label{prop:explaining-tree}
  Let $(\G,\sigma)$ be a sf-colored graph with vertex set $L$. A
  leaf-colored tree $(T,\sigma)$ on $L$ explains $(\G,\sigma)$ if and only
  if $(T,\sigma)$ agrees with
  $(\mathscr{R}(\G,\sigma), \mathscr{F}(\G,\sigma) )$.  In this case,
  $(\G,\sigma)$ is a BMG.
\end{proposition}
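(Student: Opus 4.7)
The plan is to prove both implications of the biconditional and then note that the ``In this case'' conclusion is immediate from the definition of a BMG. The forward direction is essentially a direct consequence of the two preparatory results already available.

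For the ``only if'' direction, suppose $(T,\sigma)$ explains $(\G,\sigma)$. Then Lemma~\ref{lem:informative_triples} immediately gives that every triple in $\mathscr{R}(\G,\sigma)$ is displayed by $T$, while Lemma~\ref{lem:forb_triples} gives that no triple in $\mathscr{F}(\G,\sigma)$ is displayed. Hence $(T,\sigma)$ agrees with $(\mathscr{R}(\G,\sigma),\mathscr{F}(\G,\sigma))$.

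For the ``if'' direction, assume $T$ displays every triple in $\mathscr{R}(\G,\sigma)$ and none in $\mathscr{F}(\G,\sigma)$. Set $\G':=\G(T,\sigma)$. Lemma~\ref{lem:subgraph-Aho} directly yields $E(\G')\subseteq E(\G)$, so it remains to establish the reverse inclusion. Suppose for contradiction that there is an arc $(a,b)\in E(\G)\setminus E(\G')$. Since $(\G,\sigma)$ is sf-colored, color $\sigma(b)$ appears in the vertex set, so the set of leaves of color $\sigma(b)$ in $T$ is non-empty; choose $b'$ of color $\sigma(b)$ minimizing $\lca_T(a,\cdot)$, so that by construction $(a,b')\in E(\G')$. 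Because $(a,b)\notin E(\G')$, we must have $b'\neq b$ and $\lca_T(a,b')\prec_T \lca_T(a,b)$, i.e.\ $T$ displays the triple $ab'|b$. Applying Lemma~\ref{lem:subgraph-Aho} once more, $(a,b')\in E(\G')\subseteq E(\G)$. But then both $(a,b)$ and $(a,b')$ are arcs in $\G$ with $\sigma(b)=\sigma(b')\neq \sigma(a)$, so $ab'|b\in \mathscr{F}(\G,\sigma)$, contradicting our assumption that $T$ agrees with the pair. Hence $E(\G)\subseteq E(\G')$, and $(T,\sigma)$ explains $(\G,\sigma)$. The closing claim that $(\G,\sigma)$ is a BMG is then immediate from Definition~\ref{def:BestMatchGraph}, since either equivalent condition produces an explaining tree.

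The only non-routine point is the reverse inclusion in the ``if'' direction, where one has to conjure the right witness $b'$; the key idea is to exploit the sf-coloring to ensure that \emph{some} vertex of color $\sigma(b)$ exists, take the one whose $\lca$ with $a$ in $T$ is lowest, and then play $(a,b)$ and $(a,b')$ off against each other via the forbidden triple $ab'|b$. Once this is set up, the sf-coloring hypothesis and Lemma~\ref{lem:subgraph-Aho} do all the work; no case distinction on whether $(a,b')$ lies in $\G$ is needed, since Lemma~\ref{lem:subgraph-Aho} hands it to us for free.
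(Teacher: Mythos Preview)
Your proof is correct and follows essentially the same approach as the paper's. The only cosmetic difference is in how the witness $b'$ is produced: the paper invokes Obs.~\ref{fact:allcolors-out} applied to the BMG $\G(T,\sigma)$ to obtain an out-neighbor of $a$ with color $\sigma(b)$, whereas you explicitly pick $b'$ as the $\lca_T(a,\cdot)$-minimizer among vertices of color $\sigma(b)$---these are the same object (a best match), so the two arguments are interchangeable.
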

\begin{proof}
  First suppose that $(T,\sigma)$ explains $(\G,\sigma)$, in which case
  $(\G,\sigma)$ is a BMG.  The only-if-direction now immediately follows
  from Lemmas~\ref{lem:informative_triples} and~\ref{lem:forb_triples}.

  Now suppose that there is a tree $(T,\sigma)$ on $L$ that displays all
  triples in $\mathscr{R}(\G,\sigma)$ and none of the triples in
  $\mathscr{F}(\G,\sigma)$.  Hence, we can apply
  Lemma~\ref{lem:subgraph-Aho} to conclude that
  $E'\coloneqq E(\G(T,\sigma)) \subseteq E(\G)\eqqcolon E$.  Note that
  $(\G,\sigma)$ and $\G(T,\sigma)$ have the same vertex set $L$.  We show
  that $E'=E$. Assume, for contradiction, that $E'\subset E$, and thus,
  that there is an $(a,b)\in E\setminus E'$.  By
  Obs.~\ref{fact:allcolors-out} and since $\G(T,\sigma)$ is a BMG, vertex
  $a$ must have at least one out-neighbor $b'$ of color $\sigma(b)$.
  Hence, there is an arc $(a,b')\in E'$. Thus, $ab'|b$ is an informative
  triple of $\G(T,\sigma)$ and must therefore be displayed by $T$.
  Moreover, $(a,b')\in E'$ and $E'\subset E$ imply $(a,b')\in E$.  Hence,
  $(a,b), (a,b')\in E$ implies that $ab'|b$ is a forbidden triple of
  $(\G,\sigma)$ and thus, not displayed by $T$ by assumption; a
  contradiction.  Therefore, $E=E'$ must hold and thus,
  $\G(T,\sigma) = (\G,\sigma)$ which, in particular, implies that
  $(\G,\sigma)$ is a BMG.
\end{proof}

\begin{theorem}\label{thm:BMG-charac-via-R-F}
  A vertex colored graph $(\G,\sigma)$ is a BMG if and only if (i)
  $(\G,\sigma)$ is sf-colored and (ii)
  $(\mathscr{R}(\G,\sigma),\mathscr{F}(\G,\sigma))$ is compatible.
\end{theorem}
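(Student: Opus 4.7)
The plan is to recognize this theorem as essentially a repackaging of Proposition~\ref{prop:explaining-tree}: the only new content is translating the existential statement ``there is a tree agreeing with $(\mathscr{R}(\G,\sigma),\mathscr{F}(\G,\sigma))$'' into the purely combinatorial condition that this pair is compatible (which is exactly its definition). So both directions should reduce to a one-line invocation of Proposition~\ref{prop:explaining-tree}, together with a check that the sf-coloring hypothesis is automatic on BMGs for the forward direction.

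For the forward direction, I would start by assuming $(\G,\sigma)$ is a BMG, so that by Definition~\ref{def:BestMatchGraph} there is a leaf-colored tree $(T,\sigma)$ with $\G(T,\sigma)=(\G,\sigma)$. Condition (i) is immediate: the best-match definition forces $\sigma(x)\ne\sigma(y)$ for every arc $(x,y)$, so $\sigma$ is proper, and Observation~\ref{fact:allcolors-out} provides the required out-neighbor of every color, so $(\G,\sigma)$ is sf-colored. For condition (ii), I would apply Proposition~\ref{prop:explaining-tree} to this $(T,\sigma)$: since it explains $(\G,\sigma)$, it agrees with $(\mathscr{R}(\G,\sigma),\mathscr{F}(\G,\sigma))$, which by Definition of compatibility means that the pair is compatible.

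For the reverse direction, I would assume (i) and (ii). By compatibility of $(\mathscr{R}(\G,\sigma),\mathscr{F}(\G,\sigma))$, there exists a tree $T$ on $L=V(\G)$ that displays every triple in $\mathscr{R}(\G,\sigma)$ and none of those in $\mathscr{F}(\G,\sigma)$. Because (i) holds, Proposition~\ref{prop:explaining-tree} applies and yields that $(T,\sigma)$ explains $(\G,\sigma)$; in particular $(\G,\sigma)$ is a BMG, as required.

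There is no real obstacle here: all the substantive work has already been done in Lemmas~\ref{lem:informative_triples}, \ref{lem:forb_triples}, \ref{lem:subgraph-Aho} and their assembly in Proposition~\ref{prop:explaining-tree}. The only mild subtlety is to not forget that the sf-colored hypothesis in the reverse direction is genuinely needed to invoke Proposition~\ref{prop:explaining-tree} (it is what drives Lemma~\ref{lem:subgraph-Aho} and the argument that produces an informative triple from the hypothetical missing arc), and that in the forward direction it must be derived rather than assumed. Once these observations are in place, the proof is a clean two-step deduction.
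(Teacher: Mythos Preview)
Your proposal is correct and follows essentially the same route as the paper. The only cosmetic difference is that for the forward direction the paper invokes Lemmas~\ref{lem:informative_triples} and~\ref{lem:forb_triples} directly to obtain compatibility, whereas you go through the ``only if'' half of Proposition~\ref{prop:explaining-tree}, which is itself just a one-line consequence of those two lemmas; the reverse directions are identical.
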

\begin{proof}
  First suppose that $(\G,\sigma)$ is a BMG.  The definition of BMGs
  together with Obs.\ \ref{fact:allcolors-out} implies that $(\G,\sigma)$
  is sf-colored.  Thus, there is a tree $(T,\sigma)$ that explains
  $(\G,\sigma)$.  By Lemmas~\ref{lem:informative_triples}
  and~\ref{lem:forb_triples}, $(T,\sigma)$ displays all triples in
  $\mathscr{R}(\G,\sigma)$ and none of the triples in
  $\mathscr{F}(\G,\sigma)$.  Hence, the pair
  $(\mathscr{R}(\G,\sigma),\mathscr{F}(\G,\sigma))$ is compatible.
  
  For the converse, suppose that $(\G,\sigma)$ is a sf-colored
  graph and that $(\mathscr{R}(\G,\sigma),\mathscr{F}(\G,\sigma))$ is
  compatible.  The latter implies that there is a tree $(T,\sigma)$ on $L$
  displaying all triples in $\mathscr{R}(\G,\sigma)$ and none of the
  triples in $\mathscr{F}(\G,\sigma)$.  Now, we can apply
  Prop.~\ref{prop:explaining-tree} to conclude that $(\G,\sigma)$ is a BMG.
\end{proof}

In order to use the \texttt{MTT} algorithm \cite{He:06} to recognize 
BMGs $(\G,\sigma)$, we show for completeness that the set of allowed and 
forbidden triples already determines $V(\G)$ except for trivial cases.

\begin{lemma}\label{lem:no-vertex-lost}
  Let $(\G,\sigma)$ be a sf-colored graph, $V(\G)\neq \emptyset$ and 
  $L'\coloneqq\bigcup_{t\,\in\, 
  \mathscr{R}(\G,\sigma)\,\cup\,\mathscr{F}(\G,\sigma)} L(t)$.
  Then the following statements are equivalent:
  \begin{enumerate}[noitemsep]
  \item $L'=V(\G)$
  \item $\mathscr{R}(\G,\sigma)\cup\mathscr{F}(\G,\sigma)\ne\emptyset$
  \item $(\G,\sigma)$ is $\ell$-colored with $\ell\geq 2$ and contains
    two vertices of the same color.
  \end{enumerate}
  \noindent
  Otherwise, $(\G,\sigma)$ is a BMG that is explained by any tree
  $(T,\sigma)$ on $V(G)$.
\end{lemma}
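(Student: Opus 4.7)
The plan is to first establish the cycle of implications $(1)\Rightarrow(2)\Rightarrow(3)\Rightarrow(1)$ and then separately verify the ``otherwise'' statement. For $(1)\Rightarrow(2)$, I would observe that $L'$ is by construction a union of triple-leaf-sets, so $L'=V(\G)\neq\emptyset$ immediately forces at least one triple to exist. For $(2)\Rightarrow(3)$, I would invoke Definitions~\ref{def:informative_triples} and~\ref{def:forbidden_triples}: any triple $xy|y'$ in $\mathscr{R}(\G,\sigma)\cup\mathscr{F}(\G,\sigma)$ satisfies $\sigma(x)\neq\sigma(y)=\sigma(y')$, so at least two colors appear and the vertices $y,y'$ witness a color class of size at least two.

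For $(3)\Rightarrow(1)$, which is the technical heart of the lemma, I would take an arbitrary $v\in V(\G)$ and exhibit a triple in $\mathscr{R}(\G,\sigma)\cup\mathscr{F}(\G,\sigma)$ whose leaf set contains $v$. The proof splits into two cases depending on the color class of $v$. In the first case there exists $v'\neq v$ with $\sigma(v')=\sigma(v)$; since $\ell\geq 2$, pick any vertex $w$ with $\sigma(w)\neq\sigma(v)$, and invoke the sf-coloring property to conclude that $w$ has an out-neighbor of color $\sigma(v)$, i.e.\ at least one of $(w,v),(w,v')$ is an arc. A short enumeration of the three possibilities $\{(w,v)\in E,(w,v')\in E\}$, $\{(w,v)\in E,(w,v')\notin E\}$, $\{(w,v)\notin E,(w,v')\in E\}$ yields respectively $wv|v'\in\mathscr{F}(\G,\sigma)$, $wv|v'\in\mathscr{R}(\G,\sigma)$, or $wv'|v\in\mathscr{R}(\G,\sigma)$, each containing $v$. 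In the second case $v$ is the unique vertex of its color; by hypothesis some other color class contains two vertices $u,u'$, and applying sf-coloring to $v$ (which has color different from $\sigma(u)$) produces an out-neighbor of color $\sigma(u)$, reducing to the same three-way case split now with $v$ in the role of $w$.

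For the ``otherwise'' part I would assume that~(3) fails, so either $\ell=1$ or every color class is a singleton. If $\ell=1$, then properness of $\sigma$ forces $E(\G)=\emptyset$, and for any tree $(T,\sigma)$ on $V(\G)$ the graph $\G(T,\sigma)$ has no arcs because there is no second color; hence $\G(T,\sigma)=(\G,\sigma)$ trivially. If every color class is a singleton, then for every $x$ and every color $s\neq\sigma(x)$ there is a unique vertex $y$ of color $s$, which the sf-coloring property forces to be an out-neighbor of $x$; conversely properness excludes monochromatic arcs, so $(\G,\sigma)$ coincides with the properly colored complete digraph on $V(\G)$. For any tree $(T,\sigma)$ on $V(\G)$ the unique vertex of each foreign color is automatically the (only) best match, so again $\G(T,\sigma)=(\G,\sigma)$.

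The only genuinely delicate step is the implication $(3)\Rightarrow(1)$, and in particular the subcase in which the chosen vertex $v$ is alone in its color class, forcing one to pivot to a different color class of size $\geq 2$ and use $v$ itself as the ``outside'' vertex of the triple. The remaining arguments are short applications of the definitions and of sf-coloring.
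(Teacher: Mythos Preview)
Your overall plan matches the paper's: the cycle $(1)\Rightarrow(2)\Rightarrow(3)\Rightarrow(1)$ followed by the ``otherwise'' case. The arguments for $(1)\Rightarrow(2)$, $(2)\Rightarrow(3)$, and the ``otherwise'' part are correct and essentially identical to the paper's.

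There is, however, a genuine gap in your $(3)\Rightarrow(1)$ argument. In your first case you assert that sf-coloring gives ``at least one of $(w,v),(w,v')$ is an arc''. This is false in general: sf-coloring only guarantees that $w$ has \emph{some} out-neighbor of color $\sigma(v)$, and when that color class has three or more elements the out-neighbor may well be a third vertex $v''\notin\{v,v'\}$. Your three-way enumeration therefore omits the legitimate possibility $(w,v)\notin E$ and $(w,v')\notin E$, and the same issue propagates to your second case via the phrase ``reducing to the same three-way case split''. The repair is easy. In the first case, split instead on whether $(w,v)\in E$: if so, $wv|v'$ lies in $\mathscr{R}(\G,\sigma)\cup\mathscr{F}(\G,\sigma)$; if not, let $z$ be any out-neighbor of $w$ of color $\sigma(v)$ (so $z\neq v$), and then $wz|v\in\mathscr{R}(\G,\sigma)$. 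In the second case the repair is even simpler, since $v$ sits in the ``$x$''-position of the triple: take the actual out-neighbor $z$ of $v$ of color $\sigma(u)$ and any $z'\neq z$ of that color, obtaining $vz|z'\in\mathscr{R}(\G,\sigma)\cup\mathscr{F}(\G,\sigma)$.

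For comparison, the paper's proof of $(3)\Rightarrow(1)$ avoids your case split on the size of $v$'s color class. It fixes once and for all a color $r$ with at least two vertices, and for any vertex $x$ with $\sigma(x)\neq r$ uses the sf-guaranteed out-neighbor $y$ of color $r$ together with a freely chosen second vertex $y'\neq y$ of color $r$ to produce $xy|y'\in\mathscr{R}(\G,\sigma)\cup\mathscr{F}(\G,\sigma)$. Any prescribed vertex can be placed into such a triple (as $x$ if its color is not $r$, and as $y$ or $y'$ otherwise), so $L'=V(\G)$.
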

\begin{proof}
  The fact that $L'=V(\G)\neq \emptyset$ immediately implies that
  $\mathscr{R}(\G,\sigma)\cup\mathscr{F}(\G,\sigma)$ must not be empty.
  Hence, (1) implies~(2).

  Suppose Condition~(2) is satisfied.  Since all triples
  $xy|y'\in \mathscr{R}(\G,\sigma)\cup\mathscr{F}(\G,\sigma)$ satisfy
  $\sigma(x)\ne\sigma(y)=\sigma(y')$, Condition (3) must be satisfied.
  Hence, (2) implies~(3).

  Suppose Condition~(3) is satisfied. Hence, there are two vertices of the
  same color $r$ and there must be a vertex $x\in V(\G)$ with
  $\sigma(x)\ne r$.  Since $(\G,\sigma)$ is sf-colored, there is a
  vertex $y\in V(\G)$ of color $r$ such that $(x,y)\in E(\G)$. Now let
  $y'\in V(\G)$, $y'\neq y$ be one of the additional vertices of color
  $\sigma(y')=r$. If $(x,y')\notin E(\G)$ then
  $xy|y'\in\mathscr{R}(\G,\sigma)$ and, otherwise, if $(x,y')\in E(\G)$
  then $xy|y'\in\mathscr{F}(\G,\sigma)$.  In summary, every vertex $x$ of
  $(\G,\sigma)$ is part of some informative or forbidden triple and thus,
  $L'=V(\G)$.  Hence, (3) implies~(1).

  Finally, suppose that none of the equivalent statements~(1), (2),
  and~(3) holds. Then $(\G,\sigma)$ is either
  1-colored and thus, does not contain any arc, or $|V(\G)|$-colored in
  which case $(\G,\sigma)$ is a complete graph. In both cases, the tree
  topology of $(T,\sigma)$ does not matter.
\end{proof}

It is straightforward to test whether a vertex colored graph
$(G,\sigma)$ is sf-colored in $O(|E(G)|)$ time.  Moreover,
\texttt{MTT} \cite{He:06} accomplishes the compatibility check of
$(\mathscr{R}(\G,\sigma),\mathscr{F}(\G,\sigma))$ and the construction of
a corresponding tree in polynomial time. If
$\mathscr{R}(\G,\sigma)\cup\mathscr{F}(\G,\sigma)=\emptyset$,
$(\G,\sigma)$ is a BMG. Otherwise, Lemma~\ref{lem:no-vertex-lost} implies
that every vertex in the sf-colored graph $(G,\sigma)$ appears
in an informative and/or a forbidden triple. Together with
Prop.~\ref{prop:explaining-tree} and Thm.~\ref{thm:BMG-charac-via-R-F}, this 
implies
\begin{corollary}\label{cor:bmg-rec-polytime}
  It can be determined in polynomial time whether a vertex colored
  graph $(G,\sigma)$ is a BMG. In the affirmative case, a tree that
  explains $(G,\sigma)$ can constructed in polynomial time.
\end{corollary}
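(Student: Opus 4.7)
The plan is to read off the algorithm directly from the characterization in Thm.~\ref{thm:BMG-charac-via-R-F}, using \texttt{MTT} as the main engine and Lemma~\ref{lem:no-vertex-lost} to rule out degeneracies in the leaf set.

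\textbf{Step 1: Check the sf-coloring.} Given $(\G,\sigma)$, first verify that $\sigma$ is proper: scan every arc $(u,v)\in E(\G)$ and confirm $\sigma(u)\ne\sigma(v)$. Then for each vertex $x$, compute the multiset $\sigma(N(x))$ and test whether it equals $\sigma(V(\G))\setminus\{\sigma(x)\}$. Both tests run in $O(|V(\G)|+|E(\G)|)$ time. If the test fails, output ``not a BMG'' -- justified by Thm.~\ref{thm:BMG-charac-via-R-F}(i).

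\textbf{Step 2: Build $\mathscr{R}(\G,\sigma)$ and $\mathscr{F}(\G,\sigma)$.} Iterate over all ordered triples $(x,y,y')$ of distinct vertices with $\sigma(x)\ne\sigma(y)=\sigma(y')$ and classify them according to Def.~\ref{def:informative_triples} and Def.~\ref{def:forbidden_triples}: an arc $(x,y)\in E(\G)$ combined with $(x,y')\notin E(\G)$ yields an element of $\mathscr{R}(\G,\sigma)$, while $(x,y),(x,y')\in E(\G)$ yields an element of $\mathscr{F}(\G,\sigma)$. This costs $O(|V(\G)|^3)$ time, and both sets have size $O(|V(\G)|^3)$.

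\textbf{Step 3: Handle the trivial case.} If $\mathscr{R}(\G,\sigma)\cup\mathscr{F}(\G,\sigma)=\emptyset$, then by Lemma~\ref{lem:no-vertex-lost}, $(\G,\sigma)$ is either $1$-colored (hence edgeless) or $|V(\G)|$-colored and complete. In both cases $(\G,\sigma)$ is a BMG explained by any phylogenetic tree on $V(\G)$, e.g.\ the star tree; output this tree.

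\textbf{Step 4: Apply \texttt{MTT}.} Otherwise, run the \texttt{MTT} algorithm of~\cite{He:06} on the pair $(\mathscr{R}(\G,\sigma),\mathscr{F}(\G,\sigma))$. By the theorem cited from~\cite{He:06}, this decides compatibility and, in the positive case, produces an agreeing tree $T$ in polynomial time. The leaf set of $T$ is $L'=\bigcup_{t}L(t)$; by Lemma~\ref{lem:no-vertex-lost}, the hypotheses of Step~3 having failed imply $L'=V(\G)$, so $T$ is a phylogenetic tree on the correct vertex set. If \texttt{MTT} reports incompatibility, output ``not a BMG'' by Thm.~\ref{thm:BMG-charac-via-R-F}(ii); otherwise equip $T$ with $\sigma$ and invoke Prop.~\ref{prop:explaining-tree} (the sf-coloring from Step~1 plus agreement with $(\mathscr{R}(\G,\sigma),\mathscr{F}(\G,\sigma))$) to conclude that $(T,\sigma)$ explains $(\G,\sigma)$; output $(T,\sigma)$.

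Each of the four steps runs in polynomial time, so the overall procedure does as well, proving both assertions of the corollary. There is no real technical obstacle here; the only point that requires attention is that the leaf set produced by \texttt{MTT} might in principle be a proper subset of $V(\G)$, but this is precisely ruled out by the equivalence $(1)\Leftrightarrow(2)$ of Lemma~\ref{lem:no-vertex-lost} in the non-trivial branch.
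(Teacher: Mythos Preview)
Your proposal is correct and follows essentially the same approach as the paper: test sf-coloring, construct $\mathscr{R}(\G,\sigma)$ and $\mathscr{F}(\G,\sigma)$, handle the degenerate case via Lemma~\ref{lem:no-vertex-lost}, and otherwise invoke \texttt{MTT} together with Prop.~\ref{prop:explaining-tree} and Thm.~\ref{thm:BMG-charac-via-R-F}. Your write-up is more explicit about the individual steps and their running times, but the underlying argument is identical to the paper's.
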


\section{Forbidden induced subgraphs of 2-BMGs}
\label{sect:forbidden}

In this section, we derive a new characterization of 2-colored BMGs in terms
of forbidden induced subgraphs. Our starting point is the observation that
certain constellations of arcs on four or five vertices cannot occur.

\begin{definition}[F1-, F2-, and F3-graphs]\par\noindent
\begin{itemize}
\item[\AX{(F1)}] A properly 2-colored graph on four distinct vertices
  $V=\{x_1,x_2,y_1,y_2\}$ with coloring
  $\sigma(x_1)=\sigma(x_2)\ne\sigma(y_1)=\sigma(y_2)$ is an
  \emph{F1-graph} if $(x_1,y_1),(y_2,x_2),(y_1,x_2)\in
  E$ and $(x_1,y_2),(y_2,x_1)\notin E$.
\item[\AX{(F2)}] A properly 2-colored graph on four distinct vertices
  $V=\{x_1,x_2,y_1,y_2\}$ with coloring
  $\sigma(x_1)=\sigma(x_2)\ne\sigma(y_1)=\sigma(y_2)$ is an
  \emph{F2-graph} if $(x_1,y_1),(y_1,x_2),(x_2,y_2)\in E$ and
  $(x_1,y_2)\notin E$.
\item[\AX{(F3)}] A properly 2-colored graph on five distinct vertices
  $V=\{x_1,x_2,y_1,y_2,y_3\}$ with coloring
  $\sigma(x_1)=\sigma(x_2)\ne\sigma(y_1)=\sigma(y_2)=\sigma(y_3)$ is an
  \emph{F3-graph} if $(x_1,y_1),(x_2,y_2),(x_1,y_3),(x_2,y_3)\in E$ and
  $(x_1,y_2),(x_2,y_1)\notin E$.
\end{itemize}
\label{def:forbidden-subgraphs}
\end{definition}

\begin{figure}[t]
  \begin{center}
    \includegraphics[width=0.85\linewidth]{./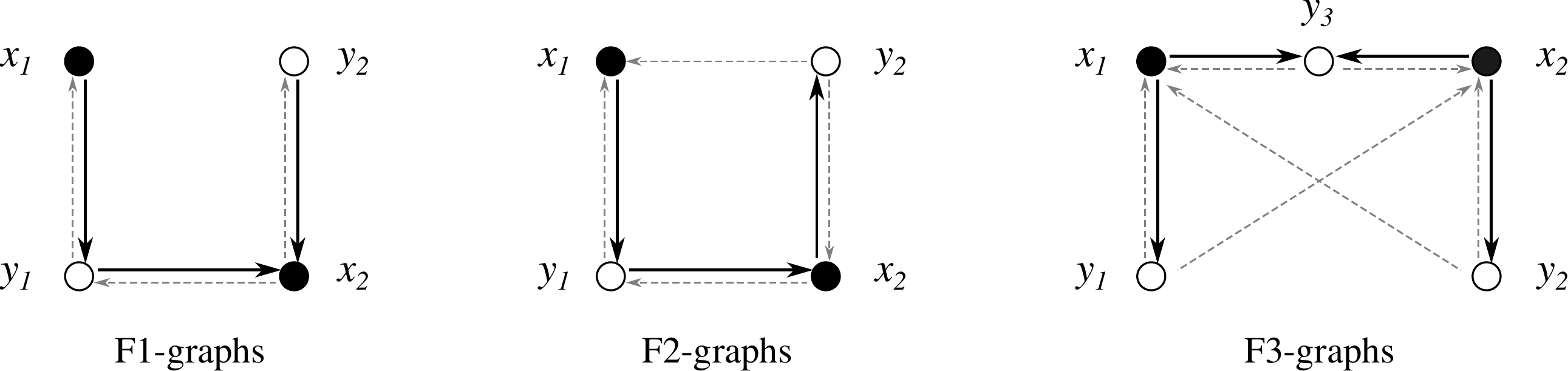}
  \end{center}
  \caption[]{Templates of the three families of forbidden induced subgraphs
    in BMGs. Black arcs must exist, non-arcs must not exist and dashed gray arcs may or may not be
    present.}
  \label{fig:forbidden-subgraphs}
\end{figure}

The ``templates'' for F1-, F2-, and F3-graphs are shown in
Fig.~\ref{fig:forbidden-subgraphs}. They define 8, 16, and 64 graphs by
specifying the presence or absence of the 3, 4, and 6 optional (dashed)
arcs, respectively, see Figs.~\ref{fig:all-F1-F2} and
\ref{fig:all-F3} in the Appendix. The F1- and
F2-graphs fall into a total of 16 isomorphism classes, four of which are
both F1- and F2-graphs. All but one of the F3-graphs contain an F1- or
an F2-graph as induced subgraph. The exception is the ``template'' of
the F3-graphs without optional arcs. The 17 non-redundant forbidden
subgraphs are collected in
Fig.~\ref{fig:all-17}. We shall see below that
they are sufficient to characterize 2-BMGs among the sink-free graphs.

\begin{figure}[t]
  \begin{center}
\includegraphics[width=0.85\linewidth]{./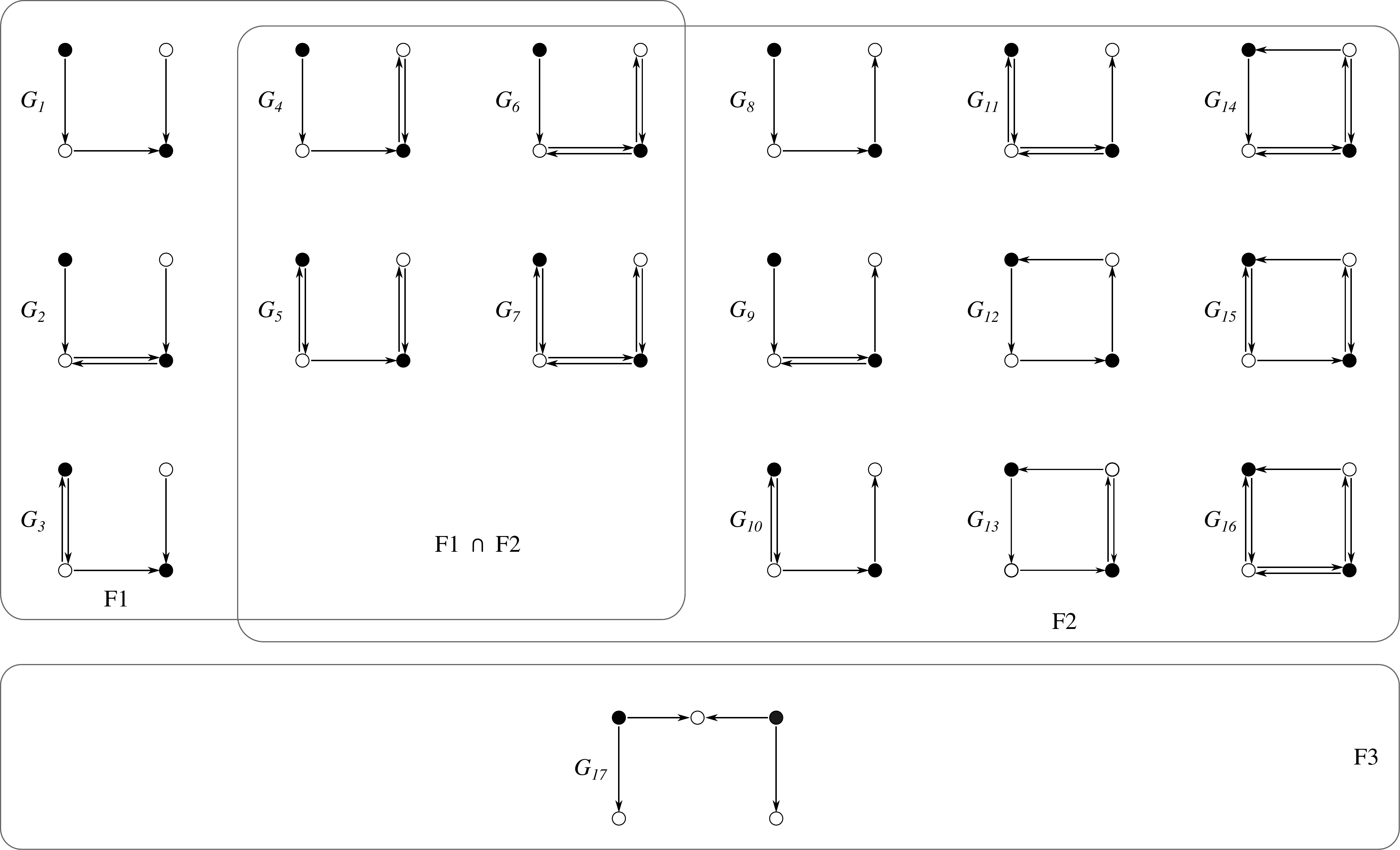}
  \end{center}
  \caption[]{Forbidden induced subgraphs in BMGs. All F3-graphs
      with at least one optional arc have an induced F1- or
      F2-graph and thus are redundant.}
  \label{fig:all-17}
\end{figure}

\begin{lemma}
  \label{lem:forbidden-subgraphs}
  If $(\G,\sigma)$ is a BMG, then it contains no induced  F1-, F2-, or
  F3-graph.
\end{lemma}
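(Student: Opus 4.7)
The plan is to fix a leaf-colored tree $(T,\sigma)$ that explains the purported BMG $(\G,\sigma)$ and assume, for contradiction, that the induced subgraph on some subset of $V(\G)$ is an F1-, F2-, or F3-graph. The two tools I will rely on are Lemma~\ref{lem:informative_triples}, which ensures that every informative triple of $(\G,\sigma)$ is displayed by $T$ (and in particular that $\lca_T(x,y')=\lca_T(y,y')$ whenever $xy|y'$ is informative), together with the defining best-match inequality $\lca_T(x,y)\preceq_T\lca_T(x,y')$ for every $(x,y)\in E(\G)$ and every $y'$ with $\sigma(y')=\sigma(y)$.

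For an F3-graph, four triples are immediately informative: $x_1 y_1|y_2$, $x_1 y_3|y_2$, $x_2 y_2|y_1$, and $x_2 y_3|y_1$. The first and third jointly imply $u:=\lca_T(x_1,y_2)=\lca_T(y_1,y_2)=\lca_T(x_2,y_1)$ and identify two distinct children $c_1,c_2$ of $u$ with $x_1,y_1\preceq_T c_1$ and $x_2,y_2\preceq_T c_2$. The second informative triple then forces $y_3\preceq_T c_1$, while the fourth forces $y_3\preceq_T c_2$, which is impossible since $c_1\ne c_2$.

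For the F1- and F2-graphs only the triple $x_1 y_1|y_2$ is automatically informative, so I would additionally invoke the best-match inequalities coming from the remaining arcs. In both cases I set $u:=\lca_T(x_1,y_2)$ and let $c_1,c_2$ be the children of $u$ with $x_1,y_1\preceq_T c_1$ and $y_2\preceq_T c_2$. For F1 the informative triple $y_2 x_2|x_1$ yields $\lca_T(y_2,x_2)\prec_T u$ and hence $x_2\preceq_T c_2$; on the other hand $(y_1,x_2)\in E(\G)$ forces $\lca_T(y_1,x_2)\preceq_T\lca_T(y_1,x_1)\prec_T u$, so $x_2\preceq_T c_1$, a contradiction. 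For F2 the arc $(y_1,x_2)$ already gives $x_2\preceq_T c_1$ by the same best-match step, after which $(x_2,y_2)\in E(\G)$ yields $\lca_T(x_2,y_2)\preceq_T\lca_T(x_2,y_1)\prec_T u$ and therefore $y_2\preceq_T c_1$, contradicting $y_2\preceq_T c_2$.

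The main obstacle is purely bookkeeping rather than a deep insight: several comparable lca's appear in each case, and it is easy to conflate equalities such as $\lca_T(x_1,y_2)=\lca_T(y_1,y_2)$, or to drop the strict-versus-non-strict distinction between $\prec_T$ and $\preceq_T$. Introducing the notation $u,c_1,c_2$ at the outset of each case and then assigning every remaining vertex to a subtree in a single step should keep the three arguments short and parallel.
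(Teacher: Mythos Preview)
Your proposal is correct and follows essentially the same approach as the paper: fix an explaining tree, read off informative triples, and derive contradictory placements of some vertex below two distinct children of a common ancestor. The only stylistic difference is that the paper invokes the second clause of Lemma~\ref{lem:informative_triples} (on the pair $ab|b'$, $cb'|b$) directly to place $x_2$ for F1 and to set up $v_1,v_2$ for F3, whereas you rebuild this placement by hand from the individual triples and best-match inequalities; for F3 you cite the extra informative triples $x_1y_3|y_2$ and $x_2y_3|y_1$ where the paper uses the raw best-match inequality from $(x_1,y_3)$ and $(x_2,y_3)$, but these are interchangeable.
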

\begin{proof}
  Let $(T,\sigma)$ be a tree that explains $(\G,\sigma)$.

  First, assume that $(\G,\sigma)$ contains an induced F1-graph, i.e.,
  there are four vertices $x_1,x_2,y_1,y_2$ satisfying \AX{(F1)}, and let
  $u\coloneqq \lca_T(x_1,y_2)$.  Then, $(x_1,y_1),(y_2,x_2)\in E$,
  $(x_1,y_2),(y_2,x_1)\notin E$ and Lemma \ref{lem:informative_triples}
  imply that $T$ must display the informative triples $x_1y_1|y_2$ and
  $y_2x_2|x_1$.  Hence, $u$ must have two distinct children $v_1$ and $v_2$
  such that $x_1,y_1\prec_T v_1$ and $x_2,y_2\prec_T v_2$.  Therefore,
  $\lca_T(x_1,y_1)\preceq_T v_1 \prec_T u =\lca_T(x_2,y_1)$ and
  $\sigma(x_1)=\sigma(x_2)$ imply that $(y_1,x_2)\notin E(\G)$; a
  contradiction.

  Next, assume that $(\G,\sigma)$ contains an induced F2-graph, i.e., there
  are four vertices $x_1,x_2,y_1,y_2$ satisfying \AX{(F2)}. Then
  $(x_1,y_1)\in E$, $(x_1,y_2)\notin E$ and
  Lemma~\ref{lem:informative_triples} imply that $T$ displays the
  informative triple $x_1y_1|y_2$ and thus
  $\lca_T(x_1,y_1)\prec_{T}\lca_T(x_1,y_2)$.  Since $(y_1,x_2)\in E$ and
  $\sigma(x_1)=\sigma(x_2)$, we conclude that
  $\lca_T(x_2,y_1)\preceq_T\lca_T(x_1,y_1)\prec_{T}\lca_T(x_1,y_2)$ and
  therefore also $\lca_T(x_2,y_1)\prec_T\lca_T(x_2,y_2)=\lca_T(x_1,y_2)$.
  Together with $\sigma(y_1)=\sigma(y_2)$, the latter contradicts
  $(x_2,y_2)\in E$.

  Finally, assume that $(\G,\sigma)$ contains an induced F3-graph, i.e.,
  there are five vertices $x_1,x_2,y_1,y_2,y_3$ satisfying \AX{(F3)}. By
  Lemma~\ref{lem:informative_triples}, $(x_1,y_1)\in E$ and
  $(x_1,y_2)\notin E$ implies that $T$ displays the triple $x_1y_1|y_2$,
  and $(x_2,y_2)\in E$ together with $(x_2,y_1)\notin E$ implies that $T$
  displays the triple $x_2y_2|y_1$. Furthermore, $\lca_T(x_1,x_2)$ has
  distinct children $v_1$ and $v_2$ such that $x_1,y_1\prec_T v_1$ and
  $x_2,y_2\prec_T v_1$.  Now since $\sigma(y_1)=\sigma(y_2)=\sigma(y_3)$,
  the two arcs $(x_1,y_3)$ and $(x_2,y_3)$ imply that
  $\lca_T(x_1,y_3)\preceq_T\lca_T(x_1,y_1)\preceq_T v_1$ and
  $\lca_T(x_2,y_3)\preceq_T\lca_T(x_2,y_2)\preceq_T v_2$, respectively.
  Since $v_1$ and $v_2$ are incomparable w.r.t.\ $\preceq_{T}$, this is a
  contradiction.
\end{proof}

\begin{lemma}
  \label{lem:no-Fx-implies-Nx}
  Let $(\G,\sigma)$ be a properly 2-colored graph. Then $(\G,\sigma)$
  satisfies \AX{(N1)} if it does not contain an induced F1-graph, it
  satisfies \AX{(N2)} if it does not contain an induced F2-graph, and
  it satisfies \AX{(N3)} if is contains neither an induced F1-graph nor
  an induced F3-graph.
\end{lemma}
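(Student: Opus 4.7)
The plan is to prove each of the three implications by contraposition, unwinding the definitions of the neighbourhood axioms to exhibit the claimed forbidden subgraph explicitly.

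For \AX{(N1)}, failure at $\rthin$-classes $\alpha,\beta$ means $\alpha\cap N(\beta)=\beta\cap N(\alpha)=\emptyset$ while, after possibly swapping $\alpha$ and $\beta$ (the hypothesis is symmetric), $N(\beta)\cap N(N(\alpha))\neq\emptyset$. A witness $x_2$ in this intersection unfolds to $y_2\in\beta$ with $(y_2,x_2)\in E$ and $y_1\in N(\alpha)$ with $(y_1,x_2)\in E$, and any $x_1\in\alpha$ then gives $(x_1,y_1)\in E$. The two emptiness hypotheses deliver both the F1 non-arcs $(x_1,y_2),(y_2,x_1)\notin E$ and the pairwise distinctness of $x_1,x_2,y_1,y_2$. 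For \AX{(N2)}, a witness $y_2\in N(N(N(\alpha)))\setminus N(\alpha)$ is pulled back through the three $N$-operators to produce $x_1\in\alpha$, $y_1\in N(\alpha)$, $x_2\in N(N(\alpha))$ with $(x_1,y_1),(y_1,x_2),(x_2,y_2)\in E$; the hypothesis $y_2\notin N(\alpha)=N(x_1)$ supplies the non-arc $(x_1,y_2)\notin E$ of F2 as well as $y_1\neq y_2$ and $x_1\neq x_2$.

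For \AX{(N3)}, I would fix $x_1\in\alpha$, $x_2\in\beta$, and $y_3\in N(\alpha)\cap N(\beta)$; the premise immediately gives $(y_3,x_1),(y_3,x_2)\notin E$. I then split on how the conclusion fails. If $N(\alpha)$ and $N(\beta)$ are incomparable, witnesses $y_1\in N(\alpha)\setminus N(\beta)$ and $y_2\in N(\beta)\setminus N(\alpha)$ together with $y_3$ realise an induced F3-graph on $\{x_1,x_2,y_1,y_2,y_3\}$: the four required arcs $(x_1,y_1),(x_2,y_2),(x_1,y_3),(x_2,y_3)$ and the two forbidden arcs $(x_1,y_2),(x_2,y_1)$ can be read off directly from the memberships, and distinctness follows from the arc pattern. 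Otherwise the out-neighbourhoods are comparable (WLOG $N(\alpha)\subseteq N(\beta)$) while $N^-(\alpha)\neq N^-(\beta)$; a witness $y^{*}\in N^-(\alpha)\symdiff N^-(\beta)$ then necessarily lies outside $N(\beta)$ (otherwise the premise forces $(y^{*},x)\notin E$ for every $x\in\alpha\cup\beta$, contradicting $y^{*}$'s in-neighbour role), and hence also outside $N(\alpha)$ by the inclusion. Combining $y^{*}$ with a vertex of $N(\beta)\setminus N(\alpha)$ (available whenever the inclusion is strict) and with $x_1,x_2$ is intended to assemble the F1-pattern.

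The most delicate step is this last subcase of \AX{(N3)}: the F1-template imposes three required arcs, two forbidden arcs, and a distinctness requirement on four vertices, and verifying all of these simultaneously will rely on using the premise conditions $\alpha\cap N(N(\beta))=\beta\cap N(N(\alpha))=\emptyset$ (together with the inclusion $N(\alpha)\subseteq N(\beta)$) to exclude the "wrong" arcs that would otherwise wreck the F1 shape, and on the $\rthin$-class structure to furnish the remaining vertex. The F3 subcase and the arguments for \AX{(N1)} and \AX{(N2)} are comparatively routine once the correct witnesses have been identified.
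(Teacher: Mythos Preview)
Your treatment of \AX{(N1)}, \AX{(N2)}, and the F3-subcase of \AX{(N3)} matches the paper's proof. The F1-subcase of \AX{(N3)}, however, does not go through as sketched.

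First, the parenthetical justification that $y^{*}\notin N(\beta)$ is wrong in general. If $y^{*}\in N^{-}(\beta)\setminus N^{-}(\alpha)$ (so $(y^{*},x_2)\in E$, $(y^{*},x_1)\notin E$) and $y^{*}\in N(\beta)$, then from $\alpha\cap N(N(\beta))=\emptyset$ you get $(y^{*},x)\notin E$ for $x\in\alpha$, but for $x\in\beta$ you would need $y^{*}\in N(\alpha)$ to invoke $\beta\cap N(N(\alpha))=\emptyset$, and the inclusion $N(\alpha)\subseteq N(\beta)$ points the wrong way. Nothing rules out $y^{*}\in N(\beta)\setminus N(\alpha)$ with $(y^{*},x_2)\in E$. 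Second, your proposed vertex set $\{x_1,x_2,y^{*},y'\}$ with $y'\in N(\beta)\setminus N(\alpha)$ fails to instantiate the F1-template regardless: since both $y^{*}$ and $y'$ would lie outside $N(\alpha)$, the required out-arc from $x_1$ is never available under any role assignment. Third, when $N(\alpha)=N(\beta)$ (possible, as $\alpha\ne\beta$ can be due solely to $N^{-}(\alpha)\ne N^{-}(\beta)$) no such $y'$ exists, and you do not say how to proceed.

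The paper's route here is both simpler and avoids these problems: it treats the case $N^{-}(\alpha)\neq N^{-}(\beta)$ directly, \emph{without} assuming comparability of the out-neighbourhoods, and reuses the already-available vertex $y_3\in N(\alpha)\cap N(\beta)$. Taking, w.l.o.g.\ by the symmetry of the premises, $y^{*}$ with $(y^{*},x_2)\in E$ and $(y^{*},x_1)\notin E$, one observes that $(x_1,y^{*})\in E$ would force $x_2\in\beta\cap N(N(\alpha))=\emptyset$, so $(x_1,y^{*})\notin E$ and $y^*\ne y_3$. The four vertices $y^{*},y_3,x_2,x_1$ then induce an F1-graph, with $y^{*},y_3$ in the roles of $x_1,x_2$ and $x_2,x_1$ in the roles of $y_1,y_2$ of Def.~\ref{def:forbidden-subgraphs}\AX{(F1)}: the arcs $(y^{*},x_2),(x_2,y_3),(x_1,y_3)$ and non-arcs $(y^{*},x_1),(x_1,y^{*})$ match the template.
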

\begin{proof}
We employ contraposition and thus show that $(\G=(V,E),\sigma)$ contains a
forbidden subgraph whenever \AX{(N1)}, \AX{(N2)} or \AX{(N3)} are violated.

Assume that \AX{(N1)} is not satisfied. Thus, there are two
$\rthin$-classes $\alpha$ and $\beta$ with
$\alpha\cap N(\beta)=\beta\cap N(\alpha)=\emptyset$ for which
$N(\alpha)\cap N(N(\beta))\neq\emptyset$ or
$N(\beta)\cap N(N(\alpha))\neq\emptyset$. We can w.l.o.g.\ assume that
$N(\beta)\cap N(N(\alpha))\neq\emptyset$. Note that
$\alpha\cap N(\beta)=\emptyset$ implies that $(y,x)\notin E$ for all
$x\in\alpha, y\in\beta$. Likewise $(x,y)\notin E$ for all
$x\in\alpha, y\in\beta$, since $\beta\cap N(\alpha)=\emptyset$. Let
$x_1\in\alpha$, $y_2\in\beta$ and
$x_2\in N(\beta)\cap N(N(\alpha))\neq\emptyset$. It must hold
$(x_1,y_2),(y_2,x_1)\notin E$ by the arguments above. Since
$x_2\in N(\beta)$, we have $(y_2,x_2)\in E$. Moreover,
$\sigma(x_1)=\sigma(x_2)\ne\sigma(y_2)$, since $(\G,\sigma)$ is properly
colored. Clearly, $x_2\in N(N(\alpha))$ implies that
$N(\alpha)\ne\emptyset$. Now, let $y_1\in N(\alpha)$ be a vertex such that
$(y_1,x_2)\in E$, which must exist as a consequence of
$x_2\in N(N(\alpha))$. We have $(x_1,y_1)$ since $y_1\in N(\alpha)$ and
thus $\sigma(y_1)=\sigma(y_2)\ne\sigma(x_1)=\sigma(x_2)$. Finally,
$(y_1,x_2)\in E$ immediately implies that $y_1\ne y_2$. In summary,
$(x_1,y_1),(y_1,x_2),(y_2,x_2)\in E$ and $(x_1,y_2),(y_2,x_1)\notin E$, and
thus $(\G,\sigma)$ contains an induced F1-graph.

Now assume that \AX{(N2)} is not satisfied and thus,
$N(N(N(\alpha)))\not\subseteq N(\alpha)$ for some $\rthin$-class $\alpha$.
Note, the latter implies that $N(N(N(\alpha)))\neq \emptyset$. Hence, there
is a vertex $y_2\in N(N(N(\alpha)))$ such that $y_2\notin N(\alpha)$. Thus,
there is a vertex $x_1\in \alpha$ such that $(x_1,y_2)\notin E$.  By the
definition of neighborhoods and since $y_2\in N(N(N(\alpha)))$, we find
vertices $y_1\in N(\alpha)$ and $x_2\in N(N(\alpha))$ such that
$(x_1,y_1),(y_1,x_2),(x_2,y_2)$. Since $(\G,\sigma)$ is properly colored,
we must have $\sigma(x_1)=\sigma(x_2)\ne\sigma(y_1)=\sigma(y_2)$. Moreover,
$(x_1,y_2)\notin E$ together with $(x_2,y_2)\in E$ and $(x_1,y_1)\in E$
implies $x_1\ne x_2$ and $y_1\ne y_2$, respectively. We conclude that the
subgraph induced by $x_1,x_2,y_1,y_2$ contains an induced F2-graph.

Finally, assume that \AX{(N3)} is not satisfied.  Hence, there are two
$\rthin$-classes $\alpha$ and $\beta$ with
$\alpha\cap N(N(\beta))=\beta\cap N(N(\alpha))=\emptyset$ and
$N(\alpha)\cap N(\beta)\neq \emptyset$, but (i)
$N^-(\alpha)\ne N^-(\beta)$, or (ii) neither $N(\alpha)\subseteq N(\beta)$
nor $N(\beta)\subseteq N(\alpha)$. Note,
$N(\alpha)\cap N(\beta)\neq \emptyset$ implies that there a vertices
$x_1\in\alpha$ and $x_2\in\beta$ with $\sigma(x_1)=\sigma(x_2)$ since
$(\G,\sigma)$ is properly 2-colored. In particular, there must be a vertex
$y_3$ with $(x_1,y_3),(x_2,y_3)\in E$ and thus
$\sigma(x_1)=\sigma(x_2)\ne\sigma(y_3)$.

Now consider Case~(i) and suppose that $N^-(\alpha)\ne N^-(\beta)$.  Thus
we can assume w.l.o.g.\ that there is a $y^*$ with $(y^*,x_2)\in E$ but
$(y^*,x_1)\notin E$.  Note, $(x_1,y^*)\notin E$, since otherwise
$(x_1,y^*),(y^*,x_2)\in E$ would contradict
$\beta\cap N(N(\alpha))=\emptyset$.  Thus, $y^*\ne y_3$ since
$(x_1,y^*)\notin E$ and $(x_1,y_3)\in E$.  Furthermore,
$\sigma(y^*)=\sigma(y_3)\ne\sigma(x_1)=\sigma(x_2)$, since $(\G,\sigma)$ is
properly 2-colored.  In summary, $(y^*,x_2),(x_1,y_3),(x_2,y_3)\in E$ and
$(y^*,x_1),(x_1,y^*)\notin E$ which implies that $(\G,\sigma)$ contains
an induced F1-graph.

Now consider Case~(ii) and assume that it holds neither
$N(\alpha)\subseteq N(\beta)$ nor $N(\beta)\subseteq N(\alpha)$.  Clearly,
the latter implies $N(\alpha)\neq \emptyset$ and $N(\beta)\neq
\emptyset$. The latter two arguments imply that there must be two distinct
vertices $y_1\in N(\alpha)\setminus N(\beta)$ and
$y_2\in N(\beta)\setminus N(\alpha)$ and, therefore,
$(x_1,y_1),(x_2,y_2)\in E$ and $(x_1,y_2),(x_2,y_1)\notin E$.  It follows
that $y_1\ne y_3$ and $y_2\ne y_3$ and
$\sigma(y_1)=\sigma(y_2)=\sigma(y_3)\ne\sigma(x_1)=\sigma(x_2)$.  This and
$(x_1,y_1),(x_2,y_2),(x_1,y_3),(x_2,y_3)\in E$ together with
$(x_1,y_2),(x_2,y_1)\notin E$ implies that $(\G,\sigma)$ contains an
induced F3-graph.
\end{proof}

Based on the latter findings we obtain here a new characterization of
2-colored BMGs that is not restricted to connected graphs.
\begin{theorem}
  \label{thm:newCharacterizatio}
  A properly 2-colored graph is a BMG if and only if it is sink-free and
  does not contain an induced F1-, F2-, or F3-graph.
\end{theorem}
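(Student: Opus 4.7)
The plan is to prove both directions by leveraging the three preceding lemmas. For the forward direction, suppose $(\G,\sigma)$ is a properly 2-colored BMG. Sink-freeness is immediate from Observation~\ref{fact:allcolors-out}: since $(\G,\sigma)$ uses both colors (surjectivity of $\sigma$), every vertex has an out-neighbor of the other color. The absence of induced F1-, F2-, and F3-graphs follows directly from Lemma~\ref{lem:forbidden-subgraphs}. This direction is essentially a reassembly of results already in the section.

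For the converse, assume $(\G,\sigma)$ is properly 2-colored, sink-free, and contains no induced F1-, F2-, or F3-graph. The main tool is Theorem~\ref{thm:charact-2BMG}, which characterizes \emph{connected} 2-BMGs via the four conditions \AX{(N0)}--\AX{(N3)}. Sink-freeness yields \AX{(N0)} directly, while Lemma~\ref{lem:no-Fx-implies-Nx} yields \AX{(N1)}, \AX{(N2)}, and \AX{(N3)} from the forbidden subgraph hypotheses. If $(\G,\sigma)$ is connected, we are done.

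The main obstacle is that Theorem~\ref{thm:charact-2BMG} is stated only for connected graphs, so the disconnected case requires separate handling through Lemma~\ref{lem:BMGunion}. I would argue as follows: let $C$ be any connected component of $(\G,\sigma)$ and consider the induced subgraph $(\G[C],\sigma_{|C})$. Properness is inherited trivially; the forbidden subgraphs are inherited because they are all defined as \emph{induced} subgraph conditions; and sink-freeness is inherited because any out-neighbor of a vertex in $C$ is joined to $C$ by an arc and thus lies in $C$. Moreover, since $(\G,\sigma)$ is sink-free and properly 2-colored, every vertex has at least one out-neighbor of the opposite color; hence each component contains vertices of both colors, i.e.\ $\sigma_{|C}(C)=\sigma(V(\G))$. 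Applying the connected case of the argument above to each component shows every $(\G[C],\sigma_{|C})$ is a 2-BMG on the same two-color set.

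It then remains to invoke Lemma~\ref{lem:BMGunion}: since all components share the color set $\sigma(V(\G))$, their disjoint union $(\G,\sigma)$ is itself a BMG, which completes the proof. The only delicate point in the plan is verifying that every component of a sink-free 2-colored graph carries both colors, which is what makes Lemma~\ref{lem:BMGunion} applicable; everything else is a direct bookkeeping combination of the previously established lemmas.
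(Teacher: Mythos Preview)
Your proposal is correct and follows essentially the same approach as the paper: both directions combine Lemma~\ref{lem:forbidden-subgraphs} and Lemma~\ref{lem:no-Fx-implies-Nx} with Theorem~\ref{thm:charact-2BMG} on each connected component and then glue the components via Lemma~\ref{lem:BMGunion}. Your forward direction is slightly more direct (invoking Observation~\ref{fact:allcolors-out} for sink-freeness rather than routing through Theorem~\ref{thm:charact-2BMG}), and you are a bit more explicit than the paper about why every component carries both colors, but otherwise the arguments coincide.
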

\begin{proof}
  Suppose that $(\G,\sigma)$ is 2-colored BMG and $\mathfrak{C}$ be the set
  of its connected components. By Lemma \ref{lem:forbidden-subgraphs},
  $(\G,\sigma)$ does not contain an induced F1-, F2- or F3-graph.
  Moreover, by Lemma~\ref{lem:BMGunion}, $(\G[C],\sigma_{|C})$ must be a
  2-colored BMG for all $C\in \mathfrak{C}$.  Hence, we can apply
  Thm.~\ref{thm:charact-2BMG} to conclude that each $(\G[C],\sigma_{|C})$
  satisfies \AX{(N0)}-\AX{(N3)}. Since every $x\in V$ is contained in some
  $\rthin$-class, \AX{(N0)} is equivalent to $N(x)\ne\emptyset$, i.e.,
  $(\G,\sigma)$ is sink-free.

  Now suppose that $(\G,\sigma)$ is properly 2-colored and sink-free,
	and that it does not contain an induced F1-, F2- and F3-graph. By
  Lemma \ref{lem:no-Fx-implies-Nx}, $(\G,\sigma)$ satisfies
  \AX{(N1)}-\AX{(N3)}. Thus, in particular, each connected component of
  $(\G,\sigma)$ is sink-free and satisfies and \AX{(N1)}-\AX{(N3)}.  Note,
  $N(x)\ne\emptyset$ implies that the connected components of $(\G,\sigma)$
  contain at least one arc and, by assumption, they are properly 2-colored.
  Moreover, this implies that \AX{(N0)} is satisfied for every connected
  component of $(\G,\sigma)$.  Hence, Thm.~\ref{thm:charact-2BMG} implies
  that every connected component of $(\G,\sigma)$ is a 2-colored BMG. By
  Lemma \ref{lem:BMGunion}, $(\G,\sigma)$ is also a 2-colored BMG.
\end{proof}

\section{Complexity of 2-BMG modification problems}
\label{sect:NPcomplete2}

In real-live applications, we have to expect that graphs estimated from
empirical best match data will contain errors. Therefore, we consider the
problem of correcting erroneous and/or missing arcs. Formally, we consider
the following graph modification problems for properly colored digraphs.
\begin{problem}[\PROBLEM{$\ell$-BMG Deletion}]\ \\
  \begin{tabular}{ll}
    \emph{Input:}    & A properly $\ell$-colored digraph $(\G =(V,E),\sigma)$
                       and an integer $k$.\\
    \emph{Question:} & Is there a subset $F\subseteq E$ such that $|F|\leq
                       k$ and $(\G- F,\sigma)$ is an $\ell$-BMG?
  \end{tabular}
\end{problem}
It is worth noting that \PROBLEM{$\ell$-BMG Deletion} does not always have
a feasible solution. In particular, if $(\G,\sigma)$ contains a sink, no
solution exits for any $\ell>1$ as a consequence of
  Thm.~\ref{thm:BMG-charac-via-R-F} and the fact that we only delete
  arcs.  In contrast, it is always possible to obtain a BMG from a
properly colored digraph $(\G,\sigma)$ if arc insertions are allowed.  To
see this, observe that the graph $(\G',\sigma)$ with $V(\G')=V(\G)$ that
contains all arcs between vertices of different colors is a BMG, since it
is explained the tree with leaf set $V(\G')$ in which all leaves are
directly attached to the root.  This suggests that the following two
problems are more relevant for practical applications:
\begin{problem}[\PROBLEM{$\ell$-BMG Editing}]\ \\
  \begin{tabular}{ll}
    \emph{Input:}    & A properly $\ell$-colored digraph $(\G =(V,E),\sigma)$
                       and an integer $k$.\\
    \emph{Question:} & Is there a subset $F\subseteq V\times V \setminus
                       \{(v,v)\mid v\in V\}$ such
                       that $|F|\leq k$\\
                     & and $(\G\symdiff F,\sigma)$
                       is an $\ell$-BMG?
  \end{tabular}
\end{problem}

\begin{problem}[\PROBLEM{$\ell$-BMG Completion}]\ \\
  \begin{tabular}{ll}
    \emph{Input:}    & A properly $\ell$-colored digraph $(\G =(V,E),\sigma)$
                       and an integer $k$.\\
    \emph{Question:} & Is there a subset $F\subseteq V\times V \setminus
                       (\{(v,v)\mid v\in V\} \cup E)$ such
                       that $|F|\leq k$\\
                     & and $(\G+ F,\sigma)$
                       is an $\ell$-BMG?
  \end{tabular}
\end{problem}

In this section, we consider decision problems related to modifying
$2$-colored digraphs. The general case with an arbitrarily large number
$\ell\ge2$ of colors will be the subject of the following section. For
$\ell=2$, we will show that both \PROBLEM{$2$-BMG Deletion} and
\PROBLEM{$2$-BMG Editing} are NP-complete by reduction from the
\PROBLEM{Exact 3-Cover} problem (\PROBLEM{X3C}), one of Karp's famous 21
NP-complete problems \cite{Karp1972}.
\begin{problem}[\PROBLEM{Exact 3-Cover (X3C)}]\ \\
  \begin{tabular}{ll}
    \emph{Input:}    & A set $\mathfrak{S}$ with $|\mathfrak{S}|=3t$
                       elements and a collection $\mathcal{C}$ of
                       3-element subsets of $\mathfrak{S}$.\\
    \emph{Question:} & Does $\mathcal{C}$ contain an exact cover for
                       $\mathfrak{S}$, i.e., a subcollection
                       $\mathcal{C}'\subseteq\mathcal{C}$
                       such that \\
                     & every element of $\mathfrak{S}$ occurs in exactly
                       one member of $\mathcal{C}'$?
  \end{tabular}
\end{problem}
\noindent
An exact 3-cover $\mathcal{C}'$ of $\mathfrak{S}$ with 
$|\mathfrak{S}|=3t$ is
necessarily of size $|\mathcal{C}'|=t$ and satisfies
$\bigcup_{C\in\mathcal{C}'}C=\mathfrak{S}$.
\begin{theorem}{\cite{Karp1972}}
  \PROBLEM{X3C} is NP-complete.
  \label{thm:NPc-X3c}
\end{theorem}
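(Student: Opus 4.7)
The plan is to prove membership in NP and NP-hardness separately. Membership in NP is immediate: a certificate is a subcollection $\mathcal{C}' \subseteq \mathcal{C}$, and one can verify in polynomial time that $|\mathcal{C}'| = t$ and that every element of $\mathfrak{S}$ occurs in exactly one member of $\mathcal{C}'$ (e.g., by a single pass that increments, for each $C \in \mathcal{C}'$, a counter attached to each of its three elements, and then checks that every counter equals one).

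For hardness, I would reduce from the 3-Dimensional Matching problem (3DM), whose NP-completeness is also established in Karp's paper via a chain starting from \PROBLEM{3-SAT}. An instance of 3DM consists of three pairwise disjoint sets $W$, $X$, $Y$ of common size $q$ together with a set of triples $M \subseteq W \times X \times Y$, and one asks whether there is $M' \subseteq M$ with $|M'| = q$ such that no two triples in $M'$ agree in any coordinate. Given such an instance, the reduction produces the X3C instance with ground set $\mathfrak{S} \coloneqq W \cup X \cup Y$ (disjoint union, so $|\mathfrak{S}| = 3q$ and hence $t = q$) and collection $\mathcal{C} \coloneqq \{\,\{w,x,y\} \mid (w,x,y) \in M\,\}$. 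Pairwise disjointness of $W$, $X$, $Y$ guarantees that each $\{w,x,y\}$ is a genuine 3-element subset of $\mathfrak{S}$, and the construction is clearly polynomial in the size of the 3DM instance.

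Correctness then follows from a direct correspondence. If $M' \subseteq M$ is a perfect 3-matching of size $q$, then $\mathcal{C}' \coloneqq \{\,\{w,x,y\} \mid (w,x,y) \in M'\,\}$ has $q = t$ members; coordinate-disjointness together with disjointness of $W$, $X$, $Y$ forces the sets in $\mathcal{C}'$ to be pairwise disjoint, and the combined cardinality $3q = |\mathfrak{S}|$ then forces each element of $\mathfrak{S}$ to be covered exactly once. Conversely, any exact cover $\mathcal{C}' \subseteq \mathcal{C}$ of $\mathfrak{S}$ automatically satisfies $|\mathcal{C}'| = q$ and has pairwise disjoint members, so reading its elements back as triples in $W \times X \times Y$ yields a subset $M' \subseteq M$ in which every element of $W$, of $X$, and of $Y$ appears in exactly one triple, i.e., a perfect 3-matching.

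The main obstacle in the full proof is not this last step but rather establishing NP-hardness of the source problem itself; in Karp's chain this is a gadget-based reduction from \PROBLEM{3-SAT} to 3DM, where variable- and clause-gadgets must be carefully designed so that the forced matching in each gadget encodes a truth assignment consistently across all clauses. Once 3DM is known to be NP-hard, the reduction to X3C described above is essentially syntactic and requires no further combinatorial work.
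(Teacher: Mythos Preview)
The paper does not prove this theorem at all; it is stated as a citation of a classical result from Karp's 1972 paper and is used only as a black box for the subsequent reduction to \PROBLEM{$2$-BMG Editing}. There is therefore no proof in the paper against which your attempt can be compared.

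That said, your sketch is a correct and standard route to the result: membership in NP is immediate, and the reduction from \PROBLEM{3-Dimensional Matching} to \PROBLEM{X3C} by taking $\mathfrak{S}=W\cupdot X\cupdot Y$ and $\mathcal{C}=\{\{w,x,y\}\mid(w,x,y)\in M\}$ is the textbook argument. Your remark that the real combinatorial work lies upstream (the gadget reduction from \PROBLEM{3-SAT} to \PROBLEM{3DM}) is also accurate. For the purposes of this paper, however, none of this is needed---the authors simply invoke the result.
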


In the following, we will make extensive use of properly 2-colored
  digraphs that contain all possible arcs:
\begin{definition}
  A \emph{bi-clique} of a colored digraph $(\G,\sigma)$ is a subset of
  vertices $C\subseteq V(\G)$ such that (i) $|\sigma(C)|=2$ and (ii)
  $(x,y)\in E(\G[C])$ if and only if $\sigma(x)\ne\sigma(y)$ for all
  $x,y\in C$.  A colored digraph $(\G,\sigma)$ is a \emph{bi-cluster graph}
  if all its connected components are bi-cliques.
\end{definition}

\begin{figure}[t]
  \begin{center}
    \includegraphics[width=0.18\linewidth]{./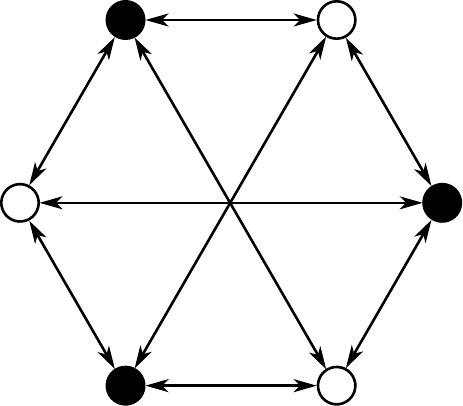}
  \end{center}
  \caption[]{A (sub)graph induced by a bi-clique consisting of 3 $\Black$ and
    3 $\White$ vertices. It has 18 arcs in total.}
  \label{fig:colored_clique_3_3}
\end{figure}
In a bi-clique, all arcs between vertices of different color are present.
Thus a bi-clique with $n$ and $m$ vertices in the two color classes has
$2nm$ arcs, see Fig.~\ref{fig:colored_clique_3_3} for the case $n=m=3$.  We
emphasize that, in contrast to the definition used in
\cite{Hellmuth:2020a}, single vertex graphs are not considered as
bi-clique.

\begin{figure}[t]
  \begin{center}
    \includegraphics[width=0.75\linewidth]{./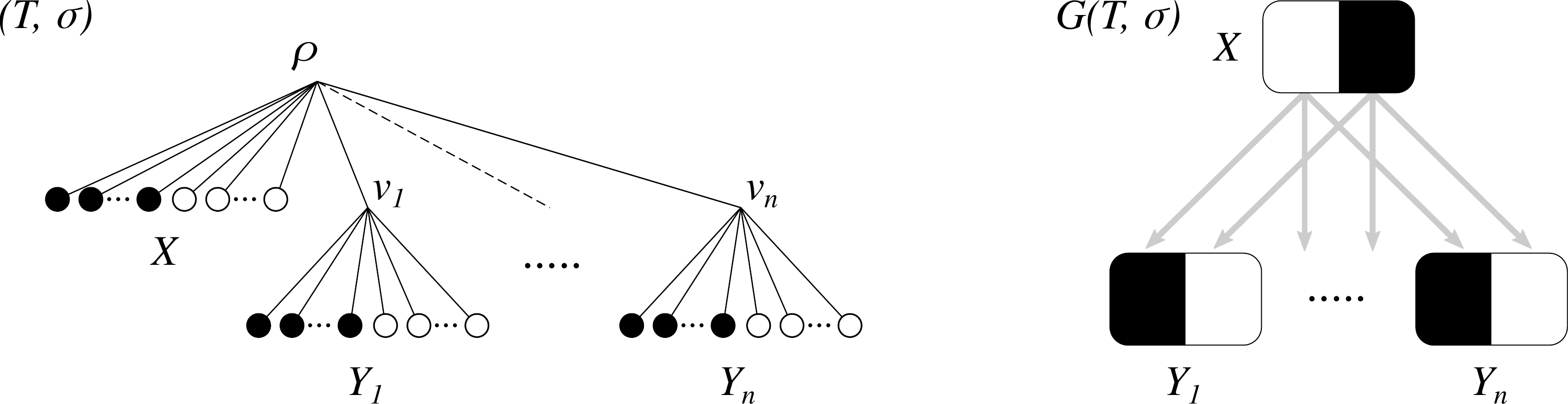}
  \end{center}
  \caption[]{A tree $(T,\sigma)$ whose BMG $\G(T,\sigma)$
  		contains bi-cliques $X$ and $Y_1,\dots,Y_n$.
  	The thick
    gray arrows indicate that all arcs in that direction exist between the
    respective sets.}
  \label{fig:component_tree}
\end{figure}

We start with a simple construction of a subclass of BMGs from disconnected
2-colored bi-cluster graph:
\begin{lemma}\label{lem:BMGspecial}
  Let $(\G,\sigma)$ be a 2-colored bi-cluster graph with at least two
  connected components, let $\mathfrak{C}$ be the set of connected
  components of $(\G,\sigma)$, and fix one of these
  connected components $X\in \mathfrak{C}$. Now, let $(\G',\sigma)$ be the 
  graph obtained
  from $(\G,\sigma)$ by adding all arcs $(x,y)$ with
  $x\in X$ and
  $y\in\bigcup_{Y\in \mathfrak{C}\setminus \{X\}} Y$ for which
  $\sigma(x)\neq\sigma(y)$. Then $(\G',\sigma)$ is a BMG.
\end{lemma}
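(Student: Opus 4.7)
The plan is to construct an explicit leaf-colored tree $(T,\sigma)$ with $\G(T,\sigma)=(\G',\sigma)$ and appeal directly to Definition~\ref{def:BestMatchGraph}. I would take $T$ to be the rooted tree whose root $\rho$ has the following children: every vertex $x\in X$ attached directly as a leaf, together with one inner vertex $v_Y$ for each $Y\in\mathfrak{C}\setminus\{X\}$, and the leaf children of $v_Y$ being precisely the elements of $Y$. The coloring of the leaves is inherited from $\sigma$.

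First I would verify that $T$ is phylogenetic. Since $X$ is a bi-clique, $|\sigma(X)|=2$ forces $|X|\geq 2$; together with the existence of at least one $v_Y$ (because $|\mathfrak{C}|\geq 2$), the root $\rho$ has at least three children. Each $v_Y$ has $|Y|\geq 2$ leaf children plus the edge to $\rho$, hence degree $\geq 3$.

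Next I would compute the out-neighborhoods in $\G(T,\sigma)$ directly from the tree structure. For a vertex $y\in Y$ with $Y\in\mathfrak{C}\setminus\{X\}$, one has $\lca_T(y,y')=v_Y$ for all $y'\in Y\setminus\{y\}$ and $\lca_T(y,z)=\rho$ for all $z\notin Y$; since $v_Y\prec_T\rho$, the best matches of $y$ in the opposite color are exactly the opposite-color vertices lying in $Y$. These coincide with the out-neighborhood of $y$ in $\G'$, because $Y$ is a bi-clique in $\G$ and the construction of $\G'$ adds no arcs leaving vertices outside $X$. For a vertex $x\in X$, $x$ is itself a child of $\rho$, so $\lca_T(x,z)=\rho$ for every $z\neq x$; thus every opposite-color vertex is a best match of $x$. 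This matches $N_{\G'}(x)$, which contains all opposite-color vertices in $X$ (from the original bi-clique of $\G$) together with all opposite-color vertices in the remaining components (from the arcs added in the construction of $\G'$).

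Having established $\G(T,\sigma)=(\G',\sigma)$, Definition~\ref{def:BestMatchGraph} yields that $(\G',\sigma)$ is a BMG. The only real work is careful bookkeeping---matching the two sides of the equation on each type of vertex---and checking that the constructed $T$ satisfies the degree condition for phylogenetic trees; no induction, case analysis, or appeal to the triple characterization is needed, since the tree is produced by hand.
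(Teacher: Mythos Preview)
Your proposal is correct and follows essentially the same approach as the paper: both construct the identical explaining tree (root $\rho$ with the vertices of $X$ attached directly as leaves and one inner node $v_Y$ per remaining component $Y$ carrying the leaves of $Y$), and then verify by inspecting $\lca$'s that $\G(T,\sigma)=(\G',\sigma)$. Your version is slightly more explicit in checking that $T$ is phylogenetic, but otherwise the argument is the same.
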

\begin{proof}
  To see that $(\G',\sigma)$ is a BMG it suffices to show that there is a
  tree $(T,\sigma)$ that explains $(\G',\sigma)$. To this end, consider the
  tree $(T,\sigma)$ as shown in Fig.~\ref{fig:component_tree} and its
  BMG $\G(T,\sigma)$.  Observe first that, for all $x,y\in X$, it holds
  $\lca_T(x,y)=\rho = \lca_T(x,y')=\lca(x',y)$ for all $x',y'\in
  L(T)$. Hence, $X$ is a bi-clique and there are arcs from all vertices in
  $X$ to all vertices of distinct color in
  $Y_i \in \mathfrak{C}\setminus \{X\}$.  Moreover, for all
  $x,y\in Y_i \in \mathfrak{C}\setminus \{X\}$ it holds that
  $\lca_T(x,y)=v_i \preceq_T \lca_T(x,y')=\lca(x',y)$ for all
  $x',y'\in L(T)$.  Hence, $Y_i$ is a bi-clique for all
  $Y_i \in \mathfrak{C}\setminus \{X\}$.  Finally, for all
  $x,y\in Y_i \in \mathfrak{C}\setminus \{X\}$ and all
  $x',y'\in L(T)\setminus Y_i$ it holds
  $\lca_T(x,y)=v_i \prec_T \lca_T(x,y')=\lca(x',y) = \rho$ which implies
  that there are no arcs from vertices in $Y_i$ to vertices in $X$ and no
  arcs between distinct $Y_i, Y_j \in \mathfrak{C}\setminus \{X\}$.  In
  summary, $\G(T,\sigma) = (\G',\sigma)$ and hence, $(\G',\sigma)$ is a
  BMG.
\end{proof}

We are now in the position to prove NP-completeness of \PROBLEM{$2$-BMG
  Editing}. The strategy of the NP-hardness proof is very similar to the
one used in \cite{El-Mallah:1988} and \cite{Liu:2012} to show the
  NP-hardness of \PROBLEM{Cograph Editing}. Nevertheless, although
  similar in fashion, our proof has subtle but important differences when
  compared to the proofs provided in \cite{El-Mallah:1988} and
  \cite{Liu:2012}.  In particular, at the heart of our construction are
  2-colored bi-cliques rather than complete graphs.

\begin{theorem}
  \label{thm:2col-NP}
  \PROBLEM{$2$-BMG Editing} is NP-complete.
\end{theorem}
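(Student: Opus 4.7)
The plan is to establish NP-membership and then NP-hardness by a reduction from \PROBLEM{X3C}. For membership: given a candidate edit set $F$ of size at most $k$, Cor.~\ref{cor:bmg-rec-polytime} allows us to verify in polynomial time that $(\G\symdiff F,\sigma)$ is a $2$-BMG, so the problem lies in NP. For hardness I plan to model the reduction on the approach used for \PROBLEM{Cograph Editing} in~\cite{El-Mallah:1988,Liu:2012}, replacing cliques by 2-colored bi-cliques (in the sense of Lem.~\ref{lem:BMGspecial}) and replacing the single induced $P_4$ obstruction by the richer family F1/F2/F3 of Def.~\ref{def:forbidden-subgraphs}.

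Given an instance $(\mathfrak{S},\mathcal{C})$ of \PROBLEM{X3C} with $|\mathfrak{S}|=3t$ and $|\mathcal{C}|=m$, I would build a properly 2-colored digraph $(\G,\sigma)$ from two kinds of gadgets: a large ``element bi-clique'' $B_s$ for each $s\in\mathfrak{S}$, and a smaller ``subset bi-clique'' $D_C$ for each $C=\{s_a,s_b,s_c\}\in\mathcal{C}$. The sizes of the $B_s$ would be chosen large enough that any arc edit internal to an element bi-clique is prohibitively expensive. For each subset $C$ I would attach $D_C$ to $B_{s_a},B_{s_b},B_{s_c}$ by a fixed pattern of arcs that keeps $(\G,\sigma)$ sink-free and properly 2-colored but introduces induced F1-, F2-, and/or F3-subgraphs precisely between $D_C$ and the three element bi-cliques it addresses. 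The pattern would be designed so that exactly two local repair strategies at each $D_C$ are optimal: ``detach'' $D_C$ from all three neighboring element bi-cliques at a uniform cost $\alpha$, or ``merge'' $D_C$ fully with those bi-cliques at a smaller uniform cost $\beta$. Setting $k\coloneqq t\beta+(m-t)\alpha$ and using Lem.~\ref{lem:BMGspecial} with a tree as in Fig.~\ref{fig:component_tree}, any exact 3-cover $\mathcal{C}'$ translates into an edit set of size exactly $k$ whose resulting graph is a 2-BMG. For the converse, the cost structure would force every edit set $F$ with $|F|\le k$ to act on each $D_C$ in one of the two canonical modes, so that the merged gadgets define a sub-collection $\mathcal{C}'\subseteq\mathcal{C}$; Thm.~\ref{thm:newCharacterizatio} then forces every $s\in\mathfrak{S}$ to be covered at least once, and the bound $|F|\le k$ forces $|\mathcal{C}'|=t$, i.e., an exact cover.

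The hard part will be designing the interconnection pattern and the sizing of the gadgets so that the ``all-or-nothing'' dichotomy at each $D_C$ is actually enforced by every optimal-cost repair, and so that no mixed strategy editing only some arcs between $D_C$ and some $B_{s_i}$ (or rewiring vertices inside a $B_s$) can beat the two canonical options. Because the forbidden family F1/F2/F3 is richer than the single $P_4$ obstruction of cographs and interacts globally through the sink-freeness requirement, this local case analysis will be considerably more delicate than in~\cite{El-Mallah:1988,Liu:2012}, and the tight cost bookkeeping needed to make the target $k$ match an exact-cover witness is the main technical obstacle of the reduction.
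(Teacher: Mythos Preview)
Your high-level plan---NP-membership via Cor.~\ref{cor:bmg-rec-polytime}, NP-hardness by reduction from \PROBLEM{X3C} in the spirit of \cite{El-Mallah:1988,Liu:2012}, with bi-cliques replacing cliques and Lem.~\ref{lem:BMGspecial} furnishing the explaining trees---matches the paper's approach. The difference, and the genuine gap, is in the gadget layout.

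You propose a large bi-clique $B_s$ per element and a small gadget $D_C$ per subset, with two canonical repairs ``detach'' (cost $\alpha$) and ``merge'' (cost $\beta<\alpha$), and budget $k=t\beta+(m-t)\alpha$. But with $\beta<\alpha$ and $m>t$ one has $m\beta < k$, so merging \emph{all} subsets is within budget; nothing you have written prevents over-merging. For the argument to work you need overlapping merges (two $D_C,D_{C'}$ sharing an element $s$) to create fresh forbidden subgraphs whose repair cost exceeds the savings, yet you specify neither the attachment pattern that would produce such penalties nor why the resulting interactions stay local. Since each large $B_s$ is shared among several subsets, merging one $D_C$ into $B_{s_a},B_{s_b},B_{s_c}$ while other $D_{C'}$'s remain attached to $B_{s_a}$ will create long-range F1/F2/F3 obstructions that the two canonical costs $\alpha,\beta$ do not capture. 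This is exactly the ``all-or-nothing dichotomy'' you flag as the hard part, and the proposal gives no indication of how it would be discharged.

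The paper sidesteps this by inverting the sizes. All elements sit in a \emph{single} bi-clique $S$ (two vertices per element, $r=18t^2$ arcs), while each subset $C_i$ gets \emph{two} bi-cliques $X_i$ (of size $2r$) and $Y_i$ (of size $2q$ with $q>2k$). Arcs go from $X_i$ to $Y_i$ and from $X_i$ to the six vertices of $S$ representing $C_i$. Because $q>2k$, every $Y_i$ retains vertices untouched by any edit set of size $\le k$; these anchors force, via induced F3- and F1-graphs, that (i) each vertex of $S$ has in-arcs from at most one $X_i$, (ii) in fact from exactly one, and (iii) vertices of $S$ with in-arcs from different $X_i$'s cannot remain adjacent inside $S$. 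These three local claims already determine the structure of any feasible edit, the cost bookkeeping becomes a straight arc count, and the optimal edit turns out to consist of deletions only (which also yields Cor.~\ref{cor:2col-NP-deletion}). What your plan is missing is an analogue of the untouchable anchors $Y_i$: a gadget large enough to survive any edit of size $\le k$ and thereby convert the global analysis into a handful of forbidden-subgraph checks.
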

\begin{proof}
  Since BMGs can be recognized in polynomial time by
    Cor.~\ref{cor:bmg-rec-polytime}, the \PROBLEM{$2$-BMG Editing} problem
  is clearly contained in NP.  To show the NP-hardness, we use reduction
  from \PROBLEM{X3C}.

  Let $\mathfrak{S}$ with $|\mathfrak{S}|=n=3t$ and
  $\mathcal{C}=\{C_1,\dots,C_m\}$ be an instance of \PROBLEM{X3C}. Clearly,
  if $m=t$ the \PROBLEM{X3C} problem becomes trivial and thus, we assume
  w.l.o.g.\ that $m>t$. The latter implies that every solution
  $\mathcal{C}'$ of \PROBLEM{X3C} satisfies
  $\mathcal{C}'\subsetneq\mathcal{C}$. Moreover, we assume w.l.o.g.\ that
  $C_i\ne C_j$, $1\leq i<j\leq m$.  We construct an instance
  $(\G=(V,E),\sigma,k)$, where $(\G,\sigma)$ is colored with the two colors
  $\Black$ and $\White$, of the \PROBLEM{$2$-BMG Editing} problem as
  follows: First, we construct a bi-clique $S$ consisting of a $\Black$
  vertex $s_{b}$ and a $\White$ vertex $s_{w}$ for every
  $s\in \mathfrak{S}$.  Thus the subgraph induced by $S$ has $6t$ vertices
  and $r\coloneqq 18t^2$ arcs in total.  Let
  $q\coloneqq 3\times[6r(m-t)+r-18t]$.  For each of the $m$ subsets $C_i$
  in $\mathcal{C}$, we introduce two bi-cliques $X_i$ and $Y_i$, where
  $X_i$ consists of $r$ $\Black$ and $r$ $\White$ new vertices, and $Y_i$
  consists of $q$ $\Black$ and $q$ $\White$ new vertices.  In addition to
  the arcs provided by bi-cliques constructed in this manner, we add the
  following additional arcs:
  \begin{description}[noitemsep,nolistsep]
  \item[--] $(x,y)$ for every $x\in X_i$ and $y\in Y_i$ with
    $\sigma(x)\ne\sigma(y)$ (note $(y,x)\notin E$),
  \item[--] $(x,s_{b})$ for every $\White$ vertex $x\in X_i$ and
    every element $s\in C_i$, and,
  \item[--] $(x,s_{w})$ for every $\Black$ vertex $x\in X_i$ and every
    element $s\in C_i$.
  \end{description}
  This construction is illustrated in Fig.~\ref{fig:reduction}. Clearly,
  $(\G,\sigma)$ is properly colored, and the reduction can be computed in
  polynomial time.

  \begin{figure}[ht]
    \begin{center}
      \includegraphics[width=0.65\linewidth]{./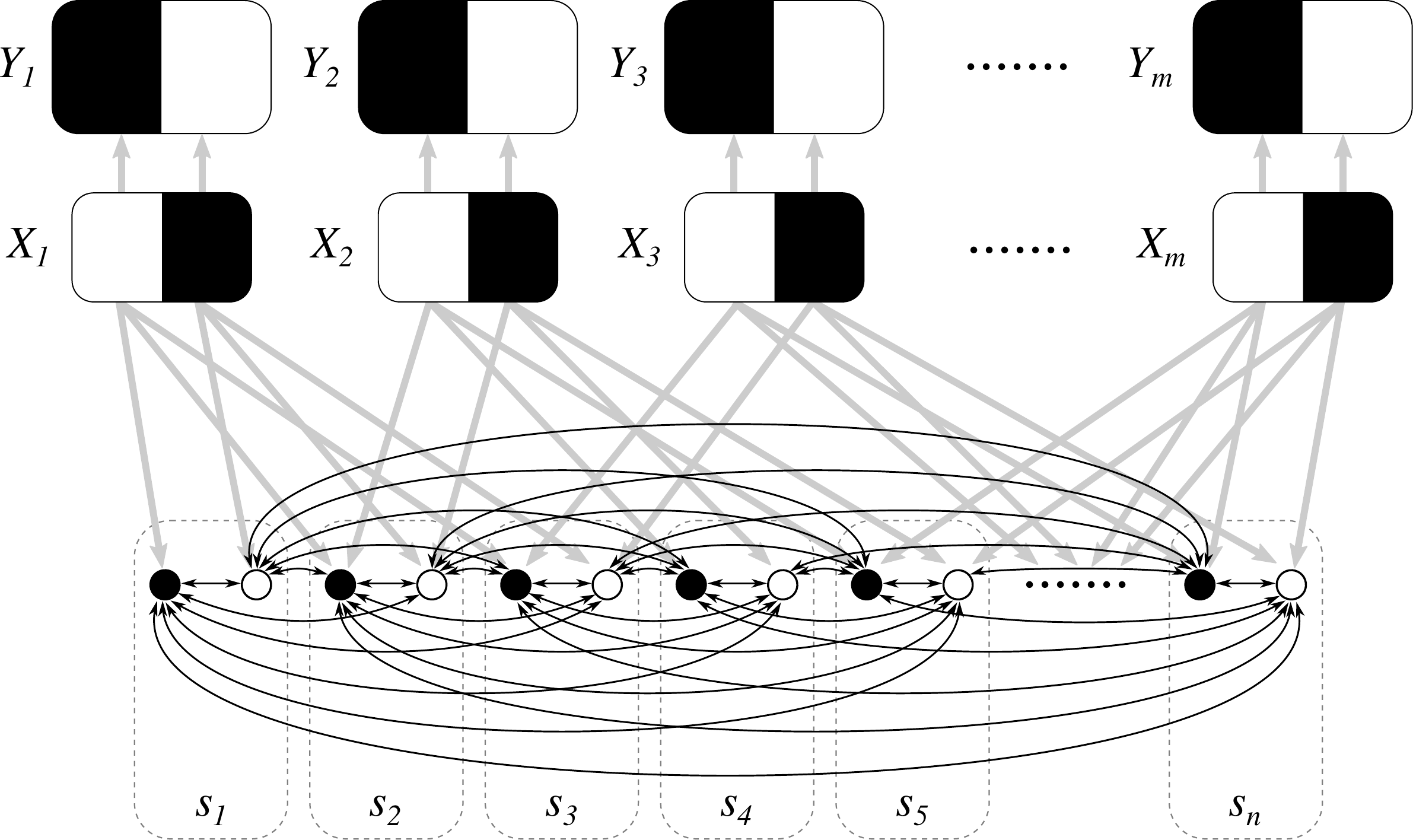}
    \end{center}
    \caption[]{Illustration of the reduction from \PROBLEM{Exact 3-Cover}.
      The thick gray arrows indicate that all arcs from that set to another
      set/vertex exist.  The illustration emphasizes the analogy to
      \cite{El-Mallah:1988} and \cite{Liu:2012}.}
    \label{fig:reduction}
  \end{figure}

  We set $k\coloneqq 6r(m-t)+r-18t$ and show that there is a $t$-element
  subset $\mathcal{C}'$ of $\mathcal{C}$ that is a solution of
  \PROBLEM{X3C} if and only \PROBLEM{$2$-BMG Editing} with input
  $(\G,\sigma,k)$ has a yes-answer.  We emphasize that the coloring
  $\sigma$ remains unchanged in the proof below.

  \smallskip

  First suppose that \PROBLEM{X3C} with input $\mathfrak{S}$ and
  $\mathcal{C}$ has a yes-answer.  Thus, there is a $t$-element subset
  $\mathcal{C}'$ of $\mathcal{C}$ such that
  $\bigcup_{C\in\mathcal{C}'}C=\mathfrak{S}$.  We construct a set $F$ and
  add, for all $C_i\in\mathcal{C}\setminus\mathcal{C}'$ and all
    $s\in C_i$, the arcs $(x,s_{w})$ for every $\Black$ vertex $x\in X_i$
    and the arcs $(x,s_{b})$ for every $\White$ vertex $x\in X_i$.
  Since $|C_i|=3$ for every $C_i\in \mathcal{C}$ and
  $|\mathcal{C}\setminus\mathcal{C}'|=m-t$, the set $F$ contains exactly
  $6r(m-t)$ arcs, so far. Now, we add to $F$ all arcs $(s_b,s'_w)$ and
  $(s_w,s'_b)$ whenever the corresponding elements $s$ and $s'$ belong to
  distinct elements in $\mathcal{C}'$, i.e., there is no $C\in\mathcal{C}'$
  with $\{s, s'\}\subset C$.  Therefore, the subgraph of $\G-F$ induced by
  $\mathfrak{S}$ is the disjoint union of $t$ bi-cliques, each consisting 
  of exactly $3$ $\Black$ vertices, $3$ $\White$ vertices, and $18$ arcs.
  Hence, $F$ contains, in addition to the $6r(m-t)$ arcs, further $r-18t$
  arcs. Thus $|F|=k$. This completes the construction of $F$.

  Since $F$ contains only arcs but no non-arcs of $\G$, we have
  $\G\symdiff F = \G-F$.  It remains to show that $\G\symdiff F$ is a
  BMG. To this end observe that $\G\symdiff F$ has precisely $m$ connected
  components that are either induced by $X_i\cup Y_i$ (in case
  $C_i\in\mathcal{C}\setminus\mathcal{C}'$ ) or $X_i\cup Y_i\cup S'$ where
  $S'$ is a bi-clique containing the six vertices corresponding to the
  elements in $C_i\in\mathcal{C}'$. In particular, each of these components
  corresponds to the subgraph as specified in Lemma~\ref{lem:BMGspecial}.
  To see this, consider the bi-cluster graph consisting of the
    subgraphs induced by the bi-cliques $X_i$ and $Y_i$, and additionally
    $S'$ in the second case. Now fix the connected component $X_i$ of this
    graph, and add all arcs from the vertices in $X_i$ to
    differently-colored vertices in the remaining component(s).  Thus,
  we can apply Lemma~\ref{lem:BMGspecial} to conclude that every
    connected component is a BMG.  In particular, all of these subgraphs
  contain at least one $\Black$ and one $\White$ vertex. Hence,
  Lemma~\ref{lem:BMGunion} implies that $(\G\symdiff F,\sigma)$ is a BMG.

  \smallskip

  Now, suppose that \PROBLEM{2-BMG Editing} with input $(\G,\sigma)$ has a
  yes-answer.  Thus, there is a set $F$ with $|F|\le k$ such that
  $(\G\symdiff F,\sigma)$ is a BMG.  We will prove that we have to delete
  an arc set similar to the one as constructed above.  First note that the
  number of vertices affected by $F$, i.e. vertices incident to
  inserted/deleted arcs, is at most $2k$.  Since
  $2k<q=|\{y\in Y_i\mid \sigma(y)=\Black\}|=|\{y\in Y_i\mid
  \sigma(y)=\White\}|$ for every $1\le i\le m$, we have at least on
  $\Black$ vertex $b_i\in Y_i$ and at least one $\White$ vertex
  $w_i\in Y_i$ that are unaffected by $F$.  Recall that $S$ is the
    bi-clique that we have constructed from a $\Black$ vertex $s_{b}$ and a
    $\White$ vertex $s_{w}$ for every $s\in \mathfrak{S}$.  We continue 
    by
  proving
  \par\noindent
  \begin{claim}
    \label{clm:at-most-one-Xi}
    Every vertex $s\in S$ has in-arcs from at most one $X_i$ in
    $\G\symdiff F$.
  \end{claim}
  \begin{claim-proof}\item \emph{Proof:}
    Assume w.l.o.g.\ that $s$ is $\Black$ and, for contradiction, that there
    are two distinct vertices $x_1\in X_i$ and $x_2\in X_j$ with $i\ne j$
    and $(x_1,s), (x_2,s)\in E\symdiff F$.  Clearly, both $x_1$ and $x_2$
    are $\White$.  As argued above, there are two (distinct) $\Black$ vertices
    $b_1\in Y_i$ and $b_2\in Y_j$ that are not affected by $F$.  Thus,
    $(x_1,b_1)$ and $(x_2,b_2)$ remain arcs in $\G\symdiff F$, whereas
    $(x_1,b_2)$ and $(x_2,b_1)$ are not arcs in $\G\symdiff F$, since they
    do not form arcs in $\G$.  In summary, we have five distinct vertices
    $x_1,x_2,b_1,b_2,s$ with
    $\sigma(x_1)=\sigma(x_2)\ne\sigma(b_1)=\sigma(b_2)=\sigma(s)$, arcs
    $(x_1,b_1),(x_2,b_2),(x_1,s),(x_2,s)$ and non-arcs
    $(x_1,b_2),(x_2,b_1)$.  Thus $(\G\symdiff F,\sigma)$ contains an
    induced F3-graph.  By Lemma~\ref{lem:forbidden-subgraphs},
    $(\G\symdiff F,\sigma)$ is not a BMG; a contradiction. \hfill$\diamond$
  \end{claim-proof}

  By Claim~\ref{clm:at-most-one-Xi}, every vertex in $S$ has in-arcs from
  at most one $X_i$.  Note each $X_i$ has $r$ $\Black$ and $r$ $\White$
  vertices.  Since each element in $S$ is either $\White$ or $\Black$, each
  single element in $S$ has at most $r$ in-arcs.  Since $|S| = 2n$, we
  obtain at most $2rn = 2r(3t)=6rt$ such arcs in $\G\symdiff F$.  In
  $\G$, there are in total $6rm$ arcs from the vertices in all $X_i$ to the
  vertices in $S$.  By Claim \ref{clm:at-most-one-Xi}, $F$ contains at
  least $6r(m-t)$ deletions.  It remains to specify the other at most
  $r-18t$ arc modifications.  To this end, we show first
  \begin{claim}
    \label{clm:at-presisely-Xi}
    Every vertex $s\in S$ has in-arcs from precisely one $X_i$ in
    $\G\symdiff F$.
  \end{claim}
  \begin{claim-proof}\item \emph{Proof:}
    Assume that there is a vertex $s\in S$ that has no in-arc from any
    $X_i$.  Hence, to the aforementioned $6r(m-t)$ deletions we must add
    $r$ further deletions.  However, at most $r-18t$ further edits are
    allowed; a contradiction.  \hfill$\diamond$
  \end{claim-proof}
  So far, $F$ contains only arc-deletions.  For the  next arguments, we
  need the following two statements:
  \begin{claim}
    \label{clm:no-insertion-XiXj}
    The modification set $F$ does not insert any arcs between $X_i$ and
    $X_j$ with $i\ne j$.
  \end{claim}
  \begin{claim-proof}\item\emph{Proof:}
    Assume for contradiction that $F$, and thus $\G\symdiff F$,
    contains an arc $(x_1,x_2)$ with $x_1\in X_i$, $x_2\in X_j$ and
    $i\ne j$.  W.l.o.g.\ assume that $x_1$ is $\White$ and $x_2$ is
    $\Black$.  As argued above there are $\Black$, resp., $\White$ vertices
    $b, w\in Y_j$ that are unaffected by $F$.  Therefore, $(x_2,w)$ and
    $(b,w)$ remain arcs in $\G\symdiff F$, whereas $(x_1,b)$ and $(b,x_1)$
    are not arcs in $\G\symdiff F$ since they do not form arcs in $\G$. In
    summary, $(x_1,x_2),(b,w),(x_2,w)$ are arcs in $\G\symdiff F$ while
    $(x_1,b),(b,x_1)$ are not arcs in $\G\symdiff F$.  Since moreover
    $\sigma(x_1)=\sigma(w)\ne\sigma(b)=\sigma(x_2)$,
    $(\G\symdiff F,\sigma)$ contains an induced F1-graph.  By
    Lemma~\ref{lem:forbidden-subgraphs}, $(\G\symdiff F,\sigma)$ is not a
    BMG; a contradiction.  \hfill$\diamond$
  \end{claim-proof}
  \par\noindent
  \begin{claim}
    \label{clm:deletions-in-S}
    Let $s_1,s_2\in S$ be vertices with in-arcs $(x_1,s_1)$, resp.,
    $(x_2,s_2)$ in $\G\symdiff F$ for some $x_1\in X_i$ and $x_2 \in X_j$
    with $i\ne j$.  Then $(s_1,s_2)$ and $(s_2,s_1)$ cannot be arcs in
    $\G\symdiff F$.
  \end{claim}
  \begin{claim-proof}\item\emph{Proof:}
    Assume w.l.o.g. that $(s_1,s_2)$ is an arc in $\G\symdiff F$ and that
    $s_1$ is $\Black$.  It follows that $x_1$ and $s_2$ are $\White$ and
    $x_2$ is $\Black$.  By construction of $\G$ and by
    Claim~\ref{clm:no-insertion-XiXj}, we clearly have
    $(x_1,x_2),(x_2,x_1)\notin E\symdiff F$.  In summary, we have four
    distinct vertices $x_1,x_2,s_1,s_2$ with
    $\sigma(x_1)=\sigma(s_2)\ne\sigma(s_1)=\sigma(x_2)$, arcs
    $(x_1,s_1),(x_2,s_2),(s_1,s_2)$ and non-arcs $(x_1,x_2),(x_2,x_1)$ in
    $\G\symdiff F$.  Thus $(\G\symdiff F,\sigma)$ contains an induced
    F1-graph.  By Lemma~\ref{lem:forbidden-subgraphs},
    $(\G\symdiff F,\sigma)$ is not a BMG; a contradiction.
    \hfill$\diamond$
  \end{claim-proof}
  In summary, $\G\symdiff F$ has the following property: Every $s\in S$ has
  in-arcs from exactly one $X_i$, and there are no arcs between two
    distinct vertices $s_1$ and $s_2$ in $S$ that have in-arcs from two
  different sets $X_i$ and $X_j$,  respectively. Since $|C_i|=3$ for
  every $C_i\in\mathcal{C}$, $(\G\symdiff F)[S]$ contains connected
  components of size at most $6$, i.e., the $\Black$ and $\White$ vertex
  for each of the three elements in $C_i$. Hence, the maximum number of
  arcs in $(\G\symdiff F)[S]$ is obtained when each of its connected
  components contains exactly these $6$ vertices and they form a
  bi-clique. In this case, $(\G\symdiff F)[S]$ contains $18t$ arcs. We
  conclude that $F$ contains at least another $r-18t$ deletion arcs for
  $S$. Together with the at least $6r(m-t)$ deletions between the $X_i$ and
  the elements of $S$, we have at least $6r(m-t)+r-18t=k\geq |F|$
  arc-deletions in $F$.  Since $|F|\le k$ by assumption, we obtain $|F|=k$.

  As argued above, the subgraph induced by $S$ is a disjoint union of $t$
  bi-cliques of $3$ $\White$ and $3$ $\Black$ vertices each.  Since all
  vertices of such a bi-clique have in-arcs from the same $X_i$ and these
  in-arcs are also in $\G$, we readily obtain the desired partition
  $\mathcal{C}'\subset\mathcal{C}$ of $\mathfrak{S}$.  In other words, the
  $C_i$ corresponding to the $X_i$ having out-arcs to vertices in $S$ in
  the edited graph $\G\symdiff F$ induce an exact cover of $\mathfrak{S}$.
\end{proof}

The set $F$ constructed in the proof of Thm.~\ref{thm:2col-NP} contains
only arc-deletions. This immediately implies
\begin{corollary}
  \label{cor:2col-NP-deletion}
  \PROBLEM{$2$-BMG Deletion} is NP-complete.
\end{corollary}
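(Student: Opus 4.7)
The plan is to show that the proof of Thm.~\ref{thm:2col-NP} already does the work, with essentially no new arguments required. Membership in NP is immediate: given a candidate set $F$, one checks in polynomial time that $F\subseteq E$, $|F|\le k$, and—via Cor.~\ref{cor:bmg-rec-polytime}—that $(\G-F,\sigma)$ is a 2-BMG. So only NP-hardness needs to be established.

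For NP-hardness, I would reuse verbatim the reduction from \PROBLEM{X3C} constructed in the proof of Thm.~\ref{thm:2col-NP}, together with the same bound $k=6r(m-t)+r-18t$. The forward direction of that proof builds the modification set $F$ in two phases: first the arcs $(x,s_w)$ and $(x,s_b)$ corresponding to triples $s\in C_i$ with $C_i\in\mathcal{C}\setminus\mathcal{C}'$, and then the arcs between vertices in $S$ belonging to distinct elements of $\mathcal{C}'$. Both phases only add arcs of $\G$ to $F$, so $F\subseteq E$ and $\G\symdiff F=\G-F$. Hence the constructed $F$ is a feasible solution for \PROBLEM{$2$-BMG Deletion} as well.

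For the converse direction, observe that any deletion solution $F\subseteq E$ with $|F|\le k$ such that $(\G-F,\sigma)$ is a BMG satisfies $\G\symdiff F=\G-F$, and therefore is automatically a feasible solution of \PROBLEM{$2$-BMG Editing} on the same instance. Applying the backward direction of the proof of Thm.~\ref{thm:2col-NP}, which uses only structural arguments about forbidden F1- and F3-subgraphs in $\G\symdiff F$, we extract an exact 3-cover $\mathcal{C}'\subseteq\mathcal{C}$ of $\mathfrak{S}$ from $F$.

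There is no real obstacle beyond checking that none of the sub-claims in the proof of Thm.~\ref{thm:2col-NP} tacitly relied on $F$ containing insertions; a quick inspection of Claims~\ref{clm:at-most-one-Xi}--\ref{clm:deletions-in-S} confirms that they only reference arcs and non-arcs in $\G\symdiff F$ and are thus insensitive to whether $F$ consists of deletions, insertions, or both. This verifies that the editing reduction specializes to a deletion reduction and establishes the corollary.
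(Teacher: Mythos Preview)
Your proof is correct and follows exactly the paper's approach: the paper simply notes that the set $F$ constructed in the proof of Thm.~\ref{thm:2col-NP} contains only arc-deletions and declares that this immediately implies the corollary. You have spelled out the NP-membership and the two directions of the reduction more carefully, but the underlying argument is the same.
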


In order to tackle the complexity of the \PROBLEM{$2$-BMG Completion}, we
follow a different approach and employ a reduction from the \PROBLEM{Chain
  Graph Completion} problem. To this end, we need some additional
notation. An undirected graph $U$ is bipartite if its vertex set can be
partitioned into two non-empty disjoint sets $P$ and $Q$ such that
$V(U)=P\cupdot Q$ and every edge has one endpoint in $P$ and the other
endpoint in $Q$.  We write $U=(P\cupdot Q,\undirected{E})$ to emphasize
that $\undirected{E}$ is a set of undirected edges and that $U$ is
bipartite. Furthermore, we write $N(x)$ also for the neighborhood of a
vertex $x$ in an undirected graph. Thus $U$ is bipartite if and only if
$x\in P$ implies $N(x)\subseteq Q$ and $x\in Q$ implies $N(x)\subseteq P$.
\begin{definition}(\cite[cf.][]{Natanzon:2001,Yannakakis:1981}) An
  undirected, bipartite graph $U=(P\cupdot Q,\undirected{E})$ is a
  \emph{chain graph} if there is an order $\lessdot$ on $P$ such that
  $u\lessdot v$ implies $N(u)\subseteq N(v)$.
\end{definition}

The \PROBLEM{Chain Graph Completion} problem consists of finding a
minimum-sized set of additional edges that converts an arbitrary
undirected, bipartite graph into a chain graph. More formally, its decision
version can be stated as follows:
\begin{problem}[\PROBLEM{Chain Graph Completion} (\PROBLEM{CGC})]\ \\
  \begin{tabular}{ll}
    \emph{Input:}    & An undirected, bipartite graph
                       $U=(P\cupdot Q,\undirected{E})$
                       and an integer $k$.\\
    \emph{Question:} & Is there a subset $\undirected{F}\subseteq \{\{p,q\}\mid
                       (p,q)\in P\times Q\} \setminus \undirected{E}$ such
                       that $|\undirected{F}|\leq k$ \\
                     & and $U'\coloneqq (P\cupdot Q,\undirected{E}\cup
                       \undirected{F})$ is a chain graph?
  \end{tabular}
\end{problem}
It is shown in \cite{Yannakakis:1981} that \PROBLEM{CGC} is NP-complete.
Following \cite{Yannakakis:1981}, we say that two edges $\{u,v\}$ and
$\{x,y\}$ in an undirected graph $U$ are \emph{independent} if $u,v,x,y$
are pairwise distinct and the subgraph $U[\{u,v,x,y\}]$ contains no
additional edges. We will need the following characterization of chain
graphs:
\begin{lemma}{\cite[Lemma~1]{Yannakakis:1981}}
  \label{lem:independent-edges}
  An undirected, bipartite graph $U=(P\cupdot Q,\undirected{E})$ is a chain
  graph if and only if it does not contain a pair of independent edges.
\end{lemma}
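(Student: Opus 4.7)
The plan is to prove both directions by contraposition, reducing the chain condition to a linear-inclusion structure on the neighborhoods $\{N(p) : p \in P\}$. The key technical observation is that bipartiteness automatically kills two of the four possible edges among any four-vertex candidate set, so "independent edges" really only constrains the two crossing pairs.

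For the forward direction I would assume $U$ is a chain graph with witnessing order $\lessdot$ on $P$ and suppose for contradiction that there exist independent edges $\{p_1,q_1\}$ and $\{p_2,q_2\}$ with $p_i\in P$ and $q_i\in Q$ (the bipartite structure forces this labelling since no edge lies inside $P$ or $Q$). Without loss of generality $p_1\lessdot p_2$, so $N(p_1)\subseteq N(p_2)$. Then $q_1\in N(p_1)\subseteq N(p_2)$, which contradicts the independence of the two edges (which demands $\{p_2,q_1\}\notin \undirected{E}$).

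For the converse I would show that, under the assumption that $U$ has no independent pair of edges, the family $\{N(p):p\in P\}$ is totally ordered by set inclusion. Pick any $p_1,p_2\in P$ and suppose neither inclusion $N(p_1)\subseteq N(p_2)$ nor $N(p_2)\subseteq N(p_1)$ holds. Then there are $q_1\in N(p_1)\setminus N(p_2)$ and $q_2\in N(p_2)\setminus N(p_1)$. The four vertices $p_1,p_2,q_1,q_2$ are pairwise distinct, bipartiteness forbids edges $\{p_1,p_2\}$ and $\{q_1,q_2\}$, and the choices of $q_1,q_2$ forbid the crossing edges $\{p_1,q_2\}$ and $\{p_2,q_1\}$. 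Hence $\{p_1,q_1\}$ and $\{p_2,q_2\}$ form an independent pair, a contradiction. Given the inclusion chain, I would then define $\lessdot$ on $P$ by setting $u\lessdot v$ whenever $N(u)\subsetneq N(v)$ and breaking ties between vertices with equal neighborhoods in any fixed arbitrary way; this is a total order with the required property.

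There is no serious obstacle here; the only subtlety is the tie-breaking for vertices with identical neighborhoods (including isolated vertices of $P$, whose empty neighborhood is trivially contained in every other $N(p)$), and the initial bookkeeping that in a bipartite graph the only candidates for independent edges are of the form $\{p_1,q_1\},\{p_2,q_2\}$ with $p_i\in P$ and $q_i\in Q$. Both directions then reduce to a single four-vertex inspection, which is exactly the content of the lemma.
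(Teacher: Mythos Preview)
Your proof is correct and is essentially the standard argument for this classical characterization. Note that the paper does not give its own proof of this lemma; it is simply cited from \cite{Yannakakis:1981}, so there is no in-paper proof to compare against. Your argument matches the usual one: the forward direction is a one-line neighborhood-inclusion check, and the converse shows that the family $\{N(p):p\in P\}$ is a chain under $\subseteq$, from which any linear extension (with arbitrary tie-breaking on equal neighborhoods) yields the required order $\lessdot$.
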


\begin{theorem}
  \label{thm:2col-completion-NP}
  \PROBLEM{$2$-BMG Completion} is NP-complete.
\end{theorem}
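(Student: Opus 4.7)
The proof splits into membership in NP and NP-hardness. Membership is immediate from Corollary~\ref{cor:bmg-rec-polytime}: any candidate completion set $F$ serves as a polynomial-size certificate whose validity (that $(\G+F,\sigma)$ is a $2$-BMG, respects loops, and avoids already present arcs) can be checked in polynomial time.

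For NP-hardness I plan to reduce from \PROBLEM{CGC}. Given an instance $(U=(P\cupdot Q,\undirected{E}),k)$, I construct in polynomial time a properly 2-colored digraph $(\G=(V,E),\sigma)$ and integer $k'=2k$ as follows. Let $V$ consist of $P\cupdot Q$ (with $P$ colored black and $Q$ white) together with a small gadget of auxiliary vertices of both colors whose purpose is (i) to make $\G$ sink-free and (ii) to ensure that every pair of independent edges $\{p_1,q_1\},\{p_2,q_2\}$ of $U+\undirected{F}$ is witnessed as an induced F3-subgraph of $\G+F$ via a common white (auxiliary) neighbor $y_3$ shared by $p_1$ and $p_2$. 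Every undirected edge $\{p,q\}\in\undirected{E}$ is replaced by the pair of arcs $(p,q),(q,p)$. The gadget is made ``heavy'' (by repeating its key structure polynomially many times), so that any BMG completion of cost $\leq 2k$ cannot afford to touch arcs incident to auxiliary vertices.

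The forward direction is constructive: given a chain completion $\undirected{F}$ of $U$ with $|\undirected{F}|\leq k$, set $F=\{(p,q),(q,p):\{p,q\}\in\undirected{F}\}$ so that $|F|=2|\undirected{F}|\leq k'$. Using a chain ordering $p_1\lessdot\ldots\lessdot p_{|P|}$ with nested neighborhoods in $U+\undirected{F}$, I would exhibit a leaf-colored tree $(T,\sigma)$ --- essentially a caterpillar reflecting this nesting, with the auxiliary vertices grafted near the root --- that explains $(\G+F,\sigma)$, and invoke Proposition~\ref{prop:explaining-tree} or Theorem~\ref{thm:newCharacterizatio} to conclude that $(\G+F,\sigma)$ is a 2-BMG.

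The backward direction is the main obstacle. Given any BMG completion $F$ with $|F|\leq k'$, I would first invoke the gadget's rigidity to conclude $F\subseteq (P\times Q)\cup (Q\times P)$, because any cost spent on auxiliary arcs would exceed the budget. Second, a symmetrization argument shows that $F$ may be assumed symmetric: exploiting the $P$--$Q$ symmetry of $\G$ and the gadget, any asymmetric arc $(p,q)\in F$ with $(q,p)\notin F\cup E$ can be paired with its reverse at no net cost (or omitted entirely), iterating until $F$ is symmetric. Setting $\undirected{F}\coloneqq\{\{p,q\}:(p,q)\in F\}$ yields $|\undirected{F}|\leq k$. Finally, the absence of an induced F3-subgraph in $(\G+F,\sigma)$ (Lemma~\ref{lem:forbidden-subgraphs}) combined with the gadget's common-neighbor structure forces $U+\undirected{F}$ to contain no pair of independent edges; by Lemma~\ref{lem:independent-edges} this is precisely the chain-graph condition. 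Pinning down the symmetrization step --- tracking how the addition or removal of a single asymmetric arc might reintroduce an F1-, F2-, or F3-pattern from Definition~\ref{def:forbidden-subgraphs} --- will be the most delicate part of the proof.
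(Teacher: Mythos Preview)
Your high-level plan---reducing from \PROBLEM{CGC} and detecting independent edge pairs as induced F3-graphs via a shared auxiliary neighbor---matches the paper's strategy. However, your choice to encode each undirected edge $\{p,q\}$ by \emph{both} arcs $(p,q)$ and $(q,p)$ breaks the forward direction. Take the chain graph on $P=\{p_1,p_2\}$, $Q=\{q_1,q_2\}$ with edges $\{p_1,q_1\},\{p_2,q_1\},\{p_2,q_2\}$ (so $N(p_1)\subsetneq N(p_2)$ and $\undirected{F}=\emptyset$ is optimal). Your encoding yields $(p_1,q_1),(q_1,p_2),(p_2,q_2)\in E$ and $(p_1,q_2)\notin E$, which is precisely an induced F2-graph on $\{p_1,p_2,q_1,q_2\}$ in the sense of Definition~\ref{def:forbidden-subgraphs}; no gadget vertices outside this four-element set can remove it. Hence $(\G,\sigma)$ is not a $2$-BMG even though $U$ is already a chain graph, and your reduction fails at $k=0$. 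More generally, any strict inclusion $N(p_1)\subsetneq N(p_2)$ in a chain graph produces such an F2-pattern under the bidirectional encoding, so the issue is intrinsic, not an artifact of a small example.

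The paper avoids this by encoding $\{p,q\}$ as the \emph{single} arc $(p,q)$ from $P$ to $Q$, so no $(q,p)$-arc ever exists and the F2-pattern above cannot arise. This one-directional encoding has two further benefits that dissolve the parts you flagged as delicate: (i) the budget stays $k$ rather than $2k$, so no symmetrization is needed; and (ii) the rigidity argument is structural rather than budget-based---the paper shows directly that any minimum completion $F$ must lie in $P\times Q$, because arcs elsewhere cannot destroy the F3-graph that a surviving independent pair in $U+\undirected{F'}$ (with $F'=F\cap(P\times Q)$) would induce. The gadget is accordingly tiny: one white hub $w$ adjacent to all of $P$, one copy $R$ of $Q$ paired bijectively with $Q$ by bidirectional arcs, and a single pair $b,w$ to make everything sink-free. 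No ``heavy'' repetition is required.
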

\begin{proof}
  Since BMGs can be recognized in polynomial time by
    Cor.~\ref{cor:bmg-rec-polytime}, \PROBLEM{$2$-BMG Completion} is
  clearly contained in NP.  To show NP-hardness, we use a reduction from
  \PROBLEM{CGC}.  Let $(U=(P\cupdot Q,\undirected{E}),k)$ be an instance of
  \PROBLEM{CGC} with vertex sets $P=\{p_1,\dots,p_{|P|}\}$ and
  $Q=\{q_1,\dots,q_{|Q|}\}$. To construct an instance $(\G=(V,E),\sigma,k)$
  of the \PROBLEM{$2$-BMG Completion} problem, we set
  $ V = P\cupdot Q \cupdot R \cupdot \{b\} \cupdot \{w\}$ where
  $R=\{r_1,\dots,r_{|Q|}\}$ is a copy of $Q$. The vertices are colored
  $\sigma(p_i)=\sigma(r_j)=\sigma(b)=\Black$ and
  $\sigma(q_i)=\sigma(w)=\White$. The arc set $E$ contains $(q_i,r_i)$ and
  $(r_i,q_i)$ for $1\le i \le |Q|$, $(p_i,w)$ for $1\le i\le |P|$, $(w,b)$
  and $(b,w)$, and $(p,q)$ for every $\{p,q\}\in\undirected{E}$. This
  construction is illustrated in
  Fig.~\ref{fig:completion-reduction}. Clearly, $(\G,\sigma)$ is properly
  colored, and the reduction can be computed in polynomial time.  Moreover,
  it is easy to verify that $(\G,\sigma)$ is sink-free by construction, and
  thus, any graph $(\G',\sigma)$ obtained from $(\G,\sigma)$ by adding arcs
  is also sink-free.  As above, we emphasize that the coloring $\sigma$
  remains unchanged in the completion process.

  \begin{figure}[ht]
    \begin{center}
      \includegraphics[width=0.65\linewidth]{./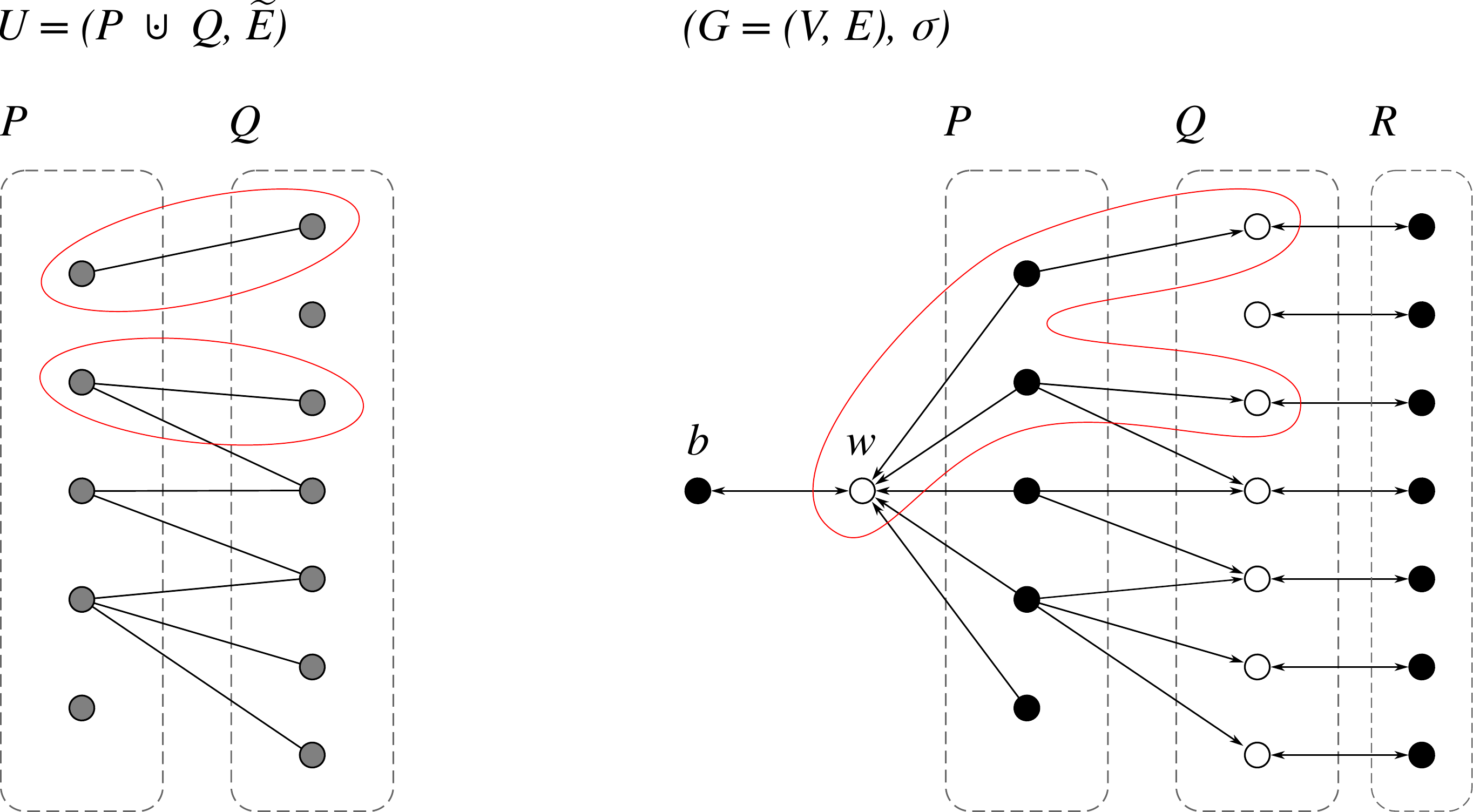}
    \end{center}
    \caption[]{Illustration of the reduction from \PROBLEM{CGC}. A pair of
      independent edges in $U$ and the corresponding induced F3-graph in
      $(\G,\sigma)$ are highlighted.}
    \label{fig:completion-reduction}
  \end{figure}

  A pair $(F,\undirected{F})$ with $F\subseteq P\times Q$ and an edge set
  $\undirected{F}=\{ \{p,q\} \mid (p,q)\in F\}$ will be called a
  \emph{completion pair} for the bipartite graph
  $U=(P\cupdot Q,\undirected{E})$ and the corresponding 2-colored digraph
  $(\G=(V,E),\sigma)$.
  \begin{claim}
    \label{clm:F-and-F-bar}
    If $(F,\undirected{F})$ is a completion pair, then
    $|F|=|\undirected{F}|$, $(p,q)\in F$ if and only if
    $\{p,q\}\in \undirected{F}$, and $(p,q)\in F\cup E$ if and only if
    $\{p,q\}\in \undirected{F}\cup\undirected{E}$.
  \end{claim}
  \begin{claim-proof}\item \emph{Proof:}
    First note that, by construction, $F$ contains only arcs from vertices
    in $P$ to vertices in $Q$. This together with the definition
    $\undirected{F}=\{ \{p,q\} \mid (p,q)\in F \}$ clearly implies
    $(p,q)\in F$ if and only if $\{p,q\}\in \undirected{F}$ and thus
    $|F|=|\undirected{F}|$.  By construction of our reduction we have
    $(p,q)\in E$ if and only if $\{p,q\}\in \undirected{E}$ and thus also
    $(p,q)\in E\cup F$ if and only if
    $\{p,q\}\in \undirected{E}\cup\undirected{F}$.  \hfill$\diamond$
  \end{claim-proof}

  Before we continue, observe that, for every pair of independent edges
  $\{p_1,q_1\},\{p_2,q_2\}\in\undirected{E}$, the subgraph of
    $(\G,\sigma)$ induced by $\{p_1,p_2,q_1,q_2,w\}$ is an F3-graph.
  Together with Lemmas~\ref{lem:forbidden-subgraphs}
  and~\ref{lem:independent-edges}, this implies that $(\G,\sigma)$ cannot
  be a BMG if $U$ is not a chain graph.  Eliminating these induced
  F3-graphs is closely connected to chain graph completion.  More precisely
  we will show:
  \begin{claim}
    \label{clm:bmg-implies-chain}
    Let $(F,\undirected{F})$ be a completion pair.  If $(\G+F,\sigma)$ is a
    BMG, then $U'=(P\cupdot Q,\undirected{E}\cup\undirected{F})$ is a chain
    graph.
  \end{claim}
  \begin{claim-proof}\item \emph{Proof:}
    Suppose that $(\G+F,\sigma)$ is a BMG and assume, for contradiction,
    that $U'=(P\cupdot Q,\undirected{E}\cup\undirected{F})$ is not a chain
    graph.  The latter and Lemma~\ref{lem:independent-edges} imply that
    $U'$ has two independent edges
    $\{p_1,q_1\},\{p_2,q_2\}\in\undirected{E}\cup\undirected{F}$.  Thus
    $\{p_1,q_2\},\{p_2,q_1\}\notin\undirected{E}\cup\undirected{F}$.  The
    latter arguments and Claim~\ref{clm:F-and-F-bar} imply that
    $(p_1,q_1),(p_2,q_2)\in E\cup F$ and
    $(p_1,q_2),(p_2,q_1)\notin E\cup F$.  Since moreover $(p_1,w),(p_2,w)$
    and $\sigma(p_1)=\sigma(p_2)\ne\sigma(q_1)=\sigma(q_2)=\sigma(w)$, it
    follows that the five distinct vertices $p_1,p_2,q_1,q_2,w$ induce an
    F3-graph in $(\G+F,\sigma)$.  By Lemma~\ref{lem:forbidden-subgraphs},
    $(\G+F,\sigma)$ cannot be a BMG; a contradiction.  \hfill$\diamond$
  \end{claim-proof}
  The converse is also true:
  \begin{claim}
    \label{clm:chain-implies-bmg}
    Let $(F,\undirected{F})$ be a completion pair for
    $U=(P\cupdot Q,\undirected{E})$, and suppose
    $U'=(P\cupdot Q,\undirected{E}\cup\undirected{F})$ is a chain graph.
    Then $(\G+F,\sigma)$ is a BMG.
  \end{claim}
  \begin{claim-proof}\item \emph{Proof:}
    By Thm.~\ref{thm:newCharacterizatio}, $(\G+F,\sigma)$ is a $2$-colored
    BMG if and and only if it is sink-free and does not contain an induced
    F1-, F2-, or F3-graph. Since $(\G,\sigma)$ is sink-free, this is also
    true for $(\G+F,\sigma)$. Thus it suffices to show that $(\G+F,\sigma)$
    does not contain an induced F1-, F2-, or F3-graph.

    Suppose that $(\G+F,\sigma)[u,u',v,v']$ is an induced F1-graph. Let $H$
    be a subgraph of $(\G+F,\sigma)[u,u',v,v']$ that is isomorphic to the
    \emph{essential} F1-graph, that is, the F1-graph as specified in
    Fig.~\ref{fig:forbidden-subgraphs} that contains only the solid-lined
    arcs and none of the dashed arcs while all other non-arcs remain
    non-arcs. In this case, there is an isomorphism $\varphi$ from $H$ to
    the essential F1-graph with vertex-labeling as in
    Fig.~\ref{fig:forbidden-subgraphs}.  Hence, $\varphi(u)$ corresponds to
    one of the vertices $x_1,x_2,y_1$ or $y_2$.  To simplify the
    presentation we will say that, in this case, ``$u$ plays the role of
    $\varphi(u)$ in an F1-graph''.

    The latter definition naturally extends to $F2$- and $F3$-graphs and we
    will use analogous language for $F2$- and $F3$-graphs.  Note, in the
    latter definition, it is not required that
    $\sigma(u) = \sigma(\varphi(u))$. Nevertheless, for
    $a,b\in \{u,u',v,v'\}$ with $\sigma(a)\neq \sigma(b)$ it always holds,
    by construction, that $\sigma(\varphi(a))\neq \sigma(\varphi(b))$.

    In the following, an in- or out-neighbor of a vertex is just called
    \emph{neighbor}.  A \emph{flank vertex} in an F1-, F2-, resp., F3-graph
    is a vertex that has only a single neighbor in the essential F1-, F2-,
    resp., F3-graph.  To be more precise, when referring to
    Fig.~\ref{fig:forbidden-subgraphs}, the flank vertices in an F1-graph
    and F2-graph are $x_1$ and $y_2$, while the flank vertices in an
    F3-graph are $y_1$ and $y_2$.

    Since $(F,\undirected{F})$ is a completion pair, by definition, $F$
    adds only arcs from $P$ to $Q$. Hence, each of the vertices in
    $R\cup\{b\}$ has a single neighbor in $(\G+F,\sigma)$ irrespective of
    the choice of $F$.  Therefore, if $u\in R\cup\{b\}$ is contained in an
    induced F1-, F2-, or F3-graph in $(\G,\sigma)$ or $(\G+F,\sigma)$, it
    must be a flank vertex. Observe first that $b$ can only play the role
    of $y_2$ in the F1- or F2-graph, since otherwise, the fact that $w$ is
    the single neighbor of $b$ in $(\G,\sigma)$ or $(\G+F,\sigma)$ implies
    that $w$ must play the role of $y_1$ in the F1- or F2-graph, which is
    not possible since $b$ is the single out-neighbor of $w$ and $F$ does
    not affect $w$. By similar arguments, none of the vertices in
    $R\cup\{b\}$ can play the role of $x_1$ in an F1- or F2-graph, or the
    role of $y_1$ or $y_2$ in an F3-graph in $(\G,\sigma)$ or
    $(\G+F,\sigma)$.  The vertex $w$ has only in-arcs from the elements in
    $P$ and from $b$.  Likewise, the vertices $q_i\in Q$ have only in-arcs
    from $P$ and from their corresponding vertex $r_i\in R$.  Therefore and
    since all elements in $P$ have only out-neighbors, it is an easy task
    to verify that none of the vertices in $R\cup\{b\}$ can play the role
    of $y_2$ in an F1- or F2-graph.  Thus none of the vertices in
    $R\cup\{b\}$ is part of an induced F1-, F2-, or F3-graph.

    Thus it suffices to investigate the subgraph $(\G',\sigma)$ of
    $(\G+F,\sigma)$ induced by $\{w\}\cup P \cup Q$ for the presence of
    induced F1-, F2-, and F3-graphs.  In $\G'$, none of the vertices in
    $\{w\}\cup Q$ have out-neighbors since $F\subseteq P\times Q$ does not
    affect $w$ and does not contain arcs from $q_i\in Q$ to any other
    vertex.  Thus, none of the vertices in $\{w\}\cup Q$ can play the role
    of $x_1$, $y_1$ or $y_2$ in an F1-, the role of $x_1$, $y_1$ or $x_2$
    in an F2-graph, or the role of $x_1$ or $x_2$ in an F3-graph. Since
    $\{w\}\cup Q$ has only in-arcs from $P$, and $P$ has no in-arcs in
    $\G'$, none of the vertices in $\{w\}\cup Q$ can play the role of $x_2$
    in an F1-graphs or the role of $y_2$ in an F2-graph.  Thus none of the
    vertices in $\{w\}\cup Q$ is part of an induced F1- or F2-graph. Hence,
    any induced F1- or F2-graph must be contained in $\G'[P]$. However, all
    vertices of $P$ are colored $\Black$, and hence $(\G'[P],\sigma_{|P})$
    cannot harbor an induced F1- or F2-graph.

    Suppose $(\G',\sigma)$ contains an induced F3-graph. Then there are
    five pairwise distinct vertices
    $x_1,x_2,y_1,y_2,y_3\in \{w\}\cup P \cup Q$ with coloring
    $\sigma(x_1)=\sigma(x_2)\ne\sigma(y_1)=\sigma(y_2)=\sigma(y_3)$
    satisfying $(x_1,y_1),(x_2,y_2),(x_1,y_3),(x_2,y_3)\in E\cup F$ and
    $(x_1,y_2),(x_2,y_1)\notin E\cup F$. Since $P$ has no in-arcs in
    $(\G',\sigma)$, it must hold that $y_1,y_2,y_3\notin P$.  Since
    $\sigma(\{w\}\cup Q)\neq \sigma(P)$ and $(\G',\sigma)$ is properly
    2-colored, we have $x_1,x_2\in P$. Since $w$ has in-arcs from all
    vertices in $P$ and $(x_1,y_2),(x_2,y_1)\notin E\cup F$, vertex $w$ can
    neither play the role of $y_1$ nor of $y_2$ in an F3-subgraph.  Thus,
    $y_1,y_2\in Q$.  Claim~\ref{clm:F-and-F-bar} therefore implies
    $\{x_1,y_1\},\{x_2,y_2\}\in \undirected{E}\cup\undirected{F}$ and
    $\{x_1,y_2\},\{x_2,y_1\}\notin \undirected{E}\cup\undirected{F}$.
    Hence, $U'$ contains a pair of independent edges.  By
    Lemma~\ref{lem:independent-edges}, it follows that $U'$ is not a chain
    graph; a contradiction.  \hfill$\diamond$
  \end{claim-proof}

  Together, Claims~\ref{clm:bmg-implies-chain}
  and~\ref{clm:chain-implies-bmg} imply that $(\G+F,\sigma)$ is a BMG if
  and only if $U'=(P\cupdot Q,\undirected{E}\cup\undirected{F})$ is a chain
  graph; see Fig.~\ref{fig:completion-tree} for an illustrative example.
  \begin{claim}
    \label{clm:minimum-arc-compl}
    If $F$ is a minimum-sized arc completion set such that $(\G+F,\sigma)$ is
    a BMG, then $F\subseteq P\times Q$.
  \end{claim}
  \begin{claim-proof}\item \emph{Proof:}
    Let $F$ be an arbitrary minimum-sized arc completion set, i.e.,
    $(\G+F,\sigma)$ is a BMG, and put $F'\coloneqq F\cap (P\times Q)$ and
    let $(F',\undirected{F'})$ be the corresponding completion pair.

    If $F'=F$, there is nothing to show.  Otherwise, we have $|F'|<|F|$ and
    minimality of $|F|$ implies that $(\G+F',\sigma)$ is not a BMG. By
    contraposition of Claim~\ref{clm:chain-implies-bmg}, we infer that
    $U'=(P\cupdot Q,\undirected{E}\cup\undirected{F'})$ is not a chain
    graph.  Hence, Lemma~\ref{lem:independent-edges} implies that $U'$
    contains a set of independent edges
    $\{p_1,q_1\},\{p_2,q_2\}\in \undirected{E}\cup\undirected{F'}$ and
    $\{p_1,q_2\},\{p_2,q_1\}\notin \undirected{E}\cup\undirected{F'}$.  By
    Claim~\ref{clm:F-and-F-bar}, it follows that
    $(p_1,q_1),(p_2,q_2)\in E\cup F'$ and
    $(p_1,q_2),(p_2,q_1)\notin E\cup F'$.  Since $F'\subset F$, we have
    $(p_1,q_1),(p_2,q_2)\in E\cup F$.  Furthermore, from
    $(p_1,q_2),(p_2,q_1)\in P\times Q$ and $F'= F\cap (P\times Q)$, we
    conclude that $(p_1,q_2),(p_2,q_1)\notin E\cup F$.  By construction of
    our reduction and since we only insert arcs, we have
    $(p_1,w),(p_2,w)\in E\cup F$.  Together with the coloring
    $\sigma(p_1)=\sigma(p_2)\ne\sigma(q_1)=\sigma(q_2)=\sigma(w)$, the
    latter arguments imply that $(\G+F,\sigma)$ contains an induced
    F3-graph.  By Lemma~\ref{lem:forbidden-subgraphs}, this contradicts
    that $(\G+F,\sigma)$ is a BMG.  \hfill$\diamond$
  \end{claim-proof}

  Now, let $(F,\undirected{F})$ be a completion pair such that
  $|\undirected{F}|\le k$ and $\undirected{F}$ is a minimum-sized edge
  completion set for $U$. Thus
  $U'=(P\cupdot Q,\undirected{E}\cup\undirected{F})$ is a chain graph.
  Hence, Claim~\ref{clm:chain-implies-bmg} implies that $(\G+F,\sigma)$ is
  a BMG.  Since $|F|=|\undirected{F}|\le k$, it follows that
  \PROBLEM{$2$-BMG Completion} with input $(\G,\sigma,k)$ has a yes-answer
  if \PROBLEM{CGC} with input $(U=(P\cupdot Q,\undirected{E}),k)$ has a
  yes-answer.

  Finally, let $F$ be a minimum-sized arc completion set for $(\G,\sigma)$,
  i.e.\ $(\G+F,\sigma)$ is a BMG, and assume $|F|\le k$. This and
  Claim~\ref{clm:minimum-arc-compl} implies $F\subseteq P\times Q$.  For
  the corresponding completion pair $(F,\undirected{F})$ we have
  $|\undirected{F}|=|F|\le k$. Moreover, since $(\G+F,\sigma)$ is a BMG,
  Claim \ref{clm:bmg-implies-chain} implies that
  $U=(P\cupdot Q,\undirected{E}\cup \undirected{F})$ is a chain graph.
  Therefore, \PROBLEM{CGC} with input $(U=(P\cupdot Q,\undirected{E}),k)$
  has a yes-answer if \PROBLEM{$2$-BMG Completion} with input
  $(\G,\sigma,k)$ has a yes-answer. This completes the proof.
\end{proof}

\begin{figure}[t]
  \begin{center}
    \includegraphics[width=0.85\linewidth]{./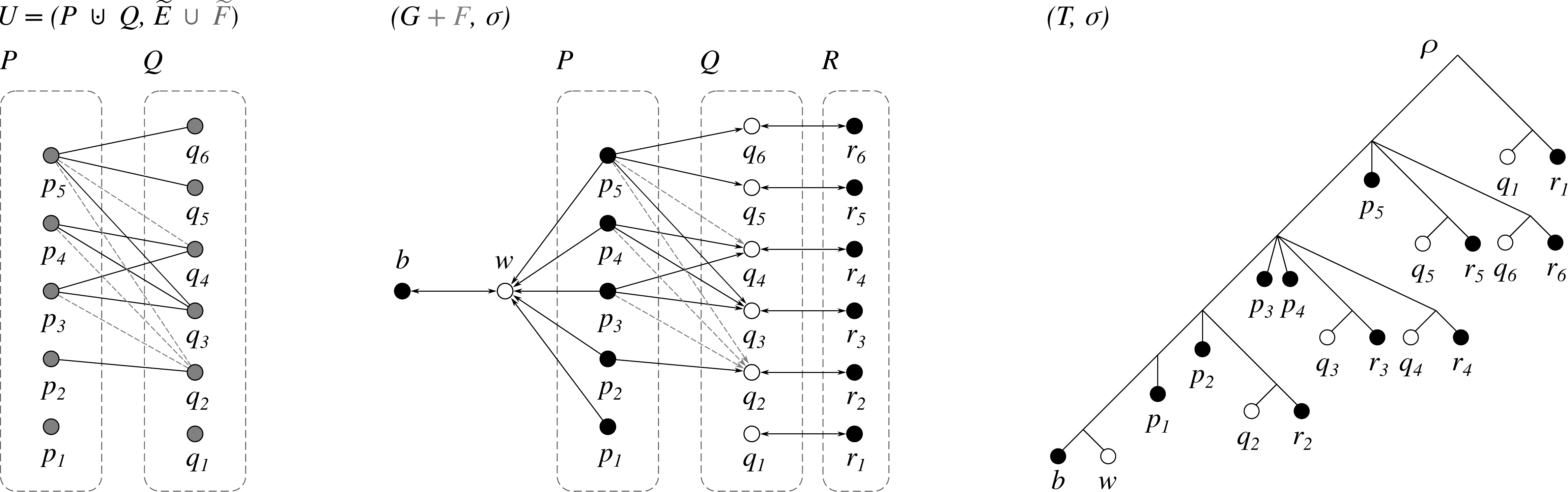}
  \end{center}
  \caption[]{An example solution for \PROBLEM{CGC}, resp., \PROBLEM{$2$-BMG
      Completion} as constructed in the proof of
    Thm.~\ref{thm:2col-completion-NP}.  A tree $(T,\sigma)$ that explains
    the resulting BMG is shown on the right.  Here, we have $k=4$ edge,
    resp., arc additions (indicated by dashed-gray lines) to obtain a
    chain-graph, resp., 2-BMG.  The indices of the vertices in
    $P=\{p_1,\dots,p_{|P|}\}$ are chosen w.r.t.\ the order $\lessdot$ on
    $P$ i.e.\ $i<j$ if and only if $p_i\lessdot p_j$ and thus,
    $N(p_i)\subseteq N(p_j)$.  In this example, we have
    $N(p_1)\cap Q=\emptyset$.  Moreover, the vertex $q_1$ has no neighbor
    in $P$.}
  \label{fig:completion-tree}
\end{figure}

  Not all (2-)BMGs can be explained by binary trees
  \cite{Schaller:2020z,Schaller:2020}.  BMGs $(\G,\sigma)$ for which a
  binary explaining tree exists have been termed \emph{binary-explainable}
  \cite{Schaller:2020z}. They are of practical significance because
  phylogenetic trees are often assumed to be binary by nature, with
  multifurcations arising in many cases as an artifact of insufficient data
  \cite{DeSalle:94,Hoelzer:94,Maddison:89}.

  We therefore consider the modified completion problem that, 
  given an arbitrary properly colored digraph $(\G,\sigma)$, aims to find a 
  binary-explainable BMG:
  
  \begin{problem}[\PROBLEM{$\ell$-BMG Completion restricted to 
  Binary-Explainable Graphs ($\ell$-BMG CBEG)}]\ \\
    \begin{tabular}{ll}
      \emph{Input:}    & A properly $\ell$-colored digraph $(\G =(V,E),\sigma)$
      and an integer $k$.\\
      \emph{Question:} & Is there a subset $F\subseteq V\times V \setminus
      (\{(v,v)\mid v\in V\} \cup E)$ such
      that $|F|\leq k$\\
      & and $(\G+ F,\sigma)$ is a binary-explainable $\ell$-BMG?
    \end{tabular}
  \end{problem}
  \noindent The corresponding editing and deletion problems will be
    called \PROBLEM{$\ell$-BMG EBEG} and \PROBLEM{$\ell$-BMG DBEG},
    respectively.

  Binary-explainable BMGs can be characterized in terms of a simple
  forbidden subgraph.
  \begin{definition}
    An \emph{hourglass} in a properly vertex-colored graph $(\G,\sigma)$,
    denoted by $[xy \hourglass x'y']$, is a subgraph $(\G[Q],\sigma_{|Q})$
    induced by a set of four pairwise distinct vertices
    $Q=\{x, x', y, y'\}\subseteq V(\G)$ such that (i)
    $\sigma(x)=\sigma(x')\ne\sigma(y)=\sigma(y')$, (ii) $(x,y),(y,x)$ and 
    $(x'y'),(y',x')$ are bidirectional arcs
    in $\G$, (iii) $(x,y'),(y,x')\in E(\G)$, and (iv)
    $(y',x),(x',y)\notin E(\G)$.
  \end{definition}
  \noindent
  Fig.~\ref{fig:hourglass} illustrates this definition.
  A graph $(\G,\sigma)$ is called \emph{hourglass-free} if it does not contain 
  an hourglass as an induced subgraph.
  \begin{proposition}{\cite[Prop.~8]{Schaller:2020}}
    \label{prop:binary-iff-hourglass-free}
    A BMG $(\G,\sigma)$ can be explained by a binary tree if and only if it is 
    hourglass-free.
  \end{proposition}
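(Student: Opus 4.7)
The plan is to prove the two directions separately; the forward direction reduces to a straightforward case analysis on the binary tree, while the reverse direction is substantial and forms the bulk of the proof. For the forward direction, I would proceed by contradiction: suppose $(T, \sigma)$ is a binary tree explaining $(\G, \sigma)$ and that $(\G, \sigma)$ contains an hourglass $[xy \hourglass x'y']$. Conditions (iii) and (iv) of the hourglass definition yield two informative triples, $x'y'|y$ (from $(x', y') \in E$ and $(x', y) \notin E$) and $y'x'|x$ (from $(y', x') \in E$ and $(y', x) \notin E$), both of which are displayed by $T$ by Lemma~\ref{lem:informative_triples}. Letting $u = \lca_T(x, x', y, y')$, binarity of $T$ gives $u$ exactly two children $v_1, v_2$, and the hourglass vertices distribute among their subtrees. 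I would then enumerate the three possible partitions: (i) $\{x, x'\}$ versus $\{y, y'\}$ forces $\lca_T(x', y') = u = \lca_T(x', y)$, contradicting the triple $x'y'|y$; (ii) $\{x, y\}$ versus $\{x', y'\}$ gives $\lca_T(x, y) \preceq_T v_1 \prec_T u = \lca_T(x, y')$, contradicting $(x, y') \in E$ since $\sigma(y) = \sigma(y')$; (iii) $\{x, y'\}$ versus $\{x', y\}$ is symmetric to (ii) and contradicts $(x, y) \in E$.

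For the reverse direction, I would start from an arbitrary tree $(T, \sigma)$ that explains $(\G, \sigma)$ and refine each multifurcation until the tree is binary, ensuring that every intermediate tree still explains $(\G, \sigma)$. At an inner vertex $u$ with children $v_1, \ldots, v_k$, $k \ge 3$, a refinement replaces $u$ by a rooted binary tree whose leaves correspond to $v_1, \ldots, v_k$. Such a refinement changes the BMG precisely when it forces $\lca_T(\ell, \ell') \prec_T \lca_T(\ell, \ell'')$ for leaves $\ell \prec_T v_i$, $\ell' \prec_T v_j$, $\ell'' \prec_T v_s$ with $\sigma(\ell') = \sigma(\ell'')$, destroying a best-match arc $(\ell, \ell'')$. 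My plan is to establish the contrapositive: if no binary refinement at $u$ preserves the BMG, then there exist subtrees $v_i, v_j$ and vertices $x \prec_T v_i$, $x' \prec_T v_j$ with $\sigma(x) = \sigma(x')$, together with $y, y'$ of the opposite color (one in each subtree), such that $\{x, x', y, y'\}$ induces an hourglass $[xy \hourglass x'y']$ in $(\G, \sigma)$, contradicting hourglass-freeness.

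The main obstacle will be the reverse direction, specifically pinpointing the exact local obstruction to refinement and matching it against the hourglass pattern. This requires carefully classifying, at each multifurcation $u$, which pairs of children can be merged under a new interior vertex without disturbing the BMG, and then showing that the resulting ``mergeability graph'' on $\{v_1, \ldots, v_k\}$ always admits a binary resolution whenever $(\G, \sigma)$ is hourglass-free. A cleaner alternative would be to work with the (conjecturally unique) least resolved tree of $(\G, \sigma)$ and argue that each of its non-binary inner vertices must be forced by an hourglass configuration; hourglass-freeness would then immediately yield binarity. Either route requires an induction or structural argument that I expect to be the technical heart of the proof.
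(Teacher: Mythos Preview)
This proposition is quoted from \cite[Prop.~8]{Schaller:2020} and is not proved in the present paper; it is used as a black box. There is therefore no in-paper argument to compare your attempt against.

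On the substance of your proposal, the forward direction has a small but genuine gap. After setting $u=\lca_T(x,x',y,y')$ and letting $v_1,v_2$ be the two children of $u$, you enumerate only the three $2$--$2$ partitions of $\{x,x',y,y'\}$, but $1$--$3$ splits are equally possible (e.g.\ $x\prec_T v_1$ and $x',y,y'\prec_T v_2$). Each of the four missing cases also yields a contradiction --- in the example just given, $\lca_T(y,x')\preceq_T v_2\prec_T u=\lca_T(y,x)$ contradicts $(y,x)\in E$ --- so the repair is routine. A tidier argument avoids the case split altogether: the forbidden triples coming from $(x,y),(x,y')\in E$ and from $(y,x),(y,x')\in E$ force $\lca_T(x,y)=\lca_T(x,y')=\lca_T(y,x')\eqqcolon u$, and the informative triple $x'y'|y$ gives $\lca_T(x',y')\prec_T u$; then $x$, $y$, and the pair $\{x',y'\}$ lie below three pairwise distinct children of $u$, which is impossible in a binary tree.

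Your reverse direction is, as you acknowledge, only an outline. You have correctly located the difficulty --- showing that at every multifurcation some binary refinement preserves the BMG, and that the obstruction to this is precisely an hourglass --- but the ``mergeability graph'' lemma you allude to is not supplied, and it is exactly where the work lies. Your alternative via a least-resolved tree is closer in spirit to how such results are typically established, but again is not carried out here.
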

  
  \begin{figure}[t]
    \begin{center}
      \includegraphics[width=0.35\linewidth]{./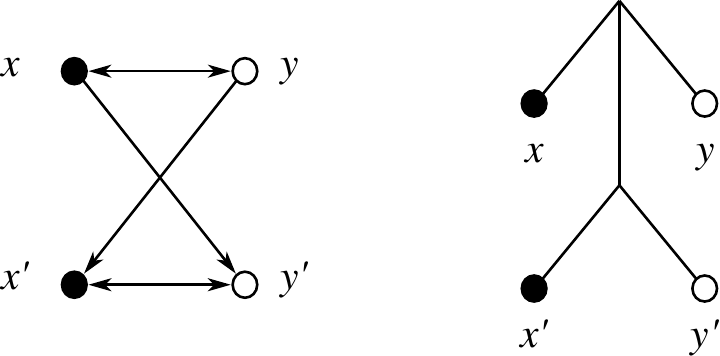}
    \end{center}
    \caption[]{An hourglass (left) and the unique non-binary tree (right) 
    that explains it.}
    \label{fig:hourglass}
  \end{figure}

  The reduction employed in the proof of Thm.~\ref{thm:2col-completion-NP} can 
  be adapted to show that the \PROBLEM{$2$-BMG CBEG} problem is hard.
  
  \begin{corollary}
    \label{cor:2col-completion-BEG-NP}
    \PROBLEM{$2$-BMG CBEG} is NP-complete.
  \end{corollary}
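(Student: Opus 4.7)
The plan is to adapt the reduction from \PROBLEM{CGC} used in the proof of Thm.~\ref{thm:2col-completion-NP}. Since binary-explainable BMGs form a subclass of BMGs that can be recognized in polynomial time via Cor.~\ref{cor:bmg-rec-polytime} together with the hourglass test in Prop.~\ref{prop:binary-iff-hourglass-free}, the problem \PROBLEM{$2$-BMG CBEG} lies in NP. For NP-hardness, I would reuse the same reduction $(U,k) \mapsto ((\G,\sigma),k)$ verbatim and argue that it already reduces \PROBLEM{CGC} to \PROBLEM{$2$-BMG CBEG}.

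The crucial structural observation is that the constructed graph $(\G,\sigma)$ is hourglass-free and stays hourglass-free under every completion $F \subseteq P \times Q$. To see this, I would first list the bidirectional pairs of arcs in $\G$: these are exactly the pairs $(q_i,r_i),(r_i,q_i)$ for $1 \le i \le |Q|$ together with the single pair $(b,w),(w,b)$. Since any $F \subseteq P \times Q$ only adds arcs from $\Black$ vertices in $P$ to $\White$ vertices in $Q$, and the reverse arcs $(q,p)$ are neither in $E$ nor in $F$, no new bidirectional pairs appear after completion. A short case analysis over which two of the listed pairs could form an hourglass then shows that the additional arcs $(x,y')$ and $(y,x')$ required by the hourglass pattern would have to leave one of $r_i$, $q_i$, $b$, $w$, all of whose out-neighborhoods consist of the single respective partner vertex. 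No such arcs exist in $\G$, and none are added by any $F \subseteq P \times Q$. Hence $(\G + F, \sigma)$ is hourglass-free, and Prop.~\ref{prop:binary-iff-hourglass-free} upgrades ``BMG'' to ``binary-explainable BMG'' for every such $F$.

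With this in hand, both directions of the equivalence follow from the claims already proved in Thm.~\ref{thm:2col-completion-NP}. For the forward direction, I would take the completion $F$ built from a chain graph completion $\undirected{F}$ exactly as before; then $F \subseteq P \times Q$ with $|F| = |\undirected{F}| \le k$, Claim~\ref{clm:chain-implies-bmg} yields that $(\G + F, \sigma)$ is a BMG, and the hourglass-free observation upgrades this to binary-explainable. For the converse, suppose $F$ is a minimum-sized completion set with $|F| \le k$ such that $(\G + F, \sigma)$ is a binary-explainable BMG, and set $F' \coloneqq F \cap (P \times Q)$. If $F' \subsetneq F$, minimality forces $(\G + F', \sigma)$ not to be a binary-explainable BMG; the hourglass-free observation excludes the case ``BMG but not binary-explainable'' for $F' \subseteq P \times Q$, so $(\G + F', \sigma)$ is not even a BMG. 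The F3-graph construction from the proof of Claim~\ref{clm:minimum-arc-compl} then produces an induced F3-graph in $(\G + F, \sigma)$, contradicting that $(\G + F, \sigma)$ is a BMG. Hence $F = F' \subseteq P \times Q$, so $(F, \undirected{F})$ is a completion pair with $|\undirected{F}| = |F| \le k$, and Claim~\ref{clm:bmg-implies-chain} delivers the desired chain graph completion.

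I expect the hourglass-freeness case analysis to be the main technical step. It is finite and straightforward, but one has to handle each combination of bidirectional pairs separately (two of type $(q_i,r_i)$, one of each type, and two copies of $(b,w)$, which is immediately ruled out by the distinct-vertex requirement) and verify that the restricted out-neighborhoods of $r_i$, $q_i$, $b$, and $w$ remain unchanged under completion. Once this is secured, the remainder of the argument is a modular reuse of Claims~\ref{clm:F-and-F-bar}--\ref{clm:minimum-arc-compl}.
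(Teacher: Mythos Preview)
Your proposal is correct and follows essentially the same approach as the paper: reuse the \PROBLEM{CGC} reduction from Thm.~\ref{thm:2col-completion-NP} and show that for any $F\subseteq P\times Q$ the graph $(\G+F,\sigma)$ is hourglass-free, so Prop.~\ref{prop:binary-iff-hourglass-free} upgrades ``BMG'' to ``binary-explainable BMG''. The paper's hourglass argument is slightly slicker---it observes once that in an hourglass $[xy\hourglass x'y']$ both $x$ and $y$ must have at least two out-neighbors, and then notes that every bidirectional arc in $(\G+F,\sigma)$ is incident to some vertex in $R\cup\{b\}$, all of which have exactly one out-neighbor---whereas you run an explicit case analysis over the possible bidirectional pairs; but the content is the same. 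Your converse direction is also more elaborate than needed: since every binary-explainable BMG is in particular a BMG, a yes-answer for \PROBLEM{$2$-BMG CBEG} with budget $k$ immediately gives a yes-answer for \PROBLEM{$2$-BMG Completion} with budget $k$, and Thm.~\ref{thm:2col-completion-NP} then yields the chain-graph completion without re-deriving Claim~\ref{clm:minimum-arc-compl}.
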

  \begin{proof}
    As shown in \cite[Cor.~6]{Schaller:2020} and
    \cite[Cor.~3.6]{Schaller:2020z}, binary-explainable BMGs can be
    recognized in polynomial time.  Therefore, \PROBLEM{$2$-BMG CBEG} is
    contained in the class NP.
    
    To show hardness of the problem, we use the same reduction from
    \PROBLEM{CGC} and the same arguments as in the proof of
    Thm.~\ref{thm:2col-completion-NP}.  In addition, we observe that the
    hourglass $[xy \hourglass x'y']$ contains the  bidirectional arcs
    $(x,y)$ and $(y,x)$ and each of the two vertices $x$ and $y$ has two
    out-neighbors, and thus, also at least two out-neighbors in every
    graph that contains the hourglass as an induced subgraph.
  
    We have to show that \PROBLEM{CGC} with input
    $(U=(P\cupdot Q,\undirected{E}),k)$ has a yes-answer if and only if
    \PROBLEM{$2$-BMG CBEG} with input $(\G,\sigma,k)$ as constructed in the
    proof of Thm.~\ref{thm:2col-completion-NP} has a yes-answer.  Recall
    that by Claims~\ref{clm:bmg-implies-chain}
    and~\ref{clm:chain-implies-bmg}, $(\G+F,\sigma)$ is a BMG if and only
    if $U'=(P\cupdot Q,\undirected{E}\cup\undirected{F})$ is a chain graph,
    where $(F,\undirected{F})$ is a completion pair.  Moreover, by
    Claim~\ref{clm:minimum-arc-compl}, every minimum-sized arc completion
    set $F$ for which $(\G+F,\sigma)$ is a BMG satisfies
    $F\subseteq P\times Q$.  Therefore, we can again argue via minimal
    completion pairs $(F,\undirected{F})$ to conclude that, both in the
    \emph{if}- and in the \emph{only-if}-direction, we have a $2$-BMG
    $(\G+F,\sigma)$ with $F\subseteq P\times Q$, i.e., we only inserted
    arcs from $P$ to $Q$.  Using Fig.~\ref{fig:completion-tree}, it is now
    easy to verify that every bidirectional pair of arcs in $(\G+F,\sigma)$
    is either incident to the vertex $b$ or to one of the vertices in $R$.
    Moreover, every vertex in $R\cup\{b\}$ has exactly one out-neighbor.
    The latter two arguments together with the observation that hourglasses
    require bidirectional arcs $(x,y),(y,x)$ such that both $x$ and $y$
    have at least two out-neighbors imply that $(\G+F,\sigma)$ must be
    hourglass-free.  Therefore, $(\G+F,\sigma)$ is binary-explainable by
    Prop.~\ref{prop:binary-iff-hourglass-free}, which completes the proof.
\end{proof}

	Results analogous to Cor.~\ref{cor:2col-completion-BEG-NP}
  cannot be derived as easily for \PROBLEM{$2$-BMG EBEG} and
  \PROBLEM{$2$-BMG DBEG}.  The reason is that the graph $(\G,\sigma)$
  (as well as $(G\symdiff F,\sigma)$) constructed in the proof of
  Thm.~\ref{thm:2col-NP} contains a large number of induced hourglasses. In
  particular, for every $i\in\{1,\dots,m\}$, we have that every black
  vertex $a\in X_i$, white vertex $b\in X_i$, black vertex $a'\in Y_i$ and
  white vertex $b'\in Y_i$ induce an hourglass $[ab \hourglass a'b']$. Even
  though we suspect these problems to be hard as well, a different
  reduction will be required to prove this conjecture.

\section{Complexity of $\ell$-BMG modification problems}
\label{sect:NPcomplete-all}

We now turn to the graph modification problems for an arbitrary number $\ell$
of colors.  The proof of the next theorem follows the same strategy of
adding hub-vertices as in \cite{Hellmuth:2020a}.
\begin{theorem}
  \label{thm:ell-col-NPc}
  \PROBLEM{$\ell$-BMG Deletion}, \PROBLEM{$\ell$-BMG Completion}, and
  \PROBLEM{$\ell$-BMG Editing} are
  NP-complete for all $\ell\geq 2$.
\end{theorem}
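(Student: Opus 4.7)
The plan is to reduce the $\ell$-color modification problems to the $2$-color cases established in Thm.~\ref{thm:2col-NP}, Cor.~\ref{cor:2col-NP-deletion}, and Thm.~\ref{thm:2col-completion-NP}, via the classical hub-vertex padding trick. Membership in NP is immediate in all three cases: given a candidate modification set $F$, Cor.~\ref{cor:bmg-rec-polytime} lets us verify in polynomial time that the modified graph is an $\ell$-BMG and that $|F|\leq k$. It therefore suffices to establish NP-hardness for fixed $\ell\geq 3$.

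The construction is as follows. Given an instance $(\G,\sigma,k)$ of the $2$-color problem with colors $c_1,c_2$, introduce $\ell-2$ fresh vertices $h_3,\dots,h_\ell$, colored with $\ell-2$ fresh colors $c_3,\dots,c_\ell$ (one per vertex), and add bidirectional arcs between every $h_i$ and every other vertex, including the other hub-vertices. This yields a properly $\ell$-colored digraph $(\G',\sigma')$ in which each $h_i$ is a hub-vertex and the unique vertex of its color. The reduction is polynomial and leaves the budget $k$ untouched.

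For the forward direction, any feasible $F\subseteq V(\G)\times V(\G)$ for the $2$-BMG instance remains a feasible modification set for the $\ell$-BMG instance, because $F$ contains no hub-arcs; for the completion variant this also uses the fact that every potential hub-arc is already present in $\G'$. I would show that $(\G'\symdiff F,\sigma')$ is an $\ell$-BMG by lifting any tree $(T,\sigma)$ that explains $(\G\symdiff F,\sigma)$ to $(T',\sigma')$, attaching each $h_i$ as a new leaf-child of the root $\rho_T$. The restriction of $T'$ to the leaf set $V(\G)$ equals $T$, so Obs.~\ref{obs:color-restriction} ensures that the BMG of $(T',\sigma')$ agrees with $(\G\symdiff F,\sigma)$ on the non-hub vertices; the equality $\lca_{T'}(h_i,v)=\rho_{T'}$ for every $v\neq h_i$ produces all outgoing hub-arcs $(h_i,v)$, while the incoming arcs $(v,h_i)$ hold vacuously because $h_i$ is the sole vertex of color $c_i$. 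Hence $(\G'\symdiff F,\sigma')=\G(T',\sigma')$ is an $\ell$-BMG.

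For the converse, given a feasible $F'$ with $|F'|\leq k$ for the $\ell$-BMG instance, I would set $F'_0 \coloneqq F'\cap(V(\G)\times V(\G))$ and apply Obs.~\ref{obs:color-restriction} to the $\ell$-BMG $(\G'\symdiff F',\sigma')$ with the restricted color set $\{c_1,c_2\}$: the induced subgraph on $V(\G)$, which equals $\G\symdiff F'_0$, is then a $2$-BMG, and $|F'_0|\leq|F'|\leq k$; analogous conclusions hold for deletion and completion, with $F'_0=F'$ automatic in the completion case because every hub-arc of $\G'$ is already present. The one point requiring a brief verification, and the closest thing to an obstacle, concerns the deletion variant: no feasible $F'$ can delete an in-arc $(v,h_i)$ with $v\neq h_i$, for otherwise $v$ would be left without an out-arc of color $c_i$, contradicting Obs.~\ref{fact:allcolors-out} and hence, via Thm.~\ref{thm:BMG-charac-via-R-F}, the assumption that $(\G'-F',\sigma')$ is an $\ell$-BMG.
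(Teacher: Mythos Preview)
Your proposal is correct and follows essentially the same approach as the paper: reduce from the $2$-color problems by adjoining $\ell-2$ hub-vertices of fresh colors, lift an explaining tree by attaching the hubs as children of the root for the forward direction, and use Obs.~\ref{obs:color-restriction} together with $F'_0\coloneqq F'\cap(V(\G)\times V(\G))$ for the converse. Your final remark on the deletion variant (that no feasible $F'$ deletes an arc $(v,h_i)$) is true but unnecessary---the paper simply uses $|F'_0|\le|F'|\le k$ without ever arguing that $F'_0=F'$, and the same shortcut works for all three variants.
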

\begin{proof}
  BMGs can be recognized in polynomial time by 
  Cor.~\ref{cor:bmg-rec-polytime}
  and thus, all three problems are contained in the class
  NP.  Let $(\G=(V,E),\sigma)$ be a properly colored digraph with $\ell$
  colors.  Thm.~\ref{thm:2col-NP}, Cor.~\ref{cor:2col-NP-deletion} and
  Thm.~\ref{thm:2col-completion-NP} state NP-completeness for the case of
  $\ell=2$ colors.  Thus assume $\ell\ge3$ in the following.

  By slight abuse of notation, we collectively refer to the three
  problems \PROBLEM{$\ell$-BMG Deletion}, \PROBLEM{$\ell$-BMG
  Completion}, and \PROBLEM{$\ell$-BMG Editing} simply as
  \PROBLEM{$\ell$-BMG Modification}.  Correspondingly, we write
  $(\G\odot F, \sigma)$ and distinguish the three problems by the
  modification operation $\odot\in\{- , +, \symdiff \}$, where $\odot=-$,
  $\odot=+$ and $\odot=\symdiff$ specifies that $F$ is a deletion-,
  completion, or edit set, respectively.

  We use reduction from \PROBLEM{$2$-BMG Modification}. To this end, let
  $(\G_2=(V_2,E_2),\sigma_2,k)$ be an instance of one of the latter three
  problems.  To obtain a properly colored graph
  $(\G_{\ell}=(V_{\ell},E_{\ell}),\sigma_{\ell})$ with $\ell$ colors, we
  add to $\G_2$ a set $V_H$ of $\ell-2$ new vertices with pairwise
  distinct colors that also do not share any colors with the vertices in
  $(\G_2,\sigma_2)$.  Moreover, we add arcs such that every $h\in V_H$
  becomes a hub-vertex.  Note that $V_{\ell}=V_2\cupdot V_H$,
  $\G_{\ell}[V_2]=\G_2$, and $(\sigma_{\ell})_{|V_2}=\sigma_2$.
  Furthermore, $V_2$ is a subset of $V_{\ell}$ satisfying the condition in
  Obs.~\ref{obs:color-restriction}, i.e.,
  $V_2= \bigcup_{s\in S_2} V_{\ell}[s]$ for the color set $S_2$ in
  $(\G_2,\sigma_2)$. Clearly, the reduction can be performed in
  polynomial time.  We proceed by showing that an instance
  $(\G_2,\sigma_2,k)$ of the respective \PROBLEM{$2$-BMG Modification}
  problem has a yes-answer if and only if the corresponding instance
  $(\G_{\ell},\sigma_{\ell},k)$ of \PROBLEM{$\ell$-BMG Modification} has a
  yes-answer.

  Suppose that \PROBLEM{$2$-BMG Modification} with input
  $(\G_2,\sigma_2,k)$ has a yes-answer. Then there is an arc set
  $F\subseteq V_2\times V_2 \setminus \{(v,v)\mid v\in V_2\}$ with
  $|F|\le k$ such that $(\G_2\odot F, \sigma_2)$ is a BMG.  Let
  $(T_2,\sigma_2)$ be a tree with root $\rho$ explaining
  $(\G_2\odot F, \sigma_2)$.  Now take $(T_2,\sigma_2)$ and add the
    vertices in $V_H$ as leaves of the root $\rho$ and color these leaves
    as in $(\G_{\ell},\sigma_{\ell})$, to obtain the tree
    $(T_{\ell},\sigma_{\ell})$. 
  By construction, we have $L(T_{\ell})=V_{\ell}=V_2\cup V_H$ and
  $T_2=(T_{\ell})_{|V_2}$, where $(T_{\ell})_{|V_2}$ is the restriction of
  $T_{\ell}$ to the leaf set $V_2$.  The latter arguments together with
  Obs.~\ref{obs:color-restriction} imply that
  $(\G(T_{\ell},\sigma_{\ell})[V_2],(\sigma_{\ell})_{|V_2})=
  \G((T_{\ell})_{|V_2},(\sigma_{\ell})_{|V_2}) =\G(T_2,\sigma_2)=(\G_2\odot
  F, \sigma_2)$.

  Let $h\in V_H$ be arbitrary.  Since $h$ is the only vertex of its color,
  $(x,h)$ is an arc in $\G(T_{\ell},\sigma_{\ell})$ for every
  $x\in V_{\ell}\setminus \{h\}$.  Since $h$ is a child of the root, we
  have moreover $\lca_{T_{\ell}}(x,h)=\rho$, and thus, $(h,x)$ is an arc in
  $\G(T_{\ell},\sigma_{\ell})$ for every $x\in V_{\ell}\setminus \{h\}$.
  The latter two arguments imply that $h$ is a hub-vertex in
  $\G(T_{\ell},\sigma_{\ell})$.  Since $F$ is not incident to any vertex in
  $V_{\ell}\setminus V_2=V_H$ and each vertex $h\in V_H$ is a hub-vertex in
  $(\G_{\ell},\sigma_{\ell})$ and in $\G(T_{\ell},\sigma_{\ell})$, we
  conclude that
  $\G(T_{\ell},\sigma_{\ell})=(\G_{\ell}\odot F,\sigma_{\ell})$.  Hence,
  $(\G_{\ell}\odot F,\sigma_{\ell})$ is a BMG and the corresponding
  \PROBLEM{$\ell$-BMG Modification} problem with input
  $(\G_{\ell},\sigma_{\ell},k)$ has a yes-answer.

  For the converse, suppose that \PROBLEM{$\ell$-BMG Modification}
  with input $(\G_{\ell},\sigma_{\ell},k)$ has a
  yes-answer.  Thus, there is an arc set
  $F\subseteq V_{\ell}\times V_{\ell} \setminus \{(v,v)\mid v\in
  V_{\ell}\}$ with $|F|\le k$ such that $(\G_{\ell}\odot F, \sigma_{\ell})$
  is a BMG.  Let $(T_{\ell},\sigma_{\ell})$ be a tree explaining
  $(\G_{\ell}\odot F, \sigma_{\ell})$.  Let $F'\subseteq F$ be the subset
  of arc modifications $(x,y)$ for which $x,y\in V_2$.  Thus, it holds
  $|F'|\le|F|\le k$. By construction,
  $(\G_{\ell}\odot F)[V_2]=\G_{\ell}[V_2]\odot F'$.  Moreover, by
  Obs.~\ref{obs:color-restriction}, we have
  $(\G(T_{\ell},\sigma_{\ell})[V_2],(\sigma_{\ell})_{|V_2})=
  \G((T_{\ell})_{|V_2},(\sigma_{\ell})_{|V_2})$.  In summary, we obtain
  $(\G_2\odot F',\sigma_2)=(\G_{\ell}[V_2]\odot F',\sigma_2)=
  ((\G_{\ell}\odot F)[V_2],(\sigma_{\ell})_{|V_2}) =
  (\G(T_{\ell},\sigma_{\ell})[V_2],(\sigma_{\ell})_{|V_2}) =
  \G((T_{\ell})_{|V_2},(\sigma_{\ell})_{|V_2})$.  Thus,
  $(\G_2\odot F',\sigma_2)$ is a BMG.  Together with $|F'|\le k$, this
  implies that \PROBLEM{$2$-BMG Modification} with input
	$(\G_2,\sigma_2,k)$ has a yes-answer.
\end{proof}

As in the $2$-colored case, we can reuse the reduction to show that $\ell$-BMG 
CBEG is NP-complete.
\begin{corollary}
  \label{cor:ell-col-CBEG-NPc}
  \PROBLEM{$\ell$-BMG CBEG} is NP-complete for all $\ell\geq 2$.
\end{corollary}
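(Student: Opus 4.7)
\emph{Proof plan.} Membership in NP is immediate: binary-explainable BMGs can be recognized in polynomial time \cite{Schaller:2020,Schaller:2020z}. The base case $\ell=2$ is Cor.~\ref{cor:2col-completion-BEG-NP}, so I would focus on $\ell\ge 3$, reusing the hub-vertex reduction from the proof of Thm.~\ref{thm:ell-col-NPc}, now specialized to the completion operation $\odot=+$ and reducing from \PROBLEM{$2$-BMG CBEG}.

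First I would recall the construction: given an instance $(\G_2,\sigma_2,k)$ of \PROBLEM{$2$-BMG CBEG}, build $(\G_\ell,\sigma_\ell,k)$ as in Thm.~\ref{thm:ell-col-NPc} by adding a set $V_H$ of $\ell-2$ new vertices carrying pairwise distinct new colors and turning each $h\in V_H$ into a hub-vertex. Because every $h\in V_H$ is already bidirectionally adjacent to every other vertex of $\G_\ell$, no arc incident to a hub is a non-arc; hence every admissible completion set $F$ for $(\G_\ell,\sigma_\ell)$ automatically satisfies $F\subseteq V_2\times V_2\setminus E_2$. Combined with the arguments in the proof of Thm.~\ref{thm:ell-col-NPc}, this already shows that $(\G_\ell+F,\sigma_\ell)$ is a BMG if and only if $(\G_2+F,\sigma_2)$ is a BMG, with the same modification budget $k$.

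The remaining step is to transfer binary-explainability. By Prop.~\ref{prop:binary-iff-hourglass-free}, a BMG is binary-explainable if and only if it is hourglass-free. The key observation is that every hourglass $[xy\hourglass x'y']$ involves two pairs of equally-colored vertices, namely $\sigma(x)=\sigma(x')$ and $\sigma(y)=\sigma(y')$, whereas each $h\in V_H$ is, by construction, the unique vertex of its color in $(\G_\ell,\sigma_\ell)$ and hence in $(\G_\ell+F,\sigma_\ell)$. Therefore no hub-vertex can appear in any induced hourglass, so every induced hourglass of $(\G_\ell+F,\sigma_\ell)$ lies entirely within $V_2$; by Obs.~\ref{obs:color-restriction}, the subgraph induced on $V_2$ coincides with $(\G_2+F,\sigma_2)$. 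Thus $(\G_\ell+F,\sigma_\ell)$ is hourglass-free iff $(\G_2+F,\sigma_2)$ is, and combined with the BMG equivalence this yields the desired biconditional for binary-explainability. The only obstacle, and it is a modest one, is the verification that no completion set for the $\ell$-colored instance can involve hub vertices; this is handled directly by the hub property, and no reshuffling of the explaining tree (as would be required in the deletion/editing setting) is needed here.
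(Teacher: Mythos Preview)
Your proposal is correct and follows essentially the same route as the paper: membership in NP via polynomial recognition, base case $\ell=2$ from Cor.~\ref{cor:2col-completion-BEG-NP}, the hub-vertex reduction of Thm.~\ref{thm:ell-col-NPc} for $\ell\ge 3$, and the transfer of binary-explainability via Prop.~\ref{prop:binary-iff-hourglass-free} together with the observation that hub vertices, being unique in their color, cannot occur in an induced hourglass. Your additional remark that every completion set for $(\G_\ell,\sigma_\ell)$ is automatically contained in $V_2\times V_2\setminus E_2$ (since all arcs incident to hubs are already present) is a clean simplification over the paper's passage to $F'\subseteq F$ in the converse direction; the reference to Obs.~\ref{obs:color-restriction} for $(\G_\ell+F)[V_2]=\G_2+F$ is not quite the right citation---that equality holds directly by construction---but this does not affect correctness.
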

\begin{proof}
  As shown in \cite[Cor.~6]{Schaller:2020} and
  \cite[Cor.~3.6]{Schaller:2020z}, binary-explainable BMGs can be
  recognized in polynomial time.  Therefore, \PROBLEM{$\ell$-BMG CBEG} is
  contained in the class NP. Cor.~\ref{cor:2col-completion-BEG-NP} states
  NP-completeness for the case $\ell=2$.  Thus, it remains to show
  NP-hardness for the case $\ell\ge 3$.
  
  We use a reduction from \PROBLEM{$2$-BMG CBEG} and the same
  polynomial-time construction as in the proof of
  Thm.~\ref{thm:ell-col-NPc}, i.e., we construct an $\ell$-colored graph
  $(\G_{\ell}=(V_{\ell},E_{\ell}),\sigma_{\ell})$ from a $2$-colored graph
  $(\G_{2}=(V_{2},E_{2}),\sigma_{2})$ by adding a hub-vertex of $\ell-2$
  pairwise distinct new colors. Note that we only consider arc insertions
  here, i.e., we have $\odot=+$.  We proceed by showing that an instance
  $(\G_2,\sigma_2,k)$ of the respective \PROBLEM{$2$-BMG CBEG} problem has
  a yes-answer if and only if the corresponding instance
  $(\G_{\ell},\sigma_{\ell},k)$ of \PROBLEM{$\ell$-BMG CBEG} has a
  yes-answer.
  
  First suppose that \PROBLEM{$2$-BMG CBEG} with input $(\G_2,\sigma_2,k)$
  has a solution
  $F\subseteq V_2\times V_2 \setminus \{(v,v)\mid v\in V_2\}$ with
  $|F|\le k$ such that $(\G_2+ F, \sigma_2)$ is a \emph{binary-explainable}
  BMG. By Prop.~\ref{prop:binary-iff-hourglass-free}, $(\G_2+F, \sigma_2)$
  is hourglass-free.  Since $(\G_2+F, \sigma_2)$ is in particular a BMG, we
  can use the same arguments as in the proof of Thm.~\ref{thm:ell-col-NPc}
  to conclude that $(\G_{\ell}+F, \sigma_{\ell})$ is a BMG.  Now observe
  that an hourglass contains two vertices of each of its two colors.
  Therefore and since every vertex in $V_H=V_{\ell}\setminus V_{2}$ is the
  only vertex of its color, none the vertices in $V_H$ is part of an
  induced subgraph of $(\G_{\ell}+ F, \sigma_{\ell})$ that is an hourglass.
  Hence, all hourglasses of $(\G_{\ell}+F, \sigma_{\ell})$ must be part of
  the induced subgraph $((\G_{\ell}+F)[V_2],(\sigma_{\ell})_{|V_2})$.  This
  together with the facts that
  $((\G_{\ell}+F)[V_2],(\sigma_{\ell})_{|V_2})=(\G_2+F, \sigma_2)$ and
  $(\G_2+F, \sigma_2)$ is hourglass-free implies that
  $(\G_{\ell}+F, \sigma_{\ell})$ must also be hourglass-free.  By
  Prop.~\ref{prop:binary-iff-hourglass-free}, the BMG
  $(\G_{\ell}+F, \sigma_{\ell})$ is binary-explainable, and hence,
  \PROBLEM{$\ell$-BMG CBEG} with input $(\G_{\ell},\sigma_{\ell},k)$ has a
  yes-answer.

  For the converse, suppose that \PROBLEM{$\ell$-BMG CBEG} with input
  $(\G_{\ell},\sigma_{\ell},k)$ has a solution
  $F\subseteq V_{\ell}\times V_{\ell} \setminus \{(v,v)\mid v\in
  V_{\ell}\}$ with $|F|\le k$ such that $(\G_{\ell}+ F, \sigma_{\ell})$ is
  a binary-explainable BMG.  By Prop.~\ref{prop:binary-iff-hourglass-free},
  $(\G_{\ell}+ F, \sigma_{\ell})$ is hourglass-free.  As before, let
  $F'\subseteq F$ be the subset of arc modifications $(x,y)$ for which
  $x,y\in V_2$.  By the same arguments as in the proof of
  Thm.~\ref{thm:ell-col-NPc}, we have $|F'|\le |F|\le k$ and
  $(\G_2+F',\sigma_2)= ((\G_{\ell}+F)[V_2],(\sigma_{\ell})_{|V_2})$ is a
  BMG.  In particular, $(\G_2+F',\sigma_2)$ is an induced subgraph of
  $(\G_{\ell}+ F, \sigma_{\ell})$, and thus, hourglass-free.  Together with
  Prop.~\ref{prop:binary-iff-hourglass-free}, the latter arguments imply
  that \PROBLEM{$2$-BMG CBEG} with input $(\G_{2},\sigma_{2},k)$ has a
  yes-answer.
\end{proof}

It remains an open question whether an analogous result holds for the
corresponding editing and deletion problems restricted to
binary-explainable graphs.  However, by the same arguments as in the proof 
of Cor.~\ref{cor:ell-col-CBEG-NPc}, we have
\begin{remark}
  If \PROBLEM{$2$-BMG EBEG} is NP-complete, then 
  \PROBLEM{$\ell$-BMG EBEG} is NP-complete for all $\ell\geq 2$.
  If \PROBLEM{$2$-BMG DBEG} is NP-complete, then 
  \PROBLEM{$\ell$-BMG DBEG} is NP-complete for all $\ell\geq 2$.
\end{remark}

\section{ILP formulation of $\ell$-BMG modification problems}
\label{sect:ILP}

Hard graph editing problems can often be solved with integer linear
programming (ILP) on practically relevant instances. It is of interest,
therefore, to consider an ILP formulation of the BMG deletion, completion
and editing problems considered above.  As input, we are given an
$\ell$-colored digraph $(\G=(V,E),\sigma)$.  We encode its arcs by the
binary constants
\begin{align*}
E_{xy}=1 \text{ if and only if } (x,y)\in E.
\end{align*}
for all pairs $(x,y)\in V\times V$, $x\neq y$. The vertex coloring
$\sigma$ is represented by the binary constant
\begin{align*}
\varsigma_{y,s}=1  \text{ if and only if } \sigma(y)=s
\end{align*}
The arc set of the modified graph $(\G^*,\sigma)$ is encoded by binary
variables $\epsilon_{xy}$, that is, $\epsilon_{xy} = 1$ if and only if
$(x,y)$ is arc in the modified graph $\G^*$.  The aim is to minimize the
number of edit operations, and thus, the symmetric difference
between the respective arc sets. This is represented by the objective
function
\begin{align}
  &\min \sum_{(x,y)\in V\times V} (1-\epsilon_{xy})E_{xy} +
    \sum_{(x,y)\in V\times V} (1-E_{xy})\epsilon_{xy}.
\end{align}
The same objective function can also be used for the BMG completion
and BMG deletion problem.  To ensure that only arcs between vertices of
distinct colors exist, we add the constraints
\begin{align}
  \epsilon_{xy}=0 \text{ for all } (x,y)\in V\times V \text{ with }
  \sigma(x)=\sigma(y).\label{eq:proper-color}
\end{align}
For the BMG completion problem, the arc set $E$ must be contained in the
modified arc set. Hence, we add
\begin{align}
  E_{xy}\leq \epsilon_{xy} \text{ for all } (x,y)\in V\times V. \label{ilp:add}
\end{align}
In this case, Equ.~\eqref{ilp:add} ensures that $\epsilon_{xy}=1$
if $E_{xy}=1$ and thus, $(x,y)$ remains an arc in the modified graph.
In contrast, for the BMG deletion problem,
it is not allowed to add arcs and thus, we use
\begin{align}
  \epsilon_{xy} \leq E_{xy} \text{ for all } (x,y)\in V\times V. \label{ilp:del}
\end{align}
In this case, Equ.~\eqref{ilp:del} ensures that $\epsilon_{xy}=0$ if
$E_{xy}=0$ and thus, $(x,y)$ does not become an arc in the modified graph.
For the BMG editing problem, we neither need Constraint \eqref{ilp:add} nor
\eqref{ilp:del}.
Both characterization in Thm.~\ref{thm:newCharacterizatio} 
and~\ref{thm:BMG-charac-via-R-F}, respectively, require that
  $(\G^*,\sigma)$ is sf-colored.  Eq.~\ref{eq:proper-color} already ensures
  a proper coloring and thus, it remains to make sure that each vertex has
  at least one out-neighbor of every other color. This property translates
to the constraint
\begin{align}
  &\sum_{y\neq x} \epsilon_{xy}\cdot\varsigma_{y,s} 
  >0\label{ilp:all-colors}
\end{align}
for all $s\ne\sigma(x)$.

The $O(|V|^2)$ variables and $O(|V|^2)$ constraints introduced above are 
  relevant for
  $\ell$-BMG modification problems for an arbitrary $\ell$.  In the
  following two subsections, we present additional constraints and
  variables that are sufficient for the cases $\ell=2$ and $\ell\ge 2$,
  respectively.

\subsection*{$2$-BMG modification problems}

By Thm.~\ref{thm:newCharacterizatio}, a properly 2-colored graph is a
  BMG if and only if it is sink-free and does not contain an induced F1-,
  F2-, or F3-graph.  Equ.~\eqref{ilp:all-colors} already guarantees
that $(\G^*,\sigma)$ is sink-free.  Hence it suffices to add
constraints that exclude induced F1-, F2-, and F3-graphs.  For every
ordered four-tuple $(x_1, x_2, y_1, y_2) \in V^4$ with pairwise distinct
$x_1, x_2, y_1, y_2$ and
$\sigma(x_1)=\sigma(x_2)\neq \sigma(y_1)=\sigma(y_2)$, we require
\begin{align}
\text{(F1)\qquad}
& \epsilon_{x_1y_1}+\epsilon_{y_1x_2} + \epsilon_{y_2x_2} +
(1-\epsilon_{x_1y_2})+ (1-\epsilon_{y_2x_1})\le 4 \text{ and }\label{ilp:F1}\\
\text{(F2)\qquad}
& \epsilon_{x_1y_1}+ \epsilon_{y_1x_2} + \epsilon_{x_2y_2} +
(1-\epsilon_{x_1y_2})\le 3.\label{ilp:F2}
\intertext{In addition, for every ordered
  five-tuple $(x_1, x_2, y_1, y_2,y_3) \in V^5$ with pairwise distinct
  $x_1, x_2, y_1, y_2,y_3$ and
  $\sigma(x_1)=\sigma(x_2)\neq \sigma(y_1)=\sigma(y_2)=\sigma(y_3)$, we
  enforce}
\text{(F3)\qquad}
& \epsilon_{x_1y_1}+\epsilon_{x_1y_3} + \epsilon_{x_2y_2}+ \epsilon_{x_2y_3}
+ (1-\epsilon_{x_1y_2}) + (1-\epsilon_{x_2y_1})\le 5.\label{ilp:F3}
\end{align}

By construction, we still have $O(|V|^2)$ variables but $O(|V|^5)$
  constraints.  We note that the $2$-colored case is handled correctly by
  the ILP formulation for the general $\ell$-colored case given in the next
  section.  However, the additional variables required for $\ell>2$ are not
  needed here.  We note in passing that, accordingly, we observed a
significant speedup when compared to the application of the general
formulation to $2$-colored graphs in a cursory simulation.

\subsection*{General $\ell$-BMG modification problems}

For the general $\ell$-colored case, we drop 
Equations~(\ref{ilp:F1})-(\ref{ilp:F3}), and instead rely on
Thm.~\ref{thm:BMG-charac-via-R-F}, which requires that the pair
  $(\mathscr{R}(\G^*,\sigma), \mathscr{F}(\G^*,\sigma))$ is compatible.
To implement this constraint, we follow the approach of \cite{chang2011ilp}
and \cite{Hellmuth:15}. Note that we make no distinction between the two
triples $ba|c$ and $ab|c$.  In order to avoid superfluous variables and
symmetry conditions connecting them, we assume that the first two indices
in triple variables are ordered. Thus there are three triple variables
$t_{ab|c}$, $t_{ac|b}$ and $t_{bc|a}$ for any three distinct
$a, b, c \in V$.  We add constraints such that $t_{ab|c}=1$ if $ab|c$ is an
informative triple (cf.\ Def.~\ref{def:informative_triples} and Lemma
\ref{lem:informative_triples}) and $t_{ab|c}=0$ if $ab|c$ is a
  forbidden triple (cf.\ Def.~\ref{def:forbidden_triples} and
  Lemma~\ref{lem:forb_triples}).  Hence, we add
\begin{align}
  & \epsilon_{xy}+(1-\epsilon_{xy'}) - t_{xy|y'} \leq 1 \text{ and}
  \label{ilp:info-t}\\
  & \epsilon_{xy}+\epsilon_{xy'} + t_{xy|y'} \leq 2 \label{ilp:forb-t}
\end{align}
for all ordered $(x,y,y')\in V^3$ with three pairwise distinct vertices
$x,y,y'$ and $\sigma(x)\neq \sigma(y)=\sigma(y')$.  Equ.~\eqref{ilp:info-t}
ensures that if $(x,y)$ is an arc ($\epsilon_{xy}=1$) and $(x,y')$ is not
an arc ($\epsilon_{xy'}=0$) in the edited graph, then $t_{xy|y'}=1$.
To obtain a BMG, we must ensure that there is a tree that displays all
  triples in $\mathscr{R}(\G^*,\sigma)$ and none of the triples in
  $\mathscr{F}(\G^*,\sigma)$.

  A phylogenetic tree $T$ is uniquely determined by its sets of clusters
  $\mathscr{C}(T) \coloneqq \{L(T(v)) \mid v\in V(T)\}$ \cite{Semple:03}.
  Thus, it is possible to reconstruct $T$ by building the clusters induced
  by the informative triples while avoiding that forbidden triples are
  displayed. The set of clusters $\mathscr{C}(T)$ forms a hierarchy, that
  is, for all $p,q\in \mathscr{C}(T)$ it holds that
  $p\cap q\in\{\emptyset, p,q\}$. It is easy to see that, in order to
  recover $T$ from $\mathscr{C}(T)$, it suffices to take into account only
  the non-trivial clusters $p\in \mathscr{C}(T)$ with $|p|\neq 1$ and
  $p\neq V(T)$.  The number of non-trivial clusters of $T$ is bounded by
  $L(T)-2$ (cf.\ \cite[Lemma~1]{Hellmuth:15}), where $L(T)=V$, i.e., the
  vertex set of the $\ell$-colored digraph $(\G=(V,E),\sigma)$ given as
  input.  In order to translate the condition that $\mathscr{C}(T)$ forms
  hierarchy into the language of ILPs, we follow
  \cite{chang2011ilp,Hellmuth:15}.  Let $M$ be a binary
  $|V| \times (|V|-2)$ matrix with entries $M(x,p) = 1$ iff vertex $x\in V$
  is contained in cluster $p$.  Each cluster $p$ of the tree $T_M$ encoded
  by $M$, which is represented by the $p$-th column of $M$, corresponds to
  an inner vertex $v_p$ in $T$ so that
  $L(T(v_p )) = \{x \mid x\in V, M(x,p)=1\}$.  In the following, we
  identify column $p$ with the corresponding cluster $L(T(v_p))$.

  We next ensure that all informative triples and none of the forbidden
  triples $ab|c$ are displayed by $T_M$. This is case if and only if there
  exists an inner vertex $v_p$ such that $a,b \in L(T(v_p))$ and
  $c\notin L(T(v_p))$ for every informative triple and no such vertex
  exists for any forbidden triple.  Therefore, we define, for all ordered
  three-tuples $(a,b,c) \in V^3$ and all $p\in\{1,\dots,|V|-2\}$, the
  binary variable $m((ab|c),p)$ and set $m((ab|c), p) = 1$ iff
  $M(a, p) = M (b, p) =1$ and $M (c, p) = 0$, i.e., iff the cluster $p$
  contains $a$ and $b$ but not $c$. The latter can be achieved by adding,
  for all these variables, the constraint
  \begin{align}
    & 0 \leq - 3 \cdot m((ab|c), p) + M (a, p) + M (b, p) + (1 - M (c, p)) \leq 
    2.
    \label{ilp:triples-clusters}
  \end{align}
  Full enumeration of all possible values that can be assigned to
  $M(a, p)$, $M (b, p)$ and $M(c, p)$ shows that $m((ab|c), p) =1$ if and
  only if $M(a, p) = M (b, p) =1$ and $M (c, p) = 0$.

  For every informative triple $ab|c$ there must be at least one column $p$
  for which $m((ab|c), p) = 1$ and for each forbidden triple it must be
  ensured that $m((ab|c), p) = 0$ for all $p\in\{1,\dots,|V|-2\}$. This is
  achieved by adding
  \begin{align}
    & t_{ab|c} \leq \sum_{p=1}^{|V|-2}  m((ab|c), p) \leq (|V|-2)\cdot t_{ab|c}
  \end{align} 
  for all ordered $(a,b,c)\in V^3$ with three pairwise distinct vertices
  $a,b,c$ and $\sigma(a)\neq \sigma(b)=\sigma(c)$.  If $t_{ab|c}=1$, then
  $m((ab|c), p)=1$ for at least one $p$ and if $t_{ab|c}=0$ then,
  $\sum_{p=1}^{|V|-2} m((ab|c), p) \leq 0$ implies that all $m((ab|c), p)$
  are put to $0$.

  Finally, we must ensure that the matrix $M$ indeed encodes the hierarchy
  of a tree.  This is the case if all clusters $p$ and $q$ are compatible,
  i.e., if $p\cap q\in\{p, q,\emptyset\}$.  Equivalently, two clusters $p$
  and $q$ are incompatible if there are vertices $a$, $b$ and $c$ such that
  $a\in p\setminus q$, $b\in q\setminus p$ and $c\in p\cap q$, which is
  represented by the ``gametes'' $(M(a,p),M(a,q))=(1,0)$,
  $(M(b,p),M(b,q))=(0,1)$ and $(M(c,p),M(c,q))=(1,1)$.  We avoid such
  incompatible clusters by using the so-called three-gamete condition which
  is described e.g.\ in \cite{Gusfield:97} or \cite[SI]{Hellmuth:15}.  To
  this end, we add for each of the three gametes
  $(\Gamma,\Lambda)\in\{(0,1),(1,0),(1,1)\}$ the binary variables
  $C(p,q,\Gamma\Lambda)$ for every pair of columns $p\ne q$.  Furthermore,
  we add the constraints
  \begin{align}
    & C(p,q,01) \ge - M(a,p) + M(a,q)\\
    & C(p,q,10) \ge M(a,p) - M(a,q)\\
    & C(p,q,11) \ge M(a,p) + M(a,q) - 1
  \end{align}
  for every pair of columns $p\ne q$ and every $a\in V$.  This ensures that
  $C(p,q,\Gamma\Lambda)=1$ whenever $M(a,p)=\Gamma$ and $M(a,q)=\Lambda$
  holds for at least one $a\in V$.  Finally, we add the constraint
  \begin{align}
    & C_{p,q,01} + C_{p,q,10} + C_{p,q,11} \le 2
  \end{align}
  for every pair of columns $p\ne q$, in order to ensure the compatibility
  of clusters $p$ and $q$.
  
  In total, this ILP formulation requires $O(|V|^4)$ variables and
  $O(|V|^4)$ constraints where the most expensive part stems from the
  variables $m((ab|c), p)$ and their corresponding constraints (cf.\
  Equ.~\ref{ilp:triples-clusters}).

\section{Concluding Remarks}

We have shown here that arc modification problems for BMGs are
NP-complete. This is not necessarily an obstacle for using BMG editing in
practical workflows -- after all, the computational problems in
phylogenetics all involve several NP-complete steps, including
\textsc{Multiple Sequence Alignment} \cite{Elias:06} and the
\textsc{Maximum Parsimony Tree} \cite{Graham:82} or \textsc{Maximum
  Likelihood Tree} problems \cite{Chor:06}. Nevertheless, highly efficient
and accurate heuristics have been devised for these problems, often
adjusted to the peculiarities of real-life data, so that the computational
phylogenetics have become a routine task in bioinformatics. As a starting
point to tackling BMG editing in practice, we gave novel characterizations
of BMGs that made it possible to introduce an ILP formulation that should
be workable at least for moderate-size instances.  Without further
  optimization, instances with $\ell\ge 3$ colors are very demanding and
  already problems with ten vertices may take a few hours on a desktop
  system. The $2$-colored version (requiring only $O(|V|^2)$ variables) on
  the other hand, handles instances with 20 vertices in about a minute. 
	We tested both versions using IBM ILOG
  CPLEX\texttrademark{ }Optimizer 12.10 and Gurobi Optimizer 9.0, and
  applied them to randomly disturbed ($2$-)BMGs.

We note in passing that \PROBLEM{2-BMG Deletion} and \PROBLEM{2-BMG
  Completion} can be shown to be fixed-parameter tractable (with the number
$k$ of edits as parameter) provided that the input graph is sink-free. To
see this, observe that sink-free 2-colored graphs are BMGs if and only if
they do not contain induced F1-, F2-, and F3-subgraphs (cf.\
Thm.~\ref{thm:newCharacterizatio}).  The FPT result follows directly from
the observation that all such subgraphs are of fixed size and only a fixed
number of arc deletions (resp., additions) are possible. In the case of
\PROBLEM{2-BMG Deletion}, only those arc deletions are allowed that do not
produce sinks in $G$. Clearly, graphs remain sink-free under arc addition.
It remains unclear whether \PROBLEM{2-BMG Editing} is also FPT for
sink-free graphs. One difficulty is that arc deletions may result in a
sink-vertex which then needs to be resolved by subsequent arc additions. It
also remains an open question for future research whether the BMG
modification problems for (not necessarily sink-free) $\ell$-colored graphs
are also FPT. We suspect that this is not the case for $\ell\ge3$, where
the characterization also requires consistency of the set of informative
triples. Since removal of a triple from $\mathscr{R}(\G,\sigma)$ requires
the insertion or deletion of an arc, it seems difficult to narrow down the
editing candidates to a constant-size set. Indeed, \textsc{Maximum Triple
  Inconsistency} is not FPT when parametrized by the number $k$ of triples
to be excluded \cite{Byrka:10}. On the other hand, the special case of
\textsc{Dense Maximum Triple Inconsistency} is FPT \cite{Guillemot:13}. The
set of informative triples $\mathscr{R}(\G,\sigma)$, however, is usually
far from being dense.

For larger-scale practical applications, we expect that heuristic
algorithms will need to be developed. An interesting starting point is the
observation that in many examples some of the (non-)arcs in forbidden
subgraphs cannot be modified. This phenomenon of unambiguously identifiable
(non-)arcs will be the topic of ongoing work.

\section*{Acknowledgments}

We thank Nicolas Wieseke for stimulating discussions.  This work was funded
in part by the German Research Foundation (DFG).

\clearpage
\bibliographystyle{plain}
\bibliography{preprint2-bmg-edit}

%\bibliographystyle{elsarticle-harv}
%\bibliography{references}

\clearpage
\appendix

\section{All forbidden subgraphs in 2-colored BMGs}
\begin{figure}[ht]
	\begin{center}
  \includegraphics[width=0.97\linewidth, angle=90]{./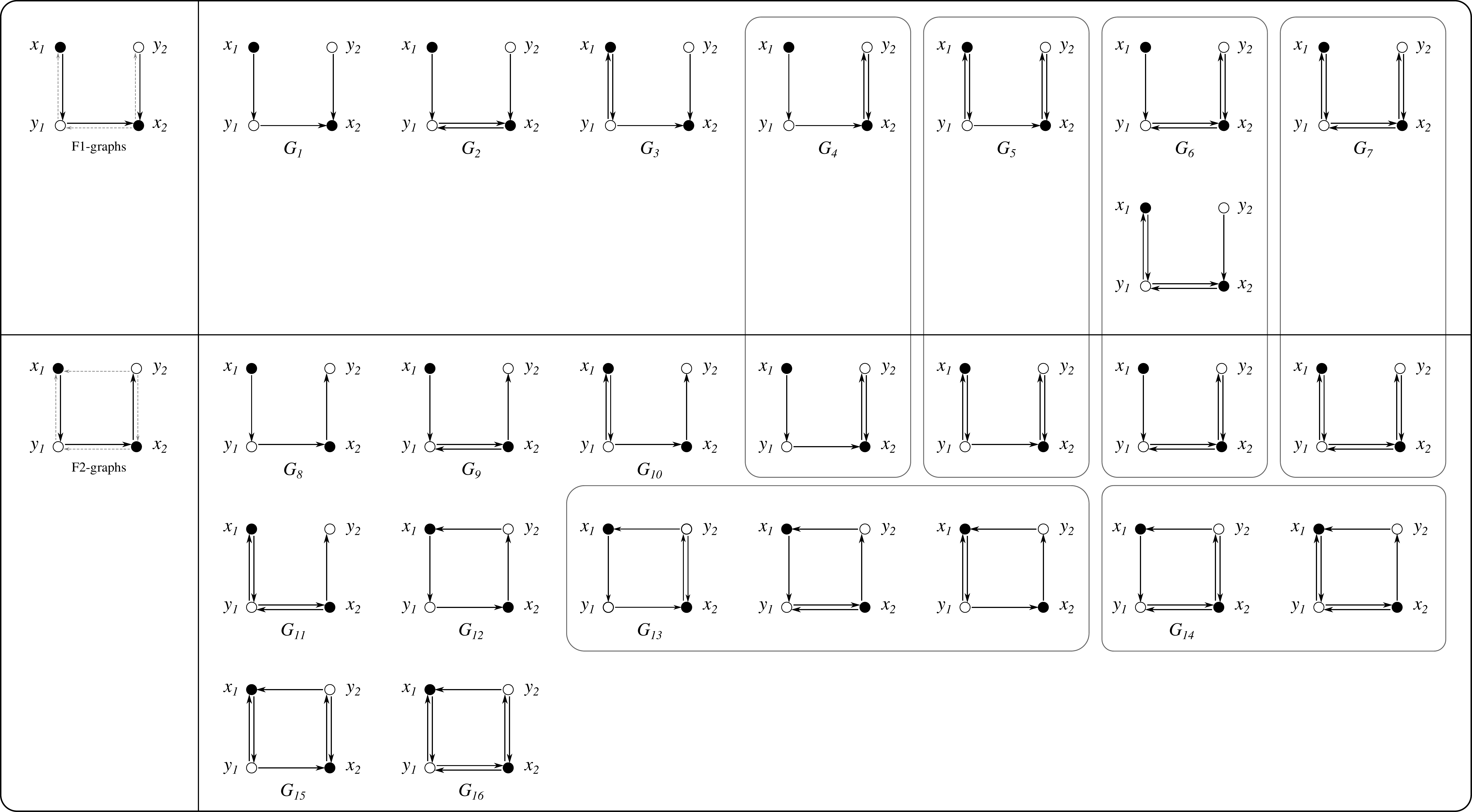}
	\end{center}
  \caption[]{All F1-graphs and F2-graphs. Isomorphism classes are indicated
    by the boxes, and labeled according to
    Fig.~\ref{fig:all-17}.}
  \label{fig:all-F1-F2}
\end{figure}

\begin{figure}[ht]
  \includegraphics[width=\linewidth]{./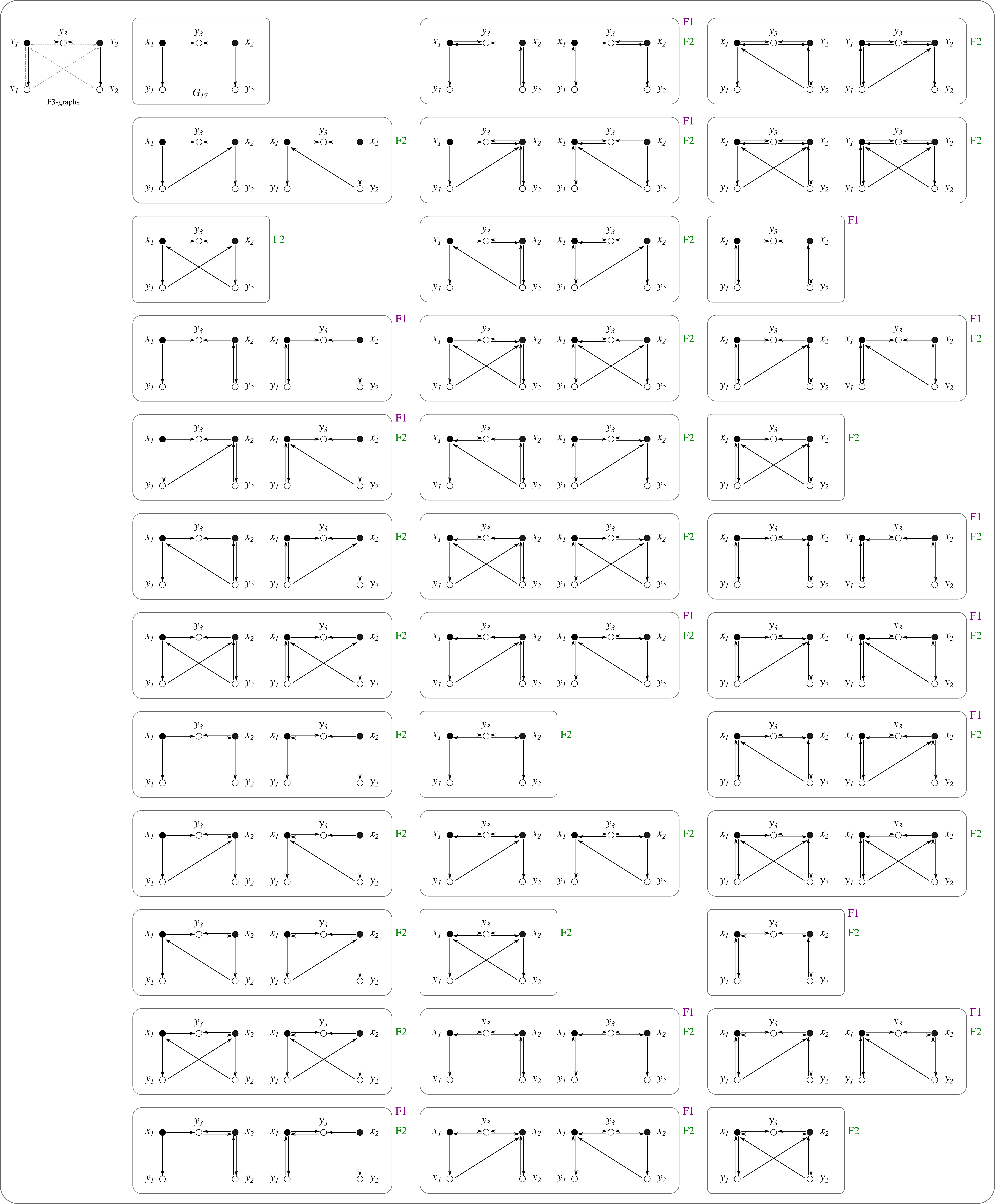}
  \caption[]{All F3-graphs. Isomorphism classes are indicated by the boxes.
    Those graphs that contain at least one F1- or F2-graph as an induced
    subgraph are marked with ``F1'', resp. ``F2''.}
  \label{fig:all-F3}
\end{figure}

\end{document}